\DeclareMathOperator{\e}{e}
\DeclareMathOperator*{\argmin}{arg\,min}
\DeclareMathOperator*{\degr}{deg}
\DeclareMathOperator*{\height}{height}
\DeclareMathOperator*{\DAG}{\mathcal{R}}
\DeclareMathOperator*{\roottree}{root}
\DeclareMathOperator*{\child}{children}
\DeclareMathOperator*{\tree}{\mathcal{R}^{-1}}
\begin{document}

\newcommand{\costflowgraph}{
\begin{tikzpicture}[xscale=0.10,yscale=0.12]
\tikzstyle{fleche}=[->,>=latex,thick,style={sloped,anchor=south,auto=false}]
\tikzstyle{flechegr}=[->,>=latex,very thin,dashed,color=gray]

\tikzstyle{noeud}=[fill=white,draw]
\tikzstyle{noeudpoint}=[fill=black,draw,scale=0.3]
\tikzstyle{noeudrond}=[fill=white,draw,circle]
\tikzstyle{numberr}=[draw=none,fill=none,scale=1]

\tikzstyle{etiquette}=[midway]

\def\shift{15}
\def\grandshift{35}

\node[noeudrond] (source) at (-\grandshift , 0) {\tiny \!\!\!\! source \!\!\!\!};
\node[noeudrond] (puits) at (\grandshift , 0) {\tiny $\,$sink$\,$};

\node[numberr] (tree1) at (-\shift,20) {\footnotesize tree $\tau_1$};
\node[numberr] (tree2) at (\shift,20) {\footnotesize tree $\tau_2$};

\node[noeud] (t1) at (-\shift , 15) {\tiny $\,t_{1~}^1$};
\node[noeud] (tau1) at (\shift , 15) {\tiny $\,t_{1~}^2$};

\node[noeud] (t2) at (-\shift , 8) {{\tiny $\,t_{2~}^1$}};
\node[noeud] (tau2) at (\shift , 8) {\tiny $\,t_{2~}^2$};

\node[noeudpoint] (l1) at (-\shift,4) {};
\node[noeudpoint] (l2) at (-\shift,-1) {};
\node[noeudpoint] (l3) at (-\shift,-6) {};
\node[noeudpoint] (l11) at (\shift,4) {};
\node[noeudpoint] (l12) at (\shift,-1) {};
\node[noeudpoint] (l13) at (\shift,-6) {};

\node[noeud] (tnt) at (-\shift , -10) {\tiny $t_{\eta_1}^1$};
\node[noeud] (tauntau) at (\shift , -10) {\tiny $t_{\eta_2}^2$};

\node[noeud] (e1) at (-\shift , -16) {\tiny $~\,\emptyset\,~$};
\node[noeud] (e2) at (\shift , -16) {\tiny $~\,\emptyset\,~$};

\draw[flechegr] (t1) -- (tau1) {};
\draw[flechegr] (t1) -- (tau2) {};
\draw[flechegr] (t1) -- (tauntau) {};
\draw[flechegr] (t1) -- (e2) {};
\draw[flechegr] (tnt) -- (tau1) {};
\draw[flechegr] (tnt) -- (tau2) {};
\draw[flechegr] (tnt) -- (tauntau) {};
\draw[flechegr] (tnt) -- (e2) {};<
\draw[flechegr] (e1) -- (tau1) {};
\draw[flechegr] (e1) -- (tau2) {};
\draw[flechegr] (e1) -- (tauntau) {};

\draw[fleche] (source) -- node{\tiny$1,0$} (t1) {};
\draw[fleche] (source) -- node[below]{\tiny$1,0$}  (t2) {};
\draw[fleche] (source) -- node{\tiny$1,0$} (tnt) {};
\draw[fleche] (source) -- node[below]{\tiny$\eta_1\!\!-\!\!n,0$}  (e1) {};

\draw[fleche] (t2) -- node{\tiny $1,\delta(t_{2}^1,t_{1}^2)$} (tau1) {};
\draw[fleche] (t2) -- node[below]{\tiny $1,\delta(t_{2}^1,t_{2}^2)$} (tau2) {};
\draw[fleche] (t2) -- node{\tiny $1,\delta(t_{2}^1,t_{\eta_2}^2)$} (tauntau) {};
\draw[fleche] (t2) -- node[below]{\tiny $1,\#t_{2}^1$} (e2) {};

\draw[fleche] (tau1) -- node{\tiny$1,0$} (puits) {};
\draw[fleche] (tau2) -- node[below]{\tiny$1,0$}  (puits) {};
\draw[fleche] (tauntau) -- node{\tiny$1,0$} (puits) {};
\draw[fleche] (e2) -- node[below]{\tiny$\eta_2\!\!-\!\!n,0$}  (puits) {};
\end{tikzpicture}}
\newcommand{\dagTcal}{
\begin{tikzpicture}[xscale=0.9,yscale=0.6]
\tikzstyle{fleche}=[->,>=latex,thin]
\tikzstyle{dfleche}=[<->,>=latex,thin,black] 
\tikzstyle{noeud}=[fill=black,circle,draw,scale=0.3]
\tikzstyle{point}=[fill=black,circle,draw,scale=0.1]
\tikzstyle{etiquette}=[midway,right]
\tikzstyle{eti}=[midway,left]
\tikzstyle{etiii}=[midway,right]
\tikzstyle{nom}=[]

\node[noeud] (A) at (0,0) {};
\node[noeud] (B) at (0,-1) {};
\node[noeud] (C) at (0,-1.8) {};
\node[point] (Z) at (0,-2.4) {};
\node[point] (ZZ) at (0,-3.6) {};
\node[point] (ZZZ) at (0,-3) {};
\node[noeud] (D) at (0,-4.2) {};
\node[noeud] (E) at (0,-5) {};
\node[noeud] (F) at (0,-6) {};

\draw[dfleche] (0.7,-1.5) -- (0.7,-4.5) node[etiii] {\tiny $H\!-\!2$};
\draw[fleche] (A)--(B) node[etiquette] {\footnotesize$d$};
\draw[fleche] (B)--(C) node[etiquette] {\footnotesize$d$};
\draw[fleche] (D)--(E) node[etiquette] {\footnotesize$d$};
\draw[fleche] (E)--(F) node[etiquette] {\footnotesize$\left\lfloor\frac{d}{2}\right\rfloor$};

\def\shift{-4.5}

\tikzstyle{fleche}=[->,>=latex,thin]
\tikzstyle{dfleche}=[<->,>=latex,thin,black]
\tikzstyle{noeud}=[fill=black,circle,draw,scale=0.3]
\tikzstyle{point}=[fill=black,circle,draw,scale=0.1]
\tikzstyle{etiquette}=[midway,right]
\tikzstyle{eti}=[midway,left]
\tikzstyle{nom}=[]

\node[noeud] (A) at (\shift+0,0) {};

\node[noeud] (B) at (\shift-1,-1) {};
\node[noeud] (C) at (\shift-1,-1.8) {};
\node[point] (Z) at (\shift-1,-2.4) {};
\node[point] (ZZ) at (\shift-1,-3.6) {};
\node[point] (ZZZ) at (\shift-1,-3) {};
\node[noeud] (D) at (\shift-1,-4.2) {};
\node[noeud] (E) at (\shift-1,-5) {};

\node[noeud] (BB) at (\shift+1,-1) {};
\node[noeud] (CC) at (\shift+1,-1.8) {};
\node[point] (Z1) at (\shift+1,-2.4) {};
\node[point] (ZZ1) at (\shift+1,-3.6) {};
\node[point] (ZZZ1) at (\shift+1,-3) {};
\node[noeud] (DD) at (\shift+1,-4.2) {};
\node[noeud] (EE) at (\shift+1,-5) {};

\node[noeud] (F) at (\shift+0,-6) {};

\draw[dfleche] (\shift-1.5,-1.5) -- (\shift-1.5,-4.5) node[eti] {\tiny $H\!-\!2$};
\draw[fleche] (A)--(B) node[eti,,above=0.8pt,near end] {\footnotesize$\left\lceil\frac{d}{2}\right\rceil$~~};
\draw[fleche] (B)--(C) node[etiquette] {\footnotesize$d$};
\draw[fleche] (D)--(E) node[etiquette] {\footnotesize$d$};
\draw[fleche] (E)--(F) node[eti,near start,below=0.2pt] {\footnotesize$\left\lfloor\frac{d}{2}\right\rfloor$~~~};

\draw[fleche] (A)--(BB) node[etiquette,near end,above=0.8pt] {\footnotesize~~$\left\lfloor\frac{d}{2}\right\rfloor$};
\draw[fleche] (BB)--(CC) node[etiquette] {\footnotesize$d$};
\draw[fleche] (DD)--(EE) node[etiquette] {\footnotesize$d$};
\draw[fleche] (EE)--(F) node[etiquette,below=0.2pt,near start] {\footnotesize~$d$};

\end{tikzpicture}}
\newcommand{\minidagf}{
\begin{tikzpicture}[xscale=0.25,yscale=0.15]
\tikzstyle{fleche}=[-,>=latex,thin]
\tikzstyle{noeud}=[fill=black,circle,draw,scale=0.15]
\tikzstyle{point}=[fill=black,circle,draw,scale=0.1]

\node[noeud] (R) at (0,1) {};
\node[noeud] (I1) at (-1,0) {};
\node[noeud] (I2) at (-0.5,0) {};
\node[noeud] (I3) at (1,0) {};
\node[noeud] (L) at (0,-1) {};

\node[point] (p1) at (-0.1,0) {};
\node[point] (p2) at (0.25,0) {};
\node[point] (p3) at (0.6,0) {};

\draw[fleche] (R)--(I1) node {};
\draw[fleche] (R)--(I2) node {};
\draw[fleche] (R)--(I3) node {};
\draw[fleche] (L)--(I1) node {};
\draw[fleche] (L)--(I2) node {};
\draw[fleche] (L)--(I3) node {};

\draw[fleche] (0,1) .. controls (-2.1,0) and (-2.1,0) .. (0,-1);

\end{tikzpicture}}

\newcommand{\minidag}{
\begin{tikzpicture}[xscale=0.25,yscale=0.15]
\tikzstyle{fleche}=[-,>=latex,thin]
\tikzstyle{noeud}=[fill=black,circle,draw,scale=0.15]
\tikzstyle{point}=[fill=black,circle,draw,scale=0.1]

\node[noeud] (R) at (0,1) {};
\node[noeud] (I1) at (-1,0) {};
\node[noeud] (I2) at (-0.5,0) {};
\node[noeud] (I3) at (1,0) {};
\node[noeud] (L) at (0,-1) {};

\node[point] (p1) at (-0.1,0) {};
\node[point] (p2) at (0.25,0) {};
\node[point] (p3) at (0.6,0) {};

\draw[fleche] (R)--(I1) node {};
\draw[fleche] (R)--(I2) node {};
\draw[fleche] (R)--(I3) node {};
\draw[fleche] (L)--(I1) node {};
\draw[fleche] (L)--(I2) node {};
\draw[fleche] (L)--(I3) node {};
\end{tikzpicture}}
\newcommand{\algoed}{
\begin{tikzpicture}[xscale=0.3,yscale=0.5]

\tikzstyle{fleche}=[->,>=latex,thick]
\tikzstyle{noeud}=[fill=white,circle,draw,scale=0.5]
\tikzstyle{noeuddd}=[fill=white,circle,draw,scale=0.415]
\tikzstyle{noeudg}=[fill,circle,color=black!15!white,draw,scale=0.5]
\tikzstyle{flecheg}=[->,>=latex,thick,color=black!15!white,dash pattern=on 0.5mm off 0.5mm]
\tikzstyle{noeudgg}=[fill,circle,color=black!45!white,draw,scale=0.5]
\tikzstyle{flechegg}=[->,>=latex,thick,color=black!55!white,dash pattern=on 0.5mm off 0.5mm]

\tikzstyle{flecherr}=[->,>=latex,thick,color=red]
\tikzstyle{flechebl}=[->,>=latex,thick,color=blue]

\tikzstyle{noeudb}=[fill,circle,color=white,draw,scale=0.5]
\tikzstyle{flecheb}=[->,>=latex,thick,color=white]
\tikzstyle{number}=[draw=none,fill=none,scale=1.2]
\tikzstyle{numberr}=[draw=none,fill=none,scale=1]
\tikzstyle{etiquette}=[midway]

\def\DistanceInterNiveaux{1}
\def\DistanceInterFeuilles{1}

\def\NiveauA{(-0)*\DistanceInterNiveaux}
\def\NiveauB{(-1.1)*\DistanceInterNiveaux}
\def\NiveauC{(-2.2)*\DistanceInterNiveaux}
\def\InterFeuilles{(1)*\DistanceInterFeuilles}

\node[numberr] (A0) at (0,1) {\footnotesize tree $\tau_1$};
\node[noeud] (A) at ({(0)*\InterFeuilles},{\NiveauA}) {};
\node[noeud] (B1) at ({(-3)*\InterFeuilles},{\NiveauB}) {};
\node[noeud] (B2) at ({(0)*\InterFeuilles},{\NiveauB}) {};
\node[noeud] (B3) at ({(3)*\InterFeuilles},{\NiveauB}) {};

\draw[fleche] (A)--(B1) node[etiquette] {};
\draw[fleche] (A)--(B2) node[etiquette] {};
\draw[fleche] (A)--(B3) node[etiquette] {};

\draw (B1) -- (-4,-4);
\draw (B1) -- (-2,-4);
\draw (-4,-4) -- (-2,-4);

\draw (B2) -- (-1,-3);
\draw (B2) -- (1,-3);
\draw (1,-3) -- (-1,-3);

\draw (B3) -- (2,-3.5);
\draw (B3) -- (4,-3.5);
\draw (4,-3.5) -- (2,-3.5);

\def\shift{15}

\node[numberr] (A0) at (\shift+0,1) {\footnotesize tree $\tau_2$};
\node[noeud] (AA) at ({(\shift+0)*\InterFeuilles},{\NiveauA}) {};
\node[noeud] (BB1) at ({(\shift-3)*\InterFeuilles},{\NiveauB}) {};
\node[noeud] (BB2) at ({(\shift-1)*\InterFeuilles},{\NiveauB}) {};
\node[noeud] (BB3) at ({(\shift+1)*\InterFeuilles},{\NiveauB}) {};
\node[noeud] (BB4) at ({(\shift+3)*\InterFeuilles},{\NiveauB}) {};

\draw[fleche] (AA)--(BB1) node[etiquette] {};
\draw[fleche] (AA)--(BB2) node[etiquette] {};
\draw[fleche] (AA)--(BB3) node[etiquette] {};
\draw[fleche] (AA)--(BB4) node[etiquette] {};

\draw (BB1) -- (\shift-3.5,-2.5);
\draw (BB1) -- (\shift-2.5,-2.5);
\draw (\shift-3.5,-2.5) -- (\shift-2.5,-2.5);

\draw (BB2) -- (\shift-2,-4);
\draw (BB2) -- (\shift-0,-4);
\draw (\shift-2,-4) -- (\shift-0,-4);

\draw (BB3) -- (\shift+0.2,-3.5);
\draw (BB3) -- (\shift+1.8,-3.5);
\draw (\shift+0.2,-3.5) -- (\shift+1.8,-3.5);

\draw (BB4) -- (\shift+2.5,-2);
\draw (BB4) -- (\shift+3.5,-2);
\draw (\shift+2.5,-2) -- (\shift+3.5,-2);

\node[number] (ES) at (7.5,-4) {$\emptyset$};

\draw[flecherr] (B3) edge[out=20,in=160] (BB3) node[etiquette] {};
\draw[flecherr] (B2) edge[out=-20,in=-160] (BB1) node[etiquette] {};
\draw[flecherr] (B1) edge[out=-20,in=-150] (BB2) node[etiquette] {};

\draw[flecherr] (ES) edge[out=0,in=-130] (BB4) node[etiquette] {};

\end{tikzpicture}}
\input{tikz_exdag}
\newcommand{\wcdeuxtrois}{
\begin{tikzpicture}[xscale=0.3,yscale=0.3]
\tikzstyle{fleche}=[-,>=latex]
\tikzstyle{noeud}=[draw,circle,fill=black,scale=0.2]
\node[noeud] (N) at ({0},{0}) {};

\node[noeud] (N00) at ({-2.0},{-2.0}) {};
\node[noeud] (N01) at ({0.0},{-2.0}) {};
\node[noeud] (N02) at ({2.0},{-2.0}) {};

\draw[fleche] (N)--(N00) {};
\draw[fleche] (N)--(N01) {};
\draw[fleche] (N)--(N02) {};

\node[noeud] (N000) at ({-2.0},{-3.7}) {};
\node[noeud] (N010) at ({0.0},{-3.7}) {};
\draw[fleche] (N00)--(N000) {};
\draw[fleche] (N01)--(N010) {};

\node[noeud] (N021) at ({1.0},{-3.7}) {};
\node[noeud] (N020) at ({2.0},{-3.7}) {};
\node[noeud] (N022) at ({3.0},{-3.7}) {};

\draw[fleche] (N02)--(N020) {};
\draw[fleche] (N02)--(N021) {};
\draw[fleche] (N02)--(N022) {};

\end{tikzpicture}
}

\newcommand{\wcdeuxtroisSN}{
\begin{tikzpicture}[xscale=0.3,yscale=0.3]
\tikzstyle{fleche}=[-,>=latex]
\tikzstyle{noeud}=[draw,circle,fill=black,scale=0.2]
\node[noeud] (N) at ({0},{0}) {};

\node[noeud] (N00) at ({-2.0},{-2.0}) {};
\node[noeud] (N01) at ({0.0},{-2.0}) {};
\node[noeud] (N02) at ({2.0},{-2.0}) {};

\draw[fleche] (N)--(N00) {};
\draw[fleche] (N)--(N01) {};
\draw[fleche] (N)--(N02) {};

\node[noeud] (N000) at ({-2.0},{-3.7}) {};
\node[noeud] (N010) at ({0.0},{-3.7}) {};
\draw[fleche] (N00)--(N000) {};
\draw[fleche] (N01)--(N010) {};

\node[noeud] (N020) at ({2.0},{-3.7}) {};
\draw[fleche] (N02)--(N020) {};
\end{tikzpicture}
}

\newcommand{\wcdeuxquatre}{
\begin{tikzpicture}[xscale=0.3,yscale=0.3]
\tikzstyle{fleche}=[-,>=latex]
\tikzstyle{noeud}=[draw,circle,fill=black,scale=0.2]
\node[noeud] (N) at ({0},{0}) {};

\node[noeud] (N00) at ({-3.0},{-2.0}) {};
\node[noeud] (N01) at ({-1.0},{-2.0}) {};
\node[noeud] (N02) at ({1.0},{-2.0}) {};
\node[noeud] (N03) at ({3.0},{-2.0}) {};

\draw[fleche] (N)--(N00) {};
\draw[fleche] (N)--(N01) {};
\draw[fleche] (N)--(N02) {};
\draw[fleche] (N)--(N03) {};

\node[noeud] (N000) at ({-3.5},{-3.7}) {};
\node[noeud] (N001) at ({-2.5},{-3.7}) {};
\draw[fleche] (N00)--(N000) {};
\draw[fleche] (N00)--(N001) {};

\node[noeud] (N010) at ({-1.5},{-3.7}) {};
\node[noeud] (N011) at ({-0.5},{-3.7}) {};
\draw[fleche] (N01)--(N010) {};
\draw[fleche] (N01)--(N011) {};

\node[noeud] (N020) at ({0.4},{-3.7}) {};
\node[noeud] (N021) at ({1.6},{-3.7}) {};
\node[noeud] (N022) at ({0.8},{-3.7}) {};
\node[noeud] (N023) at ({1.2},{-3.7}) {};
\draw[fleche] (N02)--(N020) {};
\draw[fleche] (N02)--(N021) {};
\draw[fleche] (N02)--(N022) {};
\draw[fleche] (N02)--(N023) {};

\node[noeud] (N030) at ({2.4},{-3.7}) {};
\node[noeud] (N031) at ({3.6},{-3.7}) {};
\node[noeud] (N032) at ({2.8},{-3.7}) {};
\node[noeud] (N033) at ({3.2},{-3.7}) {};
\draw[fleche] (N03)--(N030) {};
\draw[fleche] (N03)--(N031) {};
\draw[fleche] (N03)--(N032) {};
\draw[fleche] (N03)--(N033) {};

\end{tikzpicture}
}

\newcommand{\wcdeuxquatreSN}{
\begin{tikzpicture}[xscale=0.3,yscale=0.3]
\tikzstyle{fleche}=[-,>=latex]
\tikzstyle{noeud}=[draw,circle,fill=black,scale=0.2]
\node[noeud] (N) at ({0},{0}) {};

\node[noeud] (N00) at ({-3.0},{-2.0}) {};
\node[noeud] (N01) at ({-1.0},{-2.0}) {};
\node[noeud] (N02) at ({1.0},{-2.0}) {};
\node[noeud] (N03) at ({3.0},{-2.0}) {};

\draw[fleche] (N)--(N00) {};
\draw[fleche] (N)--(N01) {};
\draw[fleche] (N)--(N02) {};
\draw[fleche] (N)--(N03) {};

\node[noeud] (N000) at ({-3.5},{-3.7}) {};
\node[noeud] (N001) at ({-2.5},{-3.7}) {};
\draw[fleche] (N00)--(N000) {};
\draw[fleche] (N00)--(N001) {};

\node[noeud] (N010) at ({-1.5},{-3.7}) {};
\node[noeud] (N011) at ({-0.5},{-3.7}) {};
\draw[fleche] (N01)--(N010) {};
\draw[fleche] (N01)--(N011) {};

\node[noeud] (N020) at ({0.5},{-3.7}) {};
\node[noeud] (N021) at ({1.5},{-3.7}) {};
\draw[fleche] (N02)--(N020) {};
\draw[fleche] (N02)--(N021) {};

\node[noeud] (N030) at ({2.5},{-3.7}) {};
\node[noeud] (N031) at ({3.5},{-3.7}) {};
\draw[fleche] (N03)--(N030) {};
\draw[fleche] (N03)--(N031) {};

\end{tikzpicture}
}

\newtheorem{proposition}{Proposition}
\newtheorem{lemma}{Lemma}

\title{Approximation of trees by self-nested trees}

\author{Romain Aza\"{\i}s\thanks{Laboratoire Reproduction et D\'eveloppement des Plantes, Univ Lyon, ENS de Lyon, UCB Lyon 1, CNRS, INRA, Inria, F-69342, Lyon, France.} \\
\and
Jean-Baptiste Durand\thanks{Laboratoire Jean Kuntzmann, MISTIS, INRIA Grenoble -- Rh\^one-Alpes, Saint Ismier, France.}
\and
Christophe Godin\footnotemark[1]}

\date{}

\maketitle

\begin{abstract} \small\baselineskip=9pt The class of self-nested trees presents remarkable compression properties because of the systematic repetition of subtrees in their structure. In this paper, we provide a better combinatorial characterization of this specific family of trees. In particular, we show from both theoretical and practical viewpoints that complex queries can be quickly answered in self-nested trees compared to general trees.
We also present an approximation algorithm of a tree by a self-nested one that can be used in fast prediction of edit distance between two trees.\end{abstract}

\section{Introduction}
\label{s:intro}

Trees form an expanded family of combinatorial objects that offers a wide range of application fields, from plant modeling to XML files analysis through study of RNA secondary structure. Complex queries on tree structures (e.g., computation of edit distance, finding common substructures, compression) are required to handle these models. A critical question is to control the complexity of the algorithms implemented to solve these queries. One way to address this issue is to approximate the original trees by simplified structures that achieve good algorithmic properties. One can expect good algorithmic properties from structures that present a high level of redundancy in their substructures. Indeed, one can take account these repetitions to avoid redundant computations on the whole structure.

Searching redundancies in the tree often allows to design efficient compression methods. As it is explained in \cite{Bille2015166}, one often considers the following two types of repeated substructures: subtree repeat (used in DAG compression \cite{BM,Buneman:2003:PQC:1315451.1315465,frick2003query,GF2010}) and tree pattern repeat (exploited in tree grammars \cite{Busatto:2008:EMR:1370308.1370503,Lohrey:2006:CTA:1217607.1217615,5749493} and top tree compression \cite{Bille2015166}). A survey on this topic may be found in \cite{Sakr2009303} in the context of XML files. In this paper, we restrict ourselves to DAG compression, which consists in building a Directed Acyclic Graph (DAG) that represents a tree without displaying the redundancy of its identical subtrees. Previous algorithms have been proposed to allow the computation of the DAG of an ordered tree with complexities ranging in $O(n^2)$ to $O(n)$ \cite{Downey:1980:VCS:322217.322228}, where $n$ is the number of vertices of the tree. In the case of unordered trees, two different algorithms exist \cite[2.2 Computing Tree Reduction]{GF2010}, that share the same time-complexity in $O(n^2\times d\times\log(d))$, where $n$ is the number of vertices of the tree and $d$ denotes its outdegree. From now on, we limit ourselves to unordered trees.

Trees that are the most compressed by DAG compression scheme present the highest level of redundancy in their subtrees: all the subtrees of a given height must be isomorphic. In this case, regardless of the number of vertices, the DAG related to a tree $\tau$ has exactly $H+1$ vertices, where $H$ denotes the height of $\tau$, which is the minimal number of vertices that may be reached. This family of trees has been introduced in \cite{greenlaw96} under the name of nested trees as an interesting class of trees for which the subtree isomorphism problem is in $\text{NC}^2$. Later, they have been called self-nested trees \cite[Definition 7]{GF2010} to insist on their recursive structure and their proximity to the notion of self-similarity.

In this article, we prove the algorithmic efficiency of self-nested trees through different questions (compression, evaluation of recursive functions, evaluation of edit distance) and study their combinatorics. In particular, we establish that self-nested trees are roughly exponentially less frequent than general trees. This combinatorics can be an asset in exhaustive search pro\-blems. Nevertheless, this result also says that one can not always take advantage of the remarkable algorithmic properties of self-nested trees when working with ge\-ne\-ral trees. Consequently, our aim is to investigate how general unordered trees can be approximated by simplified trees in the class of self-nested trees from both theoretical and numerical perspectives. Our objective is to take advantage of the aforementioned qualities of self-nested trees even for a tree that is not self-nested. In particular, we show that our approximation algorithm can be used to very rapidly predict the edit distance between two trees, which is a usual but costly operation for comparing tree data in computational biology for instance \cite{doi:10.1093/bioinformatics/6.4.309}.

The paper is organized as follows. Section \ref{s:prelim} is devoted to the presentation of the concepts of interest in this paper, namely unordered trees, self-nested trees and tree reduction. Algorithmic efficiency of self-nested trees is investigated in Section \ref{s:snt}. Combinatorial properties of self-nested trees are presented in Section \ref{section:properties}. Our approximation algorithm is developed in Section \ref{s:approx}. Section \ref{s:fastpred} is dedicated to an application to fast prediction of edit distance. All the proofs have been deferred to the supplementary file.

\paragraph{Note on numerical results} We ran all the si\-mu\-lations of the paper in \verb+Python3+ on a Macbook Pro laptop running OSX High Sierra, with 2.9 GHz Intel Core i7 processors and 16 GB of RAM.


\section{Preliminaries}
\label{s:prelim}

This section is devoted to the precise formulation of the structures of interest in this paper, among which the class of unordered rooted trees $\mathbb{T}$, the set of self-nested trees $\mathbb{T}^{sn}$ and the concept of tree reduction.

\subsection{Unordered rooted trees}

A rooted tree $\tau=(V,E)$ is a connected digraph containing no cycle and such that there exists a unique vertex called $\roottree(\tau)$ which has no parent. Any vertex different from the root has exactly one parent. For any vertex $v$ of $\tau$, $\child(v)$ denotes the set of vertices that have $v$ as parent. The leaves of $\tau$ are all the vertices without children. The height of a vertex $v$ may be recursively defined as $\height(v)=0$ if $v$ is a leaf of $\tau$ and
$$\height(v)=1+\max_{w\in\child(v)}\height(w)$$
otherwise. The height of the tree $\tau$ is defined as the height of its root, $\height(\tau)=\height(\roottree(\tau))$. The outdegree $\deg(\tau)$ of $\tau$ is the maximal branching factor that can be found in $\tau$, that is
$$\deg(\tau)=\max_{v\in\tau}\#\child(v),$$
where $\#A$ denotes the cardinality of the set $A$. With a slight abuse of notation, $\#\tau$ denotes the number of vertices of $\tau$. A subtree $\tau[v]=(V[v],E[v])$ rooted in $v$ is the connected subgraph of $\tau$ such that $V[v]$ is the set of the descendants of $v$ in $\tau$ and $E[v]$ is defined as
$$E[v]=\left\{(\xi,\xi')\in E~:~\xi\in V[v],\,\xi'\in V[v]\right\}.$$

In the sequel, we consider unordered rooted trees for which the order among the sibling vertices of any vertex is not significant. A precise characterization is obtained from the additional definition of isomorphic trees. Let $\tau_1=(V_1,E_1)$ and $\tau_2=(V_2,E_2)$ be two rooted trees. A one-to-one correspondence $\varphi:V_1\to V_2$ is called a tree isomorphism if, for any edge $(v,w)\in E_1$, $(\varphi(v),\varphi(w))\in E_2$. Structures $\tau_1$ and $\tau_2$ are called isomorphic trees whenever there exists a tree isomorphism between them. One can determine if two $n$-vertex trees are isomorphic in $O(n)$ \cite[Example 3.2 and Theorem 3.3]{Aho:1974:DAC:578775}. The existence of a tree isomorphism defines an equivalence relation $\equiv$ on the set of rooted trees. The class of unordered rooted trees is the set of equivalence classes for this relation, i.e., the quotient set of rooted trees by the existence of a tree isomorphism. We refer the reader to \cite[I.5.2. Non-plane trees]{FS2009} for more details on this combinatorial class. From now on, all the trees are unordered rooted trees.

\paragraph{Simulation of random trees} We describe here the algorithm that we used in this paper to generate random trees of given size. From a tree with $n-1$ vertices, we construct a tree of size $n$ by adding a child to a randomly chosen vertex (uniform distribution). We point out that the position of the new child is not significant since trees are unordered. Starting from the tree composed of a unique vertex, we repeat the procedure $n-1$ times to obtain a random tree of size $n$.

\subsection{Tree reduction}
\label{ss:treereduction}

Let us now consider the equivalence relation $\equiv$ on the set of the subtrees of a tree $\tau=(V,E)$. We consider the quotient graph $Q=(V_\equiv,E_\equiv)$ obtained from $\tau$ using this equivalence relation. $V_\equiv$ is the set of equivalence classes on the subtrees of $\tau$, while $E_\equiv$ is a set of pairs of equivalence classes $(C_1,C_2)$ such that the root of $C_2$ is a child of the root of $C_1$ (modulo isomorphism). In light of \cite[Proposition 1]{GF2010}, the graph $Q$ is a Directed Acyclic Graph (DAG), that is a connected digraph without path from any vertex $x$ to itself. Let $(C_1,C_2)$ be an edge of the DAG $Q$. We define $N(C_1,C_2)$ as the number of occurrences of a tree of $C_2$ as child of $\roottree(C_1)$. The tree reduction $\DAG(\tau)$ is defined as the quotient graph $Q$ augmented with labels $N(C_1,C_2)$ on its edges (see \cite[Definition 3 (Reduction of a tree)]{GF2010} for more details). Intuitively, the labeled graph $\DAG(\tau)$ represents the original tree $\tau$ without its structural redundancies. Illustrations are presented in Figures~\ref{fig:1} and \ref{fig:1bis}. It should be noticed that a tree can be exactly reconstructed from its DAG reduction \cite[Proposition 4]{GF2010}, i.e., the application $\DAG$ is a one-to-one correspondence from unordered trees into DAG reductions space, which inverse is denoted $\tree$.

\begin{figure}[h]
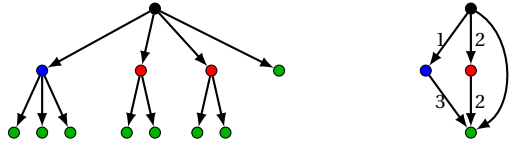

\centering
\treeTA
\caption{An unordered tree $\tau$ (left) and its reduction $\DAG(\tau)$ (right). In the tree, roots of isomorphic subtrees are colored identically. In the quotient graph, vertices are equivalence classes colored according to the class of isomorphic subtrees of $\tau$ that they represent.}
\label{fig:1}
\end{figure}

\begin{figure}[h]
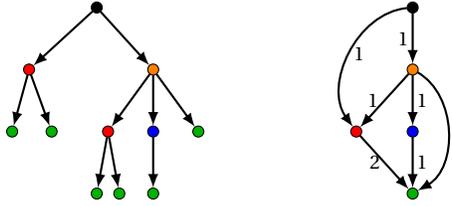

\centering
\treeTC
\caption{Another unordered tree $\tau$ (left) and its reduction $\DAG(\tau)$ (right).}
\label{fig:1bis}
\end{figure}

\paragraph{Notation of DAGs} The vertices of the quotient graph can be sorted by height, i.e., numbered as,
$$V_\equiv = \{ (h,i) ~:~0\leq h\leq \height(\tau),~1\leq i\leq M_h\},$$
where $M_h$ denotes the number of vertices of height $h$ appearing in $Q$. In other words, $(h,i)$ is the vertex representing the $i^\text{th}$ equivalence class of height $h$ appearing in $\tau$. We highlight that the order chosen to number the equivalence classes of a same height is not significant. The structure of $\DAG(\tau)$ is thus fully characterized by the array
$$\left[ N( (h_1,i) , (h_2,j)) \right]_{h_2 , h_1, i , j} ,$$
with $0\leq h_2<h_1\leq\height(\tau)$, $1\leq i\leq M_{h_1}$ and $1\leq j\leq M_{h_2}$, and where
$$N((h_1,i) , (h_2,j))=0\quad\Leftrightarrow\quad( (h_1,i) , (h_2,j))\notin E_\equiv .$$
If $(h,i)\in V_\equiv$ and $D=\DAG(\tau)$, $D[(h,i)]$ denotes the sub-DAG rooted at $(h,i)$. By construction of $D$, $D[(h,i)]$ is the reduction of a unique (up to an isomorphism) subtree of height $h$ appearing in $\tau$, which is denoted by $\tree(D[(h,i)])$.

\subsection{Self-nested trees}
\label{ss:sntrees}

A tree $\tau$ is called self-nested \cite[III. Self-nested trees]{GF2010} if, for any pair of vertices $v$ and $w$ such that $\height(\tau[v])=\height(\tau[w])$, $\tau[v]$ and $\tau[w]$ are isomorphic. It should be noted that this characterization is equivalent to the following statement: for any pair of vertices $v$ and $w$ such that $\height(\tau[v])\leq\height(\tau[w])$, $\tau[v]$ is (isomorphic to) a subtree of $\tau[w]$. By definition, self-nested trees achieve the maximal presence of redundancies in their structure. Self-nested trees are tightly connected with linear DAGs, i.e., DAGs containing at least one path that goes through all their vertices.

\begin{proposition}[Godin and Ferraro \cite{GF2010}]
\label{prop:equivalence}
A tree $\tau$ is self-nested if and only if its reduction $\DAG(\tau)$ is linear.
\end{proposition}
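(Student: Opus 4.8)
The plan is to prove the equivalence by unpacking both notions in terms of the partial order on equivalence classes of subtrees induced by the ``is-a-subtree-of'' relation, and then showing that self-nestedness forces this order to be total, which is exactly linearity of the DAG.

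First I would fix notation: let $\tau$ be a tree and let $\mathcal{C}_0,\dots,\mathcal{C}_k$ be the equivalence classes of subtrees of $\tau$, which are the vertices of $\DAG(\tau)$. For two classes $\mathcal{C}$ and $\mathcal{C}'$ write $\mathcal{C}'\preceq\mathcal{C}$ if a representative of $\mathcal{C}'$ is isomorphic to a subtree of a representative of $\mathcal{C}$; by \cite[Proposition 1]{GF2010} this is well defined, and since a subtree of a subtree is a subtree, $\preceq$ is reflexive and transitive. It is also antisymmetric because $\mathcal{C}'\preceq\mathcal{C}$ and $\mathcal{C}\preceq\mathcal{C}'$ force the representatives to have equal height and hence (being subtrees of one another of the same size) to be isomorphic. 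Crucially, there is a directed path in $\DAG(\tau)$ from the vertex of $\mathcal{C}$ to the vertex of $\mathcal{C}'$ if and only if $\mathcal{C}'\preceq\mathcal{C}$ and $\mathcal{C}\neq\mathcal{C}'$: one direction follows because each edge of the DAG goes from a class to the class of one of its root's children, hence to a strictly smaller subtree; the other direction, that $\mathcal{C}'\preceq\mathcal{C}$ implies reachability, I would prove by induction on $\height(\mathcal{C})-\height(\mathcal{C}')$, descending through a child of the root of $\mathcal{C}$ that contains $\mathcal{C}'$ as a subtree.

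Next I would establish the two implications. Assume $\tau$ is self-nested. Take any two DAG vertices, i.e.\ classes $\mathcal{C},\mathcal{C}'$, say with $\height(\mathcal{C}')\le\height(\mathcal{C})$. The remark following the definition of self-nestedness (equivalently, its second formulation in Section~\ref{ss:sntrees}) gives that the representative of $\mathcal{C}'$ is a subtree of the representative of $\mathcal{C}$, i.e.\ $\mathcal{C}'\preceq\mathcal{C}$; by the previous paragraph there is a directed path between them. Since any two vertices are comparable this way, ordering the vertices by height (breaking no ties since self-nestedness forces $M_h\le 1$ for every $h$) yields a Hamiltonian path, so $\DAG(\tau)$ is linear. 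Conversely, assume $\DAG(\tau)$ is linear, with a path $\mathcal{D}_0\to\mathcal{D}_1\to\dots\to\mathcal{D}_m$ visiting every vertex. Each edge strictly decreases height, so the heights along the path are strictly decreasing; hence $m=\height(\tau)$ and there is exactly one class of each height $0,1,\dots,\height(\tau)$. Now if $v,w$ are vertices of $\tau$ with $\height(\tau[v])=\height(\tau[w])$, then $\tau[v]$ and $\tau[w]$ lie in the unique class of that height, so they are isomorphic; thus $\tau$ is self-nested.

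I do not expect a serious obstacle here; the argument is essentially a dictionary between the combinatorial data of $\DAG(\tau)$ and the subtree partial order. The one point demanding care — and the place I would slow down — is the claim that $\mathcal{C}'\preceq\mathcal{C}$ (with $\mathcal{C}\neq\mathcal{C}'$) implies a directed path from $\mathcal{C}$ to $\mathcal{C}'$ in the DAG, rather than merely the weaker statement that some class containing $\mathcal{C}'$ is below $\mathcal{C}$. The induction step needs that if $\tau[w]$ is a proper subtree of $\tau[v]$ then $\tau[w]$ is a subtree of $\tau[c]$ for some child $c$ of $v$, which is immediate, together with the fact that the class of $\tau[c]$ is a DAG-successor of the class of $\tau[v]$ by construction of the reduction. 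Closing this chain down to $\mathcal{C}'$ itself then gives the path. Everything else is bookkeeping, and Proposition~\ref{prop:equivalence} follows.
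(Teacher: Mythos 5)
Your proof is correct, but note that the paper itself offers no proof of this statement: Proposition~\ref{prop:equivalence} is imported verbatim from Godin and Ferraro \cite{GF2010} (the supplementary file only proves the \emph{original} results), so there is no in-paper argument to compare yours against. On its own merits, your dictionary between the subtree partial order $\preceq$ and reachability in $\DAG(\tau)$ is sound, and the induction descending through a child of the root is exactly the right engine for the direction ``$\mathcal{C}'\preceq\mathcal{C}$ implies a directed path.'' Two small points deserve a word. First, in the forward implication you pass from ``any two classes are joined by a directed path'' to ``ordering by height yields a Hamiltonian path''; since linearity requires consecutive vertices to be joined by \emph{edges}, you should add that a directed path between the (unique) classes of heights $h$ and $h-1$ must have length one, because every edge of the reduction strictly decreases height — with that observation the chain of edges $H\to H-1\to\dots\to 0$ is immediate. (Alternatively, you can get the edge directly: the root of the height-$h$ representative has a child of height exactly $h-1$, whose class is the unique height-$(h-1)$ class.) Second, you invoke the paper's unproved remark that self-nestedness is equivalent to the embedding formulation; you only need the easy direction of that remark, and your own child-descent induction proves it in two lines, so it would be cleaner to make the argument self-contained rather than lean on the remark. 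Neither point is a genuine gap; the converse direction (linear implies one class per height, hence self-nested) is complete as written.
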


We point out that a linear DAG reduction means that the compression is optimal, at least in terms of number of vertices. Indeed, the number of vertices of $\DAG(\tau)$ is always greater than $\height(\tau)+1$ by construction: there is at least one tree of height $h$ appearing in $\tau$ for $0\leq h\leq\height(\tau)$. In addition, the inequality is saturated if and only if the DAG is linear. Proposition~\ref{prop:equivalence} evidences that DAG compression is optimal for self-nested trees. This is because (i) DAG compression removes repetitions of subtrees and (ii) self-nested trees present systematic redundancies in their subtrees. Two examples are displayed in Figures~\ref{fig:2} and \ref{fig:2bis} for illustration purposes.

\begin{figure}[h]
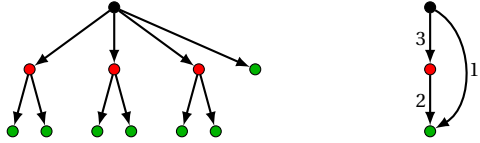

\centering
\treeTB
\caption{A self-nested tree $\tau$ (left) and its linear reduction $\mathcal{R}(\tau)$ (right). In the tree, all the subtrees of the same height are isomorphic and their roots are colored identically. The quotient graph is a linear DAG in which each vertex represents all the subtrees with the same height.}
\label{fig:2}
\end{figure}

\begin{figure}[h]
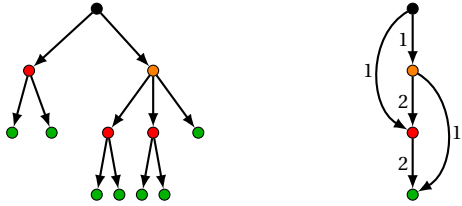

\centering
\treeTCsn
\caption{Another self-nested tree $\tau$ (left) and its linear reduction $\DAG(\tau)$ (right).}
\label{fig:2bis}
\end{figure}

\paragraph{Notation of linear DAGs} The vertices of a linear DAG $D=(V_\equiv,E_\equiv,N)$ of height $H$ can be sorted by height and thus numbered as $V_\equiv = \{0,\dots, H\}$, i.e., $h$ denotes the index of the unique equivalence class of height $h$ appearing in the associated self-nested tree. The structure of $D$ is fully characterized by the array
$$[ N(h_1,h_2) ]_{0\leq h_2<h_1\leq H},$$
where $N(h_1,h_2)=0$ if and only if $(h_1,h_2)\notin E_\equiv$. One can also notice that $N(h_1,h_1-1)\geq1$.

\paragraph{Simulation of random self-nested trees} As stated in Proposition~\ref{prop:equivalence}, there is a one-to-one corres\-pon\-dence between self-nested trees and linear DAGs. To simulate a random self-nested tree, we generate a random linear DAG as follows. Given height $H$ and maximal outdegree $d$, all the coefficients $N(h_1,h_2)$, $0\leq h_2<h_1\leq H$, are chosen under the uniform distribution with constraints
$$\sum_{h_2=0}^{h_1-1} N(h_1,h_2) \leq d $$
and $N(h_1,h_1-1)\geq1$, by rejection sampling. In this paper, we aim to compare the algorithmic efficiency of self-nested trees with respect to general trees. We have generated the datasets used in the numerical experiments as follows: first, generate a random general tree of given size, and then generate a random self-nested tree with the same height and outdegree.


\section{Efficiency of self-nested trees}
\label{s:snt}

As aforementioned in Subsection~\ref{ss:sntrees}, self-nested trees present the highest level of redundancy in their subtrees. Thus, one can expect good algorithmic properties from them. In this section we prove their computational efficiency through three questions: compression, evaluation of bottom-up functions, evaluation of edit distance.

\subsection{Compression rates}

Proposition \ref{prop:equivalence} proves that self-nested trees achieve optimal compression rates among trees of the same height whatever their number of vertices. However, this statement does not take into account the number of edges, neither the presence of labels on the edges of the DAG reduction. We have estimated the average disk size being occupied by a tree and its DAG reduction from the simulation of $40\,000$ random trees (a half being self-nested). The data have been stored by using the \verb+pickle+ module in \verb+Python3+. The results are presented in Figure~\ref{fig:space} and Table~\ref{tab:space}. The simulations show that the compression rate (defined as the ratio of the compressed size over the uncompressed size) is around $10\%$ for a random tree regardless of its size, while it is approximately $0.3\%$ for a self-nested tree.

\begin{figure}[h]
\centering
\includegraphics[width=8.5cm]{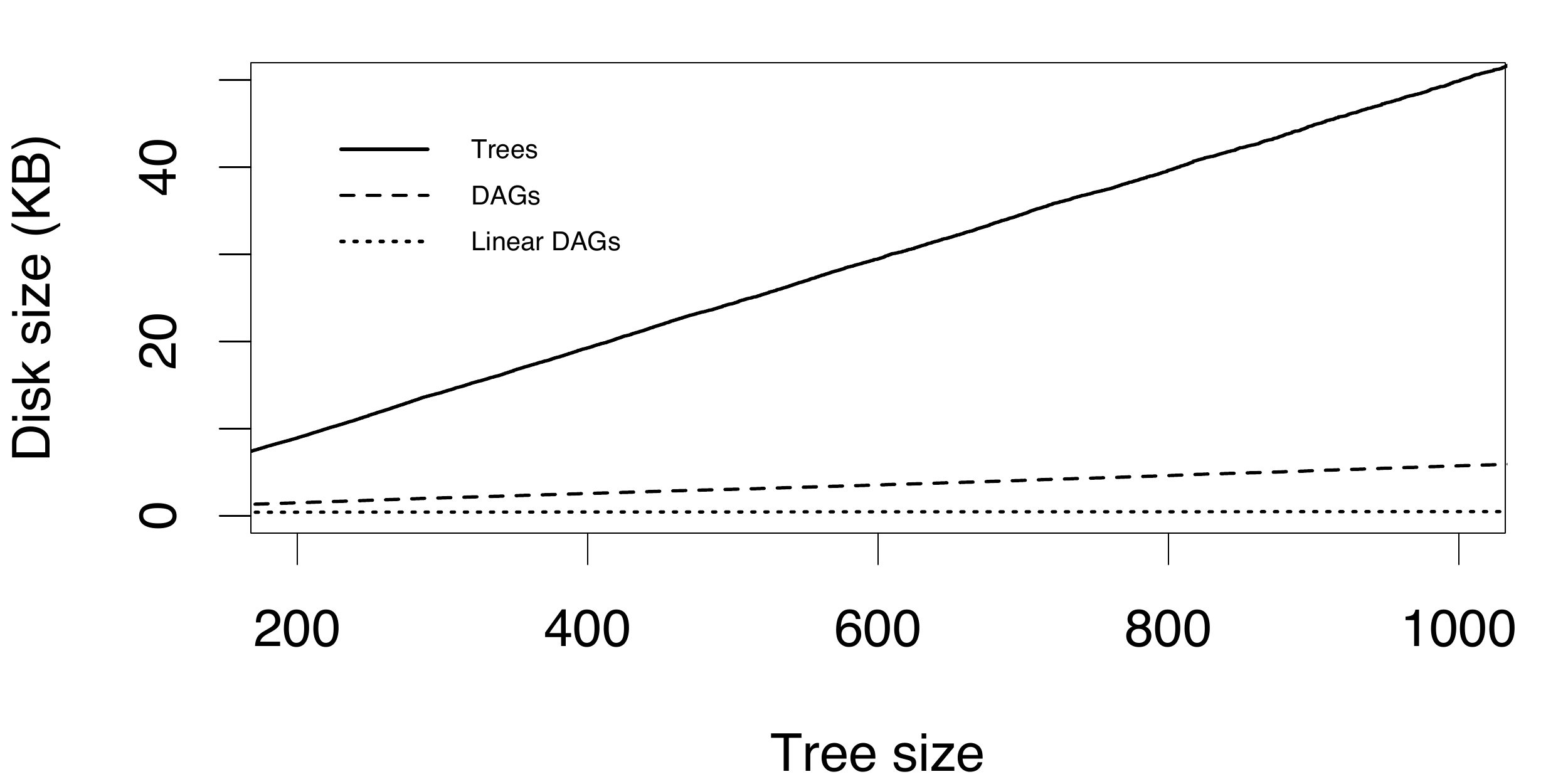}
\caption{Estimation of the disk space occupied by a tree and its DAG reduction for trees and self-nested trees.}
\label{fig:space}
\end{figure}

\begin{table}[h]
\centering
\begin{tabular}{lc}
&Disk size (KB)\\ \hline
Trees& $5.044\times 10^{-2}\,\#\tau$\\
DAGs& $5.433\times 10^{-3}\,\#\tau$ \\
Linear DAGs& $1.913\times10^{-4}\,\#\tau$\\ \hline
\end{tabular}
\caption{Estimation of the slope of the three curves of Figure~\ref{fig:space}. For example, the disk size being occupied by a tree of size $100$ is $5.044$ KB on average, while it is approximately $0.02$ KB for the DAG reduction of a self-nested tree of the same size.}
\label{tab:space}
\end{table}

\subsection{Bottom-up recursive functions}

Given a tree $\tau=(V,E)$, we consider bottom-up recursive functions $f : V \to\mathcal{X}$, defined by
$$
f(v) =
\left\{
\begin{array}{cl}
\Phi\big(\, (f(c))_{c\in\child(v)}\,\big) &\text{if}~\child(v)\neq\emptyset \\
f_0 &\text{else,}
\end{array}
\right.
$$
where $f_0$ is the value of $f$ on the leaves of $\tau$ and
$$\Phi:\bigcup_{n\geq1}^{}\mathcal{X}^n \to \mathcal{X} $$
is invariant under permutation of arguments. The value of $f(\tau)$ is defined as $f(\tau) = f(\roottree(\tau))$. Bottom-up recursive functions play a central role in the study of trees. For instance, the number of vertices, the number of leaves, the height and the Strahler number (that measures the branching complexity) are useful bottom-up recursive functions.

\begin{proposition}
\label{prop:bu:compl}
$f(\tau)$ can be computed in:
\begin{itemize}
\item $O(\#\tau)$-time from the tree $\tau$;
\item $O(\#D \degr(\tau))$-time from the DAG reduction $D=\DAG(\tau)$.
\end{itemize}
Consequently, if $\tau$ is self-nested, $f(\tau)$ can be computed in $O(\height(\tau) \degr(\tau))$-time.
\end{proposition}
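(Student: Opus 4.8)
The plan is to prove the three claims in sequence, each by a straightforward dynamic-programming / memoization argument, then combine the second and Proposition~\ref{prop:equivalence} to get the corollary.

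First, the bound from the tree. I would process the vertices of $\tau$ in a bottom-up order (e.g., by non-decreasing height, or via a postorder traversal). For each vertex $v$, once $f(c)$ is available for all $c\in\child(v)$, computing $f(v)$ amounts to one evaluation of $\Phi$ on $\#\child(v)$ arguments (or reading off $f_0$ if $v$ is a leaf). Charging the cost of each such evaluation to the incoming edges of the children, the total number of argument-slots touched across the whole tree is $\sum_{v\in V}\#\child(v)=\#E=\#\tau-1$, so the total work is $O(\#\tau)$, under the standard convention that one evaluation of $\Phi$ and one comparison/assignment in $\mathcal{X}$ are unit cost. Here I would note explicitly where permutation-invariance of $\Phi$ is used: it guarantees that the value $f(v)$ does not depend on the (irrelevant) ordering of siblings, so the computation is well-defined on the unordered tree.

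Second, the bound from the DAG. The key observation is that if $v,w$ are roots of isomorphic subtrees then $f(v)=f(w)$ — again because $\Phi$ is permutation-invariant, so $f$ is constant on each equivalence class, and $f$ descends to a well-defined function $\tilde f$ on the vertices $V_\equiv$ of $D=\DAG(\tau)$. I would compute $\tilde f$ by processing the vertices of $D$ in non-decreasing height order. For a non-leaf class $C_1=(h_1,i)$, the multiset of values $(f(c))_{c\in\child(\roottree(C_1))}$ is exactly the multiset that contains $\tilde f(C_2)$ with multiplicity $N(C_1,C_2)$ for each out-neighbour $C_2$ of $C_1$; assembling this multiset costs $O(\sum_{C_2} N(C_1,C_2))=O(\deg(\tau))$ slots since the labels on the out-edges of a single DAG node sum to the outdegree of the corresponding vertex, which is at most $\deg(\tau)$. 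One evaluation of $\Phi$ then yields $\tilde f(C_1)$. Summing over the at most $\#D$ classes gives $O(\#D\,\deg(\tau))$, and $f(\tau)=\tilde f(\roottree(D))$ recovers the root value.

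Third, the self-nested case. If $\tau$ is self-nested, Proposition~\ref{prop:equivalence} says $D=\DAG(\tau)$ is linear, and as noted right after that proposition a linear DAG of height $H=\height(\tau)$ has exactly $H+1$ vertices, i.e.\ $\#D=\height(\tau)+1$. Substituting into the second bound gives $O((\height(\tau)+1)\,\deg(\tau))=O(\height(\tau)\,\deg(\tau))$, as claimed.

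The only mildly delicate point — really a bookkeeping issue rather than a genuine obstacle — is being careful about the cost model: one must fix that a single application of $\Phi$ (to however many arguments) and basic operations in $\mathcal{X}$ count as $O(1)$, and that reconstructing the argument multiset for $\Phi$ from the edge labels $N(C_1,C_2)$ is charged accurately (it is linear in the sum of those labels, not in the number of distinct out-neighbours). Once that convention is stated, all three bounds are immediate from the traversal orders described above.
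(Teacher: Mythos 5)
Your proposal is correct and follows essentially the same route as the paper: a bottom-up traversal of the tree for the first bound, the observation that $f$ descends to the quotient (via permutation-invariance of $\Phi$) together with the recursion $f(\tree(D[(h_1,i)]))=\Phi(\ldots)$ and the bound $\sum_{h_2,j}N((h_1,i),(h_2,j))\leq\deg(\tau)$ for the second, and linearity of the DAG ($\#D=\height(\tau)+1$) for the self-nested case. Your added remarks on the cost model and on charging work to edge labels only make explicit what the paper leaves implicit.
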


For example, the number of vertices of a tree $\tau$ can be computed from its DAG reduction through the recursive formula
$$ \#\tree( D[(h_1,i)]) = 1+\sum_{h_2=0}^{h_1-1} \sum_{j=1}^{M_{h_2}} N( (h_1,i) , (h_2,j) ) \, \#\tree( D[(h_2,j] ) .$$
If $\tau$ is self-nested, its DAG reduction is linear and the recursion becomes
\begin{equation}
\label{eq:nnodes:sn:hp}
\#\tree(D[h_1]) = 1+\sum_{h_2=0}^{h_1-1} N(h_1,h_2) \,\#\tree(D[h_2]),
\end{equation}
where the number of positive terms in the sum is bounded by $\deg(\tau)$. The complexities stated in Proposition \ref{prop:bu:compl} have been illustrated through the numerical simulation of $40\,000$ random trees (see Figure~\ref{fig:comptime} and Table~\ref{tab:comptime}). The simulations state that, whatever the size of the tree, computing a bottom-up function is 4 times faster from a linear DAG than from the DAG reduction of a random tree, and nearly 20 times faster than from the tree.

\begin{figure}[h]
\centering
\includegraphics[width=8.5cm]{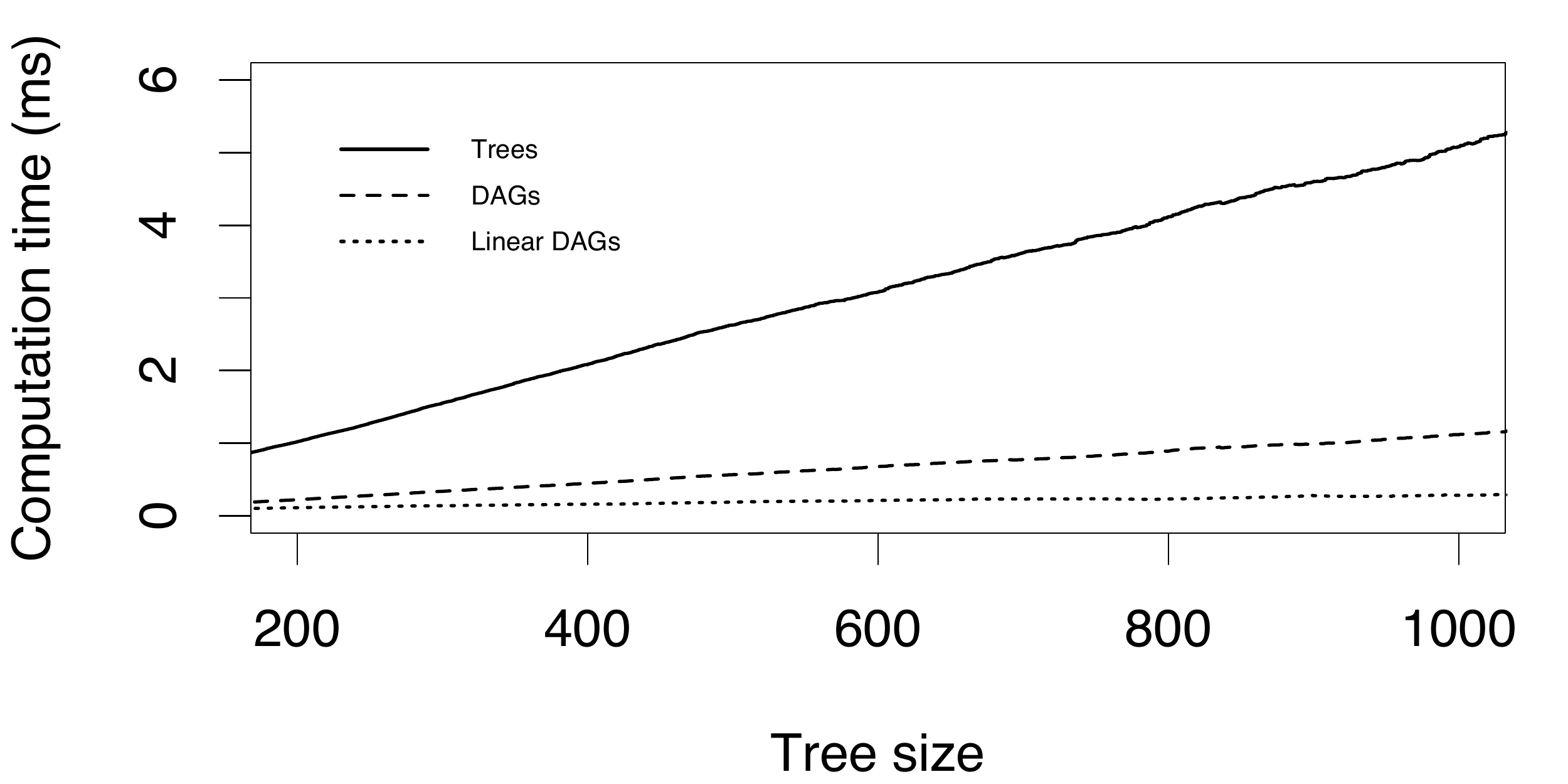}
\caption{Estimation of the computation time of the number of vertices from trees, DAGs and linear DAGs.}
\label{fig:comptime}
\end{figure}

\begin{table}[h]
\centering
\begin{tabular}{lc}
&Computation time (ms)\\ \hline
Trees& $5.094\times 10^{-3}\,\#\tau$\\
DAGs& $1.112\times 10^{-3}\,\#\tau$ \\
Linear DAGs& $2.652\times10^{-4}\,\#\tau$\\ \hline
\end{tabular}
\caption{Estimation of the slope of the three curves of Figure~\ref{fig:comptime}. For example, the computation time of the number of vertices from a tree of size $1\,000$ is $5.094$ ms on average, while it is approximately $0.27$ ms from the DAG reduction of a self-nested tree of the same size.}
\label{tab:comptime}
\end{table}

\subsection{Edit distance}

The problem of comparing trees occurs in several diverse areas such as computational biology and image analysis. We refer the reader to the survey \cite{B2005} in which the author reviews the available results and presents, in detail, one or more of the central algorithms for solving the problem. We consider a constrained edit distance between unordered rooted trees. This distance is based on the following tree edit operations \cite{xdiff}:
\begin{itemize}
\item\textit{Insertion.} Let $v$ be a vertex in a tree $\tau$. The insertion operation inserts a new vertex in the list of children of $v$. In the transformed tree, the new vertex is necessarily a leaf.
\item\textit{Deletion.} Let $l$ be a leaf vertex in a tree $\tau$. The deletion operation results in removing $l$ from $\tau$. That is, if $v$ is the parent of vertex $l$, the set of children of $v$ in the transformed tree is $\child(v)\setminus\{l\}$.
\end{itemize}
As in \cite{xdiff} for ordered trees, only adding and deleting a leaf vertex are allowed edit operations. An edit script is an ordered sequence of edit operations. The result of applying an edit script $s$ to a tree $\tau$ is the tree $\tau^s$ obtained by applying the component edit operations to $\tau$, in the order they appear in the script. The cost of an edit script $s$ is the number of edit operations $\#s$. In other words, we assign a unit cost to both allowed operations. Finally, given two unordered rooted trees $\tau_1$ and $\tau_2$, the constrained edit cost $\delta(\tau_1,\tau_2)$ is the length of the minimum edit script that transforms $\tau_1$ to a tree that is isomorphic to $\tau_2$,
$$\delta(\tau_1,\tau_2) = \min_{\{s\,:\,\tau_1^s \equiv \tau_2\}} \#s.$$
We show in Lemma \ref{prop:delta:dist} in the supplementary document that $\delta$ defines a distance on the space of unordered trees.

\begin{proposition}
\label{prop:delta:complexity}
The edit distance $\delta(\tau_1,\tau_2)$ can be computed in:
\begin{itemize}
\item $O(\#\tau_1 \, \#\tau_2 \, \psi(\tau_1,\tau_2))$-time from the trees $\tau_1$ and $\tau_2$;
\item $O(\#D_1 \#D_2 \, \degr(D_1)\degr(D_2) \, \psi(\tau_1,\tau_2))$-time from the DAG reductions $D_1=\DAG(\tau_1)$ and $D_2=\DAG(\tau_1)$;
\end{itemize}
with
$$\psi(\tau_1,\tau_2) = (\degr\,\tau_1+\degr\,\tau_2) \log_2(\degr\,\tau_1+\degr\,\tau_2) .$$
Consequently, if $\tau_1$ and $\tau_2$ are self-nested, the time-complexity of $\delta(\tau_1,\tau_2)$ is
$$O(\height(\tau_1) \height(\tau_2) \, \degr(D_1) \degr(D_2)\, \psi(\tau_1,\tau_2)).$$
\end{proposition}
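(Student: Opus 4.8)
The plan is to reduce the computation of $\delta$ to a bottom-up dynamic program indexed by pairs of subtrees and then to bound the cost of that program; the tree, DAG and self-nested bounds all follow from the same recursion, only the number of table entries and the cost per entry changing.

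First I would establish a recursive characterisation of $\delta$. Because the only permitted operations are leaf insertion and leaf deletion, an internal vertex can be removed only after all its descendants, so an entire subtree $\tau_1[w]$ can be deleted at cost exactly $\#\tau_1[w]$; and since insertion never produces a parent, any edit script that does not erase $\tau_1$ completely must map $\roottree(\tau_1)$ onto $\roottree(\tau_2)$. Combining this with an exchange argument showing that operations belonging to two disjoint child-subtrees can always be carried out independently, one obtains, writing $\delta(v_1,v_2):=\delta(\tau_1[v_1],\tau_2[v_2])$,
$$\delta(v_1,v_2)=\min_{M}\left[\sum_{(a,b)\in M}\delta(a,b)+\sum_{a\notin M}\#\tau_1[a]+\sum_{b\notin M}\#\tau_2[b]\right],$$
the minimum ranging over the partial matchings $M$ between $\child(v_1)$ and $\child(v_2)$; equivalently $\delta(v_1,v_2)=\#\tau_1[v_1]+\#\tau_2[v_2]-2\,g(v_1,v_2)$, where $g(v_1,v_2)$ is the size of a largest common ``top subtree'' and satisfies $g(v_1,v_2)=1+\max_{M}\sum_{(a,b)\in M}g(a,b)$. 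I would prove this by induction on $\#\tau_1[v_1]+\#\tau_2[v_2]$, the base case ($v_1$ or $v_2$ a leaf) being immediate.

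The second step, which is the technical heart, is the per-node subproblem. The minimisation over $M$ is a minimum-cost flow on a bipartite network: a source feeding the children of $v_1$ (at most $\degr(\tau_1)$ of them), a sink fed by the children of $v_2$, matching arcs of cost $\delta(a,b)$, and two auxiliary $\emptyset$-vertices carrying the deletion and insertion arcs, of costs $\#\tau_1[a]$ and $\#\tau_2[b]$ respectively; this is exactly the network underlying the algorithm of \cite{xdiff}. The claim to be proved is that this instance is solvable in $O(\psi(\tau_1,\tau_2))$ time: one sorts the children of each $v_i$ by a canonical encoding of their subtrees so as to group them into isomorphism classes --- whence both the number of items to match, bounded by $\degr(\tau_1)+\degr(\tau_2)$, and the $\log_2$ factor --- matches equal classes for free, which, thanks to the inequalities $|\#\tau_1[a]-\#\tau_2[b]|\le\delta(a,b)\le\#\tau_1[a]+\#\tau_2[b]$ and the triangle inequality for $\delta$ from Lemma~\ref{prop:delta:dist}, is never suboptimal, and then resolves the reduced instance. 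I expect proving that this procedure is simultaneously correct and runs in near-linear time to be the main obstacle; the rest is bookkeeping, since the linear terms of the recursion cost only $O(\degr(\tau_1)+\degr(\tau_2))$ per node.

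It then remains to assemble the three stated bounds. For the tree version the program fills a table over the $\#\tau_1\cdot\#\tau_2$ pairs $(v_1,v_2)$ at cost $O(\psi(\tau_1,\tau_2))$ each, which gives the first bound. For the DAG version I would use that $\delta(\tau_1[v_1],\tau_2[v_2])$, hence $g(v_1,v_2)$, depends only on the isomorphism classes of $\tau_1[v_1]$ and $\tau_2[v_2]$, that is, only on the vertices of $D_1=\DAG(\tau_1)$ and $D_2=\DAG(\tau_2)$ they represent; the table therefore collapses to $\#D_1\cdot\#D_2$ entries indexed by pairs of DAG vertices. In that encoding the children of a class are its outgoing arcs weighted by their multiplicities $N(\cdot,\cdot)$, so each entry becomes a transportation problem over $\degr(D_1)\times\degr(D_2)$ classes of total mass at most $\degr(\tau_1)+\degr(\tau_2)$, which a min-cost-flow computation on the class-network solves in $O(\degr(D_1)\degr(D_2)\,\psi(\tau_1,\tau_2))$; summing over the table yields the second bound. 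Finally, if $\tau_1$ and $\tau_2$ are self-nested then $D_1$ and $D_2$ are linear by Proposition~\ref{prop:equivalence}, hence $\#D_i=\height(\tau_i)+1$, and substituting this into the DAG bound gives $O(\height(\tau_1)\height(\tau_2)\,\degr(D_1)\degr(D_2)\,\psi(\tau_1,\tau_2))$, as claimed.
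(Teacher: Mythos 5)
Your overall architecture is the paper's: a recursion over matchings of the child subtrees, solved at each node as a bipartite minimum-cost flow with two auxiliary $\emptyset$-vertices, then collapsed onto pairs of DAG classes with capacities given by the multiplicities. But there is a genuine gap in the step you yourself flag as the ``technical heart'': the claim that each per-node instance is solvable in $O(\psi(\tau_1,\tau_2))$, i.e.\ near-linear, time. Grouping children into isomorphism classes and matching identical classes greedily (which is indeed justified by the triangle inequality) still leaves a general min-cost assignment instance on up to $\degr(\tau_1)+\degr(\tau_2)$ items whose cost matrix alone has $\Theta(\#\child(v_1)\cdot\#\child(v_2))$ entries; no known algorithm solves this in time $O((\degr\tau_1+\degr\tau_2)\log_2(\degr\tau_1+\degr\tau_2))$, and your argument does not supply one. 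Your accounting for the tree bound (``$\#\tau_1\#\tau_2$ entries at cost $O(\psi)$ each'') therefore rests on an unproven and in fact unattainable subroutine.

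The fix — and what the paper actually does — is to not ask for a near-linear per-node solver. Applying the standard min-cost max-flow bound $O(N\,|f^\star|\log_2 n)$ (Tarjan, Theorem 8.13) to the node network, with $N=O(\#\child(v_1)\,\#\child(v_2))$ edges, $|f^\star|=O(\degr\tau_1+\degr\tau_2)$ and $n=O(\degr\tau_1+\degr\tau_2)$ vertices, gives a per-node cost of $O(\#\child(v_1)\,\#\child(v_2)\,\psi(\tau_1,\tau_2))$; the stated total then follows not from counting table entries but from the identity $\sum_{v\in\tau}\#\child(v)=\#\tau-1$, which makes $\sum_{v_1,v_2}\#\child(v_1)\#\child(v_2)=O(\#\tau_1\,\#\tau_2)$. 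Note that your DAG analysis already implicitly uses exactly this flow bound (per-entry cost $O(\degr(D_1)\degr(D_2)\,\psi)$ comes from the $\degr(D_1)\times\degr(D_2)$ edges of the class network times $|f^\star|\log_2 n$), so the two halves of your write-up use inconsistent per-node costings; harmonizing them on the flow bound repairs the proof. Your alternative formulation via the largest common top subtree $g(v_1,v_2)$ is a harmless variant and the final specialization to linear DAGs via $\#D_i=\height(\tau_i)+1$ is correct.
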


The complexities stated in Proposition \ref{prop:bu:compl} have been illustrated through the numerical simulation of 20\,000 pairs of random trees (see Figure~\ref{fig:comptimedist}). Again, the simulations show that this complex query can be answered much faster for self-nested trees than for general trees. We highlight that the computational gain is even more substantial for this quadratic operation than for computing bottom-up functions.

\begin{figure}[h]
\centering
\includegraphics[width=8.5cm]{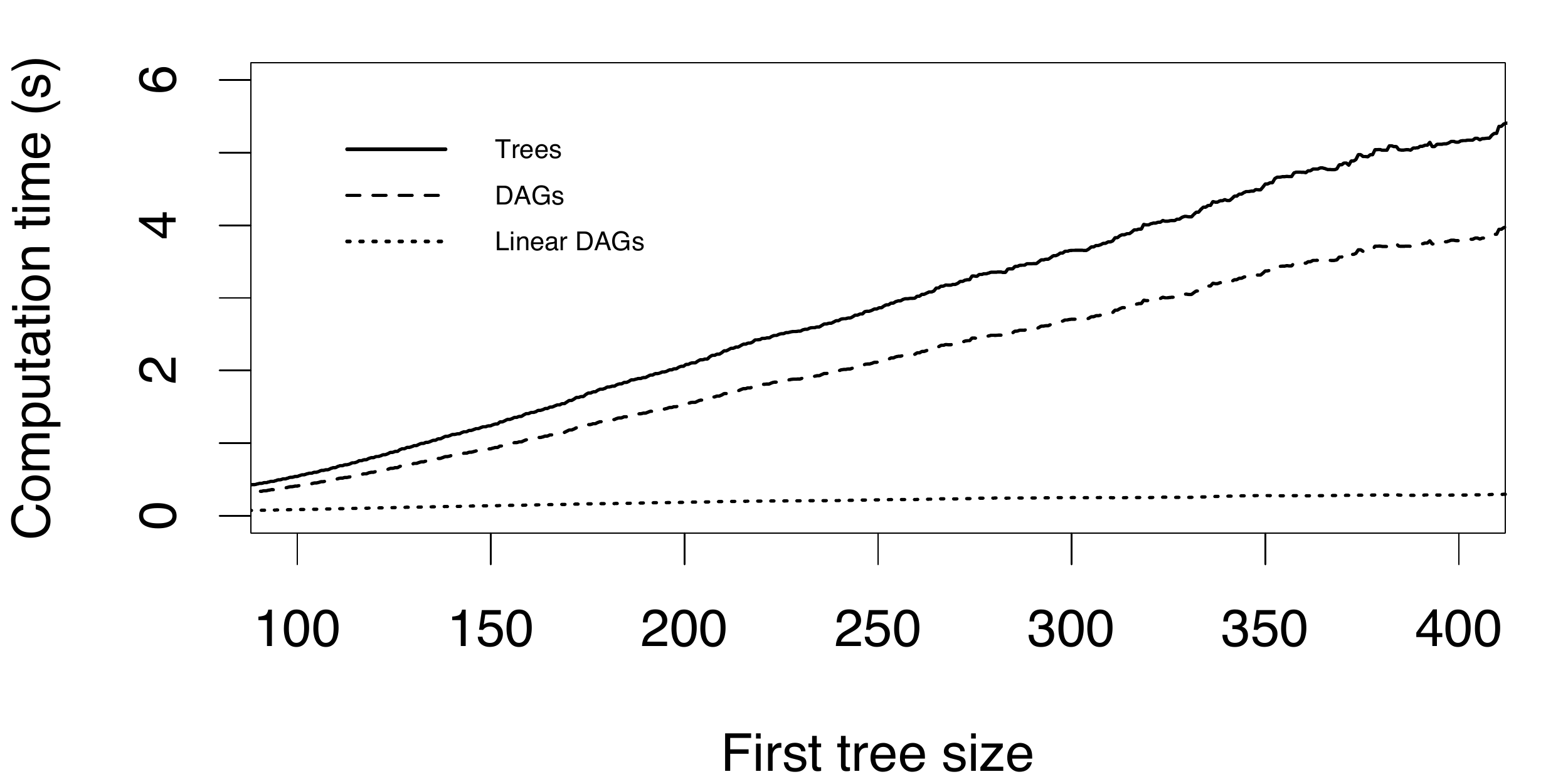}
\caption{Estimation of the computation time of the edit distance between two trees from the trees and from their DAG reduction, for trees and self-nested trees. The computation time is displayed as a function of the first tree size since the edit distance is a symmetric function.}
\label{fig:comptimedist}
\end{figure}


\section{Combinatorics of self-nested trees}
\label{section:properties}

We now investigate combinatorics of self-nested trees. This section gathers new results about this problem for trees that satisfy constraints on the height and the outdegree. In this context, $\mathbb{T}_{=H,\leq d}$ ($\mathbb{T}_{\leq H,\leq d}$, respectively) denotes the set of unordered trees of height $H$ (of height bounded by $H$, respectively) and outdegree bounded by $d$. The same notation lies for self-nested trees with the exponent $sn$. We give an explicit formula for the cardinality of self-nested trees in the following proposition.

\begin{proposition}\label{prop1}
For any $H\geq1$ and $d\geq1$,
$$\#\mathbb{T}^{sn}_{=H,\leq d} = \prod_{i=1}^{H} {\binom{d+H-i}{H-i+1}} .$$
\end{proposition}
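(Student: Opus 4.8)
The plan is to pass from self-nested trees to linear DAGs via Proposition~\ref{prop:equivalence}, and then to count the admissible arrays of edge labels by a product of elementary stars-and-bars computations.

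\textbf{Reduction to linear DAGs.} Since $\DAG$ is a bijection from unordered trees onto their reductions, with inverse $\tree$, and since by Proposition~\ref{prop:equivalence} a tree is self-nested exactly when its reduction is linear, counting $\mathbb{T}^{sn}_{=H,\leq d}$ amounts to counting linear DAGs whose associated self-nested tree has height $H$ and outdegree at most $d$. A linear DAG of height $H$ is encoded by the array $[N(h_1,h_2)]_{0\leq h_2<h_1\leq H}$ of non-negative integers. Reconstructing the tree from the DAG, the vertex of height $h_1$ represents the (unique up to isomorphism) subtree of that height, whose root has exactly $\sum_{h_2=0}^{h_1-1}N(h_1,h_2)$ children; hence the outdegree of the tree equals $\max_{1\leq h_1\leq H}\sum_{h_2=0}^{h_1-1}N(h_1,h_2)$, while the tree has height exactly $H$ if and only if every level $0,\dots,H$ occurs, which holds if and only if $N(h_1,h_1-1)\geq1$ for all $1\leq h_1\leq H$. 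Therefore $\#\mathbb{T}^{sn}_{=H,\leq d}$ is the number of arrays $[N(h_1,h_2)]$ of non-negative integers such that, for every $1\leq h_1\leq H$, one has $N(h_1,h_1-1)\geq1$ and $\sum_{h_2=0}^{h_1-1}N(h_1,h_2)\leq d$.

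\textbf{Counting the arrays.} These constraints never link entries with different first indices, so the count factors over $h_1$: it equals $\prod_{h_1=1}^{H}a_{h_1}$, where $a_{h_1}$ is the number of integer tuples $(N(h_1,0),\dots,N(h_1,h_1-1))$ with each entry $\geq0$, last entry $\geq1$, and sum $\leq d$. Setting $N(h_1,h_1-1)=1+m$ with $m\geq0$, the quantity $a_{h_1}$ counts the $h_1$ non-negative integers $N(h_1,0),\dots,N(h_1,h_1-2),m$ with sum at most $d-1$; introducing a slack variable, this is the number of non-negative integer solutions of an equation in $h_1+1$ unknowns with sum $d-1$, that is, $a_{h_1}=\binom{d-1+h_1}{h_1}$ (here $d\geq1$ makes $d-1\geq0$, and for $h_1=1$ this correctly gives $\binom{d}{1}=d$, the choices $N(1,0)\in\{1,\dots,d\}$). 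Hence
\[\#\mathbb{T}^{sn}_{=H,\leq d}=\prod_{h_1=1}^{H}\binom{d-1+h_1}{h_1},\]
and the change of variable $i=H-h_1+1$, under which $h_1=H-i+1$ and $d-1+h_1=d+H-i$, turns this product into $\prod_{i=1}^{H}\binom{d+H-i}{H-i+1}$, which is the claimed identity.

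\textbf{Main obstacle.} The combinatorics of the counting step is routine; the point that needs care is the translation in the first step, namely checking that the outdegree of the self-nested tree equals the \emph{per-level} sum $\sum_{h_2}N(h_1,h_2)$ (rather than, say, a global edge count) and that the conditions $N(h_1,h_1-1)\geq1$ are exactly what forces $\tree$ of the DAG to have height $H$ and none smaller — equivalently, that every admissible array is realized by a genuine self-nested tree of height $H$. Once these properties of the map $\tree$ on linear DAGs are in place, the formula follows by multiplying the per-level counts and reindexing.
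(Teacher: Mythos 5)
Your proposal is correct and follows essentially the same route as the paper: both reduce to counting the label arrays of linear DAGs via Proposition~\ref{prop:equivalence}, shift $N(h_1,h_1-1)$ by one to turn the constraints into non-negative integers with sum at most $d-1$, factor the count over levels, and reindex. The only cosmetic difference is that you evaluate each factor directly by stars-and-bars with a slack variable, whereas the paper cites Costello's count of lattice points in the discrete simplex to get the same binomial coefficient $\binom{d+h_1-1}{h_1}$.
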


A more traditional approach in the literature is to investigate combinatorics of trees with a given number of vertices. For example, exploiting the theory of ordinary generating functions, Flajolet and Sedgewick recursively obtained the cardinality of the set $\mathbb{T}_{n}$ of unordered trees with $n$ vertices (see \cite[eq. (73)]{FS2009} and OEIS\,\footnote{On-line Encyclopedia of Integer Sequences} A000081). In particular, the generating function associated with the unordered trees is given by
$$H(z) = z+z^2+2z^3+4z^4+9z^5+20z^6+48z^7+115 z^8+\cdots ,$$
where the coefficient $H_n$ of $z^n$ in $H(z)$ is the cardinality of the set $\mathbb{T}_{n}$. It would be very interesting to investigate the cardinality of the set $\mathbb{T}^{sn}_n$ of self-nested trees of size $n$. A strategy could be to remark that
$$
\#\mathbb{T}_n^{sn} = \sum_{H=1}^n \#\{\tau\in\mathbb{T}_n^{sn}~:~\height(\tau) = H\} ,
$$
where $\#\{\tau\in\mathbb{T}_n^{sn}~:~\height(\tau) = H\}$ is a polynomial equation of degree $H$ in $H(H+1)/2$ unknown variables in light of \eqref{eq:nnodes:sn:hp}. Determining the number of solutions of such a Diophantine equation, even in this particular framework, remains a very difficult question.

Nevertheless, thanks to Proposition~\ref{prop1}, we can numerically evaluate the frequency of self-nested trees (see Table~\ref{tableau1}). We have also derived an asymptotic equivalent when both the height and the outdegree go to infinity that can be compared to the cardinality of unordered trees.

\begin{table}[h]
\centering
\begin{tabular}{cc|ccc}
									&&\multicolumn{3}{c}{outdegree}												\\
									& 				& $\leq2$				& $\leq3$			& $\leq4$				\\ \hline
\multirow{4}{*}{\rotatebox{90}{height}}		&	$\leq2$		& $0.88$				& $6.18.10^{-1}$	& $3.52.10^{-1}$		\\
									&	$\leq3$		& $0.49$				& $3.38.10^{-2}$	& $7.43.10^{-5}$		\\
									&	$\leq4$		& $0.07$				& $2.90.10^{-8}$	& $4.16.10^{-23}$		\\
									&	$\leq5$		& $3.36.10^{-4}$		& $3.56.10^{-28}$	& $1.66.10^{-100}$		\\
\end{tabular}
\caption{Relative frequencies of self-nested trees with given maximal height and ramification number within the set of unordered trees under the same constraint.}
\label{tableau1}
\end{table}
\begin{proposition}\label{prop2}
When $H$ and $d$ simultaneously go to infinity,
$$ \log\,\#\mathbb{T}_{=H,\leq d}^{sn} \sim \frac{(d+H)^2}{2}\log(d+H)- \frac{H^2}{2}\log\,H - \frac{d^2}{2}\log\,d-H d \log\,d .$$
For the sake of comparison,
$$\log\,\#\mathbb{T}_{\leq H,\leq d} = \Theta(d^{H-1}).$$
\end{proposition}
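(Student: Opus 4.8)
The plan is to split the statement into two independent parts. The first part is the asymptotic equivalent for $\log\#\mathbb{T}^{sn}_{=H,\leq d}$; here I would start from the exact product formula of Proposition~\ref{prop1}, namely $\#\mathbb{T}^{sn}_{=H,\leq d} = \prod_{i=1}^{H}\binom{d+H-i}{H-i+1}$. Taking logarithms turns the product into a sum of $\log$ of binomial coefficients, and after the index change $k = H-i$ (so $k$ runs from $0$ to $H-1$) we get $\log\#\mathbb{T}^{sn}_{=H,\leq d} = \sum_{k=0}^{H-1}\log\binom{d+k}{k+1}$. I would then use Stirling's formula in the form $\log m! = m\log m - m + O(\log m)$ to expand each $\log\binom{d+k}{k+1} = \log(d+k)! - \log(k+1)! - \log(d-1)!$, and sum over $k$. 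Each of the three resulting sums, $\sum_k \log(d+k)!$, $\sum_k\log(k+1)!$, $\sum_k\log(d-1)!$, is a sum of a Stirling expansion over $k\in\{0,\dots,H-1\}$; replacing the sums by integrals (Euler–Maclaurin, with error controlled since the summand is smooth and the leading terms are of order $(d+H)^2\log(d+H)$) yields closed-form main terms. The integral $\int_0^H (d+x)\log(d+x)\,dx = \tfrac{(d+H)^2}{2}\log(d+H) - \tfrac{d^2}{2}\log d + (\text{lower order})$, and similarly $\int_0^H x\log x\,dx = \tfrac{H^2}{2}\log H + (\text{lower order})$ and $\int_0^H \log(d-1)!\,dx \sim H(d\log d - d)$. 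Collecting the leading contributions and checking that all the linear-in-$(d+H)^2$ and linear-in-$Hd$ terms combine exactly as claimed gives the stated equivalent; one must verify that the discarded terms are $o\big((d+H)^2\log(d+H)\big)$ when $H,d\to\infty$ jointly, which holds because they are $O\big((d+H)^2\big)$ or $O\big((d+H)\log(d+H)\big)$.

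For the second part, $\log\#\mathbb{T}_{\leq H,\leq d} = \Theta(d^{H-1})$, I would argue directly on the combinatorial class. Let $t_h = \#\mathbb{T}_{\leq h,\leq d}$ be the number of unordered trees of height at most $h$ and outdegree at most $d$. Since an unordered tree of height $\leq h$ is a root together with a multiset of at most $d$ children trees each of height $\leq h-1$, we have the recursion $t_h = \binom{t_{h-1}+d}{d}$ (multiset coefficient; the "at most $d$'' is absorbed by allowing an empty/placeholder child, or by summing $\sum_{j=0}^d \binom{t_{h-1}+j-1}{j}$, which is $\binom{t_{h-1}+d}{d}$), with $t_0 = 1$. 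For the upper bound, $t_h = \binom{t_{h-1}+d}{d} \leq (t_{h-1}+1)^d / d! \cdot d! \le (t_{h-1}+d)^d$, hence $\log t_h \le d\log(t_{h-1}+d)$; unrolling this $H$ times gives $\log t_H = O(d^{H-1}\log d)$ — wait, more carefully, iterating $\log t_h \lesssim d \log t_{h-1}$ from $t_1 \asymp d$ gives $\log t_H = O(d^{H-1}\cdot \log d)$, so to get the clean $\Theta(d^{H-1})$ I would instead track $\log t_h$ and note $\log t_h = d\log t_{h-1} + O(d)$, whence $\log t_H = d^{H-1}\log t_1 + O\!\big(\sum_{j} d^{j}\big) = \Theta(d^{H-1})$ using $\log t_1 = \log(d+1) = \Theta(\log d)$ — this still carries a $\log d$. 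The honest statement must therefore be interpreted as $\Theta$ in the regime where $H\to\infty$ with $d$ fixed, or as $d^{\Theta(H)}$; I would state the recursion, prove $\log t_H \le d\cdot\log t_{H-1} + \log(e)$ giving $\log t_H \le C d^{H-1}$, and prove the matching lower bound $\log t_H \ge \log\binom{t_{H-1}+d}{d} \ge \log\big(t_{H-1}^d/d!\big) = d\log t_{H-1} - \log d! \ge \tfrac12 d\log t_{H-1}$ once $t_{H-1}$ is large enough, giving $\log t_H \ge c\, d^{H-1}$. Both bounds together yield $\log\#\mathbb{T}_{\leq H,\leq d} = \Theta(d^{H-1})$.

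The main obstacle is the first part: getting the exact constants in the asymptotic equivalent right. The product formula gives a sum of $H$ binomial logarithms, but the summand $\log\binom{d+k}{k+1}$ itself has size up to $\Theta\big((d+H)\log(d+H)\big)$, so the error terms from Stirling ($O(\log(d+k))$ per term, $O(H\log(d+H))$ total) and from the Euler–Maclaurin replacement of sum by integral must all be shown to be negligible against $\tfrac12(d+H)^2\log(d+H)$ — this is fine, but the bookkeeping is delicate because there are several $O\big((d+H)^2\big)$ terms (the $-m$ parts of Stirling, the boundary terms of the integrals) that are individually large and must be tracked to confirm they do not contribute at the $\log$-weighted order. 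A clean way to organize this is to write $\log\#\mathbb{T}^{sn}_{=H,\leq d} = \sum_{k=0}^{H-1} g(d+k, k+1)$ with $g(a,b) = \log a! - \log b! - \log(a-b)!$, apply Stirling uniformly, and then recognize the resulting Riemann sum for $\int\int$-type integrals; I expect the computation to reduce, after the change of variables $x = k/H$, $\rho = d/H$, to an explicit elementary integral whose evaluation produces exactly the four terms in the statement.
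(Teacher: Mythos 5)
Your strategy for the first part follows essentially the same route as the paper (the paper writes the product via Gamma functions and compares the resulting double sum to an integral; you apply Stirling to the binomials and invoke Euler--Maclaurin --- it is the same computation in different clothes), but your error-control criterion is where the argument breaks. You propose to check that the discarded terms are $o\big((d+H)^2\log(d+H)\big)$. That is not sufficient: the four terms of the claimed equivalent cancel massively, and their sum $E=\frac{(d+H)^2}{2}\log(d+H)-\frac{H^2}{2}\log H-\frac{d^2}{2}\log d-Hd\log d$ can be as small as $\Theta(dH)$ --- for $H=d$ one gets $E=2d^2\log 2$ --- so an $O\big((d+H)^2\big)$ error is of the same order as the quantity being identified. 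Concretely, if you carry out your own plan, the ``$-m$'' parts of Stirling cancel exactly (since $-(d+k)+(k+1)+(d-1)=0$), but the boundary terms of $\int_d^{d+H}x\log x\,\mathrm{d}x$ and $\int_0^{H}x\log x\,\mathrm{d}x$ contribute $-\frac{(d+H)^2}{4}+\frac{d^2}{4}+\frac{H^2}{4}=-\frac{dH}{2}$, which is not absorbed by the four displayed terms and is commensurate with $E$ when $H\asymp d$ (and dominant when $d\gg H$). So the bookkeeping you rightly flag as ``delicate'' is not merely delicate: with the tolerance you allow yourself the proof does not close, and tracked exactly it produces a term the statement does not contain. (The paper's own proof only asserts that the remainder is negligible ``with respect to the other terms and to $Hd\log d$'', a weaker claim that your computation would also deliver.) At minimum you must track the $O\big((d+H)^2\big)$ contributions exactly, or restrict to a regime such as $H/d\to\infty$ in which $E\gg dH$.

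For the second part you set up exactly the paper's recursion $t_H=\binom{t_{H-1}+d}{d}$ with the bounds $(n/k)^k\leq\binom{n}{k}\leq(ne/k)^k$, but then you talk yourself out of the clean conclusion. The $\log d$ factor you keep running into is an artifact of basing the induction at $H=1$: the denominator $d!$ supplies a $-d\log d$ at every step, so $\log t_2\leq d\log\frac{(2d+1)e}{d}=O(d)$ with no logarithm. Starting the induction at $H=2$ --- which is what the paper does, via $u_2(d)\leq(3e)^d$ and $u_2(d)\geq(2+\frac{1}{d})^d$ --- and iterating $\log t_{h+1}=d\log t_h+O(d)$ yields $\log t_H=\Theta(d^{H-1})$ with no reinterpretation of the statement needed. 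Your closing upper and lower bounds are essentially correct once rebased this way; the hedge about ``$d$ fixed'' or ``$d^{\Theta(H)}$'' should be dropped rather than built into the claim.
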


Consequently, self-nested trees are very rare among unordered trees, since they are roughly exponentially less frequent. Exploring the space of self-nested trees is thus exponentially easier than for general trees. This is a very good point in NP problems for which exhaustive search is often chosen to determine a solution.


\section{Approximation algorithm}
\label{s:approx}

In Section~\ref{s:snt}, we have established the numerical efficiency of self-nested trees. In addition, they are very unfrequent as seen in Proposition~\ref{prop2}, which can be an interesting asset in exhaustive search problems. Our aim is to develop an approximation algorithm of trees by self-nested trees in order to take advantage of these remarkable algorithmic properties for any unordered tree. An application of this algorithm will be presented in Section~\ref{s:fastpred}.

\subsection{Characterization of self-nested trees}

Let $\tau$ be a tree of height $H$ and $D=\DAG(\tau)$ its DAG reduction. With the notation of Subsection \ref{ss:treereduction}, for each vertex $(h,i)$ of $D$, we define
$$\nu(h_1,i,h_2) = \sum_{j=1}^{M_{h_2}} N ( \, (h_1,i) \, , \, (h_2,j)\,)$$
that counts the number of subtrees of height $h_2$ which root is a child of $(h_1,i)$. This quantity typically represents the height profile of the tree $\tau$: it gives the distribution of the number of subtrees of height $h_2$ under vertices of height $h_1$ in $\tau$. This feature makes us able to derive a new characterization of self-nested trees.
\begin{proposition}
$\tau$ is self-nested if and only if, for any $0\leq h_2 < h_1\leq H$ and $1\leq i,j\leq M_{h_1}$,
\begin{equation}\label{eq:charac}\nu(h_1,i,h_2) = \nu(h_1,j,h_2) .\end{equation}
\end{proposition}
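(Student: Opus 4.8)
The plan is to reduce everything to Proposition~\ref{prop:equivalence}: $\tau$ is self-nested if and only if $\DAG(\tau)$ is linear, that is, if and only if $M_h=1$ for every $0\le h\le H$. So I would recast the claim as: condition~\eqref{eq:charac} holds for all admissible $h_1,h_2,i,j$ if and only if $M_{h_1}=1$ for all $h_1$.

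The forward implication is then immediate. If $\tau$ is self-nested, then by Proposition~\ref{prop:equivalence} we have $M_{h_1}=1$ for every $h_1$, so in~\eqref{eq:charac} the indices $i$ and $j$ are both forced to equal $1$ and the asserted equality is a tautology.

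For the converse I would argue by induction on $h$ that $M_h=1$. The base case $h=0$ holds because the only subtree of $\tau$ of height $0$ is a single leaf, so $M_0=1$. For the inductive step, assume $M_{h'}=1$ for all $h'<h$; then for each $h'<h$ there is a unique subtree (up to isomorphism) $u_{h'}$ of $\tau$ of height $h'$. Take two vertices $(h,i)$ and $(h,j)$ of $D=\DAG(\tau)$. Every child of the root of $\tree(D[(h,i)])$ is the root of a subtree of height strictly less than $h$, and the number of such children of height $h'$ is exactly $\nu(h,i,h')$ by definition of $\nu$; by the inductive hypothesis each of them is isomorphic to $u_{h'}$. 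Hence the multiset of subtrees hanging from the root of $\tree(D[(h,i)])$ is entirely determined by the tuple $(\nu(h,i,h'))_{0\le h'<h}$, and similarly for $(h,j)$. Since an unordered rooted tree is determined up to isomorphism by the multiset of subtrees rooted at the children of its root, the hypothesis $\nu(h,i,h')=\nu(h,j,h')$ for every $h'<h$ yields $\tree(D[(h,i)])\equiv\tree(D[(h,j)])$; but distinct vertices of a DAG reduction represent non-isomorphic subtrees, so $i=j$. Thus $M_h=1$, which completes the induction, shows $D$ is linear, and gives that $\tau$ is self-nested via Proposition~\ref{prop:equivalence}.

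The only delicate point I anticipate is making the inductive step airtight: one has to be careful that the profile data $\nu(h,i,\cdot)$ together with uniqueness of the lower levels really pins down the isomorphism type of $\tree(D[(h,i)])$ --- this rests on the bottom-up structure of $D$ (each child of the root has strictly smaller height, with at least one of height $h-1$) and on the defining property that vertices of the quotient graph carry pairwise non-isomorphic subtrees. Beyond this, the argument is routine bookkeeping, so I do not expect a serious obstacle.
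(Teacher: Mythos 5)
Your proof is correct. The forward direction is indeed immediate from Proposition~\ref{prop:equivalence} (linearity of $\DAG(\tau)$ forces $M_{h_1}=1$, so $i=j=1$), and the converse by induction on the height is sound: once $M_{h'}=1$ for all $h'<h$, the multiset of subtrees hanging under the root of $\tree(D[(h,i)])$ is exactly $\{u_{h'}$ with multiplicity $\nu(h,i,h')\}_{h'<h}$, so equality of the profiles forces isomorphism of the two representative subtrees, which contradicts the defining property of the quotient graph unless $i=j$. Note that the paper itself does not actually supply a proof of this proposition in its supplementary file (despite the blanket claim that all proofs are deferred there), so there is nothing to compare against; your induction is the natural argument and fills in precisely the step the authors treat as immediate, namely that the height profile together with uniqueness of the lower-level classes pins down the isomorphism type level by level.
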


In other words, a tree is self-nested if and only if its height profile is reduced to an array of Dirac masses. Furthermore, it should be remarked that, if \eqref{eq:charac} holds, the self-nested tree $\tau$ can be reconstructed from the array
$$[\nu(h_1,1,h_2)]_{0\leq h_2<h_1\leq H}.$$
Indeed, with the notation of Subsection~\ref{ss:sntrees} for self-nested trees, the label on edge $(h_1,h_2)$ in $D$ is $N(h_1,h_2) = \nu(h_1,1,h_2)$.

\subsection{Averaging}

Given a tree $\tau$, we construct the self-nested tree $\widehat{\tau}$ that op\-ti\-mally approxi\-ma\-tes, for each possible value of $(h_1,h_2)$, the quantity $\nu(h_1,\cdot,h_2)$ in weighted $\mathbb{L}^2$-norm taking into account the multiplicity of the vertices $(h_1,i)$ of $D$, $1\leq i\leq M_{h_1}$. The multiplicity $\mu((h_1,i))$ of a vertex $(h_1,i)$ of the DAG reduction $D$ of $\tau$ is the number of occurrences of the tree $\tree(D[(h_1,i)])$ in $\tau$. It is easy to see that $\mu((h_1,i))$ can be recursively computed from $D$ as
\begin{equation}
\mu( (h_1,i) ) = \prod_{{h>h_1} \atop {1\leq j\leq M_h}} N( (h,j) , (h_1,i) )\times\mu( (h,j) ) .\label{eq:multipli}
\end{equation}
The linear DAG $\widehat{D}$ of $\widehat{\tau}$ is defined (with the notation of Subsection~\ref{ss:sntrees}) by, for any $0\leq h_2<h_1\leq\height(\tau)$,
\begin{eqnarray*}
\widehat{N}(h_1,h_2) 
&=&\argmin_{x\in\mathbb{N}} \sum_{1\leq i\leq M_{h_1}} \mu( (h_1,i) )\,\big[x -\nu(h_1,i,h_2) \big]^2\\
&= &\pi\left( \frac{\displaystyle\sum_{1\leq i\leq M_{h_1}} \mu((h_1,i)) \sum_{1\leq j \leq M_{h_2}} N( (h_1,i) , (h_2,j) )}{\displaystyle\sum_{1\leq i\leq M_{h_1} } \mu( (h_1,i) )} \right) ,
\end{eqnarray*}
where $\pi$ denotes the projection on $\mathbb{N}$, i.e., the distribution of the number of subtrees of height $h_2$ under vertices of height $h_1$ in $\tau$ is approximated by its weighted mean. By construction, $\widehat{\tau}$ is the best self-nested approximation of the height profile of $\tau$. An example is presented in Figure~\ref{fig:exav}. The complexity of this algorithm is stated in Proposition~\ref{prop:complav}.

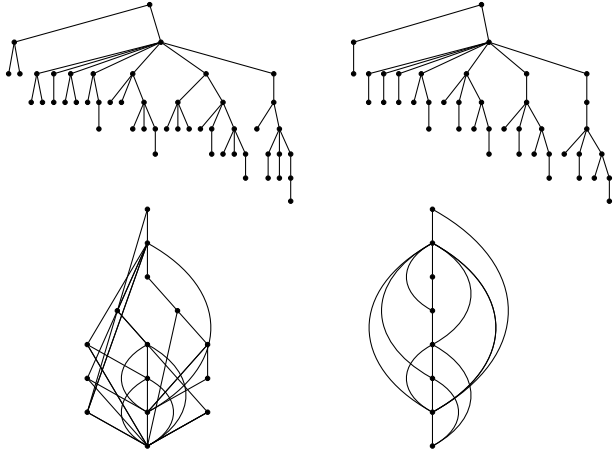
\begin{figure}[h!]
\centering
\def\xscale{0.15}
\def\yscale{0.5}
\def\nodescale{0.2}
\begin{tikzpicture}[xscale=\xscale,yscale=\yscale]
\tikzstyle{fleche}=[-,>=latex]
\tikzstyle{noeudblack}=[draw,circle,fill=black,scale=\nodescale]
\def\localnodescalea{1}
\node[noeudblack,scale=\localnodescalea] (4545410832) at ({0},{0}) {};
\node[noeudblack,scale=\localnodescalea] (4545536072) at ({-12.0},{-1.0}) {};
\node[noeudblack,scale=\localnodescalea] (4545536128) at ({-12.5},{-1.8408964152537146}) {};
\node[noeudblack,scale=\localnodescalea] (4545536184) at ({-11.5},{-1.8408964152537146}) {};
\node[noeudblack,scale=\localnodescalea] (4545409656) at ({1.0},{-1.0}) {};
\node[noeudblack,scale=\localnodescalea] (4545505544) at ({-10.0},{-1.8408964152537146}) {};
\node[noeudblack,scale=\localnodescalea] (4545505488) at ({-10.5},{-2.600732100905307}) {};
\node[noeudblack,scale=\localnodescalea] (4545505432) at ({-9.5},{-2.600732100905307}) {};
\node[noeudblack,scale=\localnodescalea] (4545505152) at ({-8.5},{-1.8408964152537146}) {};
\node[noeudblack,scale=\localnodescalea] (4545507112) at ({-8.5},{-2.600732100905307}) {};
\node[noeudblack,scale=\localnodescalea] (4545507168) at ({-7.0},{-1.8408964152537146}) {};
\node[noeudblack,scale=\localnodescalea] (4545507224) at ({-7.5},{-2.600732100905307}) {};
\node[noeudblack,scale=\localnodescalea] (4545507280) at ({-6.5},{-2.600732100905307}) {};
\node[noeudblack,scale=\localnodescalea] (4545505376) at ({-5.0},{-1.8408964152537146}) {};
\node[noeudblack,scale=\localnodescalea] (4545505320) at ({-5.5},{-2.600732100905307}) {};
\node[noeudblack,scale=\localnodescalea] (4545505264) at ({-4.5},{-2.600732100905307}) {};
\node[noeudblack,scale=\localnodescalea] (4545505208) at ({-4.5},{-3.3078388820918545}) {};
\node[noeudblack,scale=\localnodescalea] (4545505992) at ({-1.5},{-1.8408964152537146}) {};
\node[noeudblack,scale=\localnodescalea] (4545505936) at ({-3.5},{-2.600732100905307}) {};
\node[noeudblack,scale=\localnodescalea] (4545505600) at ({-2.5},{-2.600732100905307}) {};
\node[noeudblack,scale=\localnodescalea] (4545505880) at ({-0.5},{-2.600732100905307}) {};
\node[noeudblack,scale=\localnodescalea] (4545505824) at ({-1.5},{-3.3078388820918545}) {};
\node[noeudblack,scale=\localnodescalea] (4545505768) at ({-0.5},{-3.3078388820918545}) {};
\node[noeudblack,scale=\localnodescalea] (4545505712) at ({0.5},{-3.3078388820918545}) {};
\node[noeudblack,scale=\localnodescalea] (4545505656) at ({0.5},{-3.9765791870682765}) {};
\node[noeudblack,scale=\localnodescalea] (4545507000) at ({5.0},{-1.8408964152537146}) {};
\node[noeudblack,scale=\localnodescalea] (4545506440) at ({2.5},{-2.600732100905307}) {};
\node[noeudblack,scale=\localnodescalea] (4545503416) at ({1.5},{-3.3078388820918545}) {};
\node[noeudblack,scale=\localnodescalea] (4545504368) at ({2.5},{-3.3078388820918545}) {};
\node[noeudblack,scale=\localnodescalea] (4545506608) at ({3.5},{-3.3078388820918545}) {};
\node[noeudblack,scale=\localnodescalea] (4545506496) at ({6.5},{-2.600732100905307}) {};
\node[noeudblack,scale=\localnodescalea] (4545506048) at ({4.5},{-3.3078388820918545}) {};
\node[noeudblack,scale=\localnodescalea] (4545506160) at ({5.5},{-3.3078388820918545}) {};
\node[noeudblack,scale=\localnodescalea] (4545506104) at ({5.5},{-3.9765791870682765}) {};
\node[noeudblack,scale=\localnodescalea] (4545506552) at ({7.5},{-3.3078388820918545}) {};
\node[noeudblack,scale=\localnodescalea] (4545506272) at ({6.5},{-3.9765791870682765}) {};
\node[noeudblack,scale=\localnodescalea] (4545506216) at ({7.5},{-3.9765791870682765}) {};
\node[noeudblack,scale=\localnodescalea] (4545506384) at ({8.5},{-3.9765791870682765}) {};
\node[noeudblack,scale=\localnodescalea] (4545506328) at ({8.5},{-4.615522291314549}) {};
\node[noeudblack,scale=\localnodescalea] (4545411448) at ({11.0},{-1.8408964152537146}) {};
\node[noeudblack,scale=\localnodescalea] (4545410328) at ({11.0},{-2.600732100905307}) {};
\node[noeudblack,scale=\localnodescalea] (4545506888) at ({9.5},{-3.3078388820918545}) {};
\node[noeudblack,scale=\localnodescalea] (4545503696) at ({11.5},{-3.3078388820918545}) {};
\node[noeudblack,scale=\localnodescalea] (4545503584) at ({10.5},{-3.9765791870682765}) {};
\node[noeudblack,scale=\localnodescalea] (4545503304) at ({10.5},{-4.615522291314549}) {};
\node[noeudblack,scale=\localnodescalea] (4545506720) at ({11.5},{-3.9765791870682765}) {};
\node[noeudblack,scale=\localnodescalea] (4545503640) at ({11.5},{-4.615522291314549}) {};
\node[noeudblack,scale=\localnodescalea] (4545507056) at ({12.5},{-3.9765791870682765}) {};
\node[noeudblack,scale=\localnodescalea] (4545506664) at ({12.5},{-4.615522291314549}) {};
\node[noeudblack,scale=\localnodescalea] (4545503360) at ({12.5},{-5.230310444265813}) {};
\draw[fleche] (4545410832)--(4545536072) {};
\draw[fleche] (4545410832)--(4545409656) {};
\draw[fleche] (4545536072)--(4545536128) {};
\draw[fleche] (4545536072)--(4545536184) {};
\draw[fleche] (4545409656)--(4545505544) {};
\draw[fleche] (4545409656)--(4545505152) {};
\draw[fleche] (4545409656)--(4545507168) {};
\draw[fleche] (4545409656)--(4545505376) {};
\draw[fleche] (4545409656)--(4545505992) {};
\draw[fleche] (4545409656)--(4545507000) {};
\draw[fleche] (4545409656)--(4545411448) {};
\draw[fleche] (4545505544)--(4545505488) {};
\draw[fleche] (4545505544)--(4545505432) {};
\draw[fleche] (4545505152)--(4545507112) {};
\draw[fleche] (4545507168)--(4545507224) {};
\draw[fleche] (4545507168)--(4545507280) {};
\draw[fleche] (4545505376)--(4545505320) {};
\draw[fleche] (4545505376)--(4545505264) {};
\draw[fleche] (4545505264)--(4545505208) {};
\draw[fleche] (4545505992)--(4545505936) {};
\draw[fleche] (4545505992)--(4545505600) {};
\draw[fleche] (4545505992)--(4545505880) {};
\draw[fleche] (4545505880)--(4545505824) {};
\draw[fleche] (4545505880)--(4545505768) {};
\draw[fleche] (4545505880)--(4545505712) {};
\draw[fleche] (4545505712)--(4545505656) {};
\draw[fleche] (4545507000)--(4545506440) {};
\draw[fleche] (4545507000)--(4545506496) {};
\draw[fleche] (4545506440)--(4545503416) {};
\draw[fleche] (4545506440)--(4545504368) {};
\draw[fleche] (4545506440)--(4545506608) {};
\draw[fleche] (4545506496)--(4545506048) {};
\draw[fleche] (4545506496)--(4545506160) {};
\draw[fleche] (4545506496)--(4545506552) {};
\draw[fleche] (4545506160)--(4545506104) {};
\draw[fleche] (4545506552)--(4545506272) {};
\draw[fleche] (4545506552)--(4545506216) {};
\draw[fleche] (4545506552)--(4545506384) {};
\draw[fleche] (4545506384)--(4545506328) {};
\draw[fleche] (4545411448)--(4545410328) {};
\draw[fleche] (4545410328)--(4545506888) {};
\draw[fleche] (4545410328)--(4545503696) {};
\draw[fleche] (4545503696)--(4545503584) {};
\draw[fleche] (4545503696)--(4545506720) {};
\draw[fleche] (4545503696)--(4545507056) {};
\draw[fleche] (4545503584)--(4545503304) {};
\draw[fleche] (4545506720)--(4545503640) {};
\draw[fleche] (4545507056)--(4545506664) {};
\draw[fleche] (4545506664)--(4545503360) {};
\end{tikzpicture}
\qquad
\def\xscale{0.2}
\begin{tikzpicture}[xscale=\xscale,yscale=\yscale]
\tikzstyle{fleche}=[-,>=latex]
\tikzstyle{noeudblack}=[draw,circle,fill=black,scale=\nodescale]
\def\localnodescalea{1}
\node[noeudblack,scale=\localnodescalea] (4545538088) at ({0},{0}) {};
\node[noeudblack,scale=\localnodescalea] (4545538704) at ({-8.5},{-1.0}) {};
\node[noeudblack,scale=\localnodescalea] (4545538536) at ({-8.5},{-1.8408964152537146}) {};
\node[noeudblack,scale=\localnodescalea] (4545538144) at ({0.5},{-1.0}) {};
\node[noeudblack,scale=\localnodescalea] (4545539096) at ({-7.5},{-1.8408964152537146}) {};
\node[noeudblack,scale=\localnodescalea] (4545539208) at ({-7.5},{-2.600732100905307}) {};
\node[noeudblack,scale=\localnodescalea] (4545538872) at ({-6.5},{-1.8408964152537146}) {};
\node[noeudblack,scale=\localnodescalea] (4545539264) at ({-6.5},{-2.600732100905307}) {};
\node[noeudblack,scale=\localnodescalea] (4545539040) at ({-5.5},{-1.8408964152537146}) {};
\node[noeudblack,scale=\localnodescalea] (4545539488) at ({-5.5},{-2.600732100905307}) {};
\node[noeudblack,scale=\localnodescalea] (4545538928) at ({-4.0},{-1.8408964152537146}) {};
\node[noeudblack,scale=\localnodescalea] (4545538424) at ({-4.5},{-2.600732100905307}) {};
\node[noeudblack,scale=\localnodescalea] (4545539320) at ({-3.5},{-2.600732100905307}) {};
\node[noeudblack,scale=\localnodescalea] (4545539376) at ({-3.5},{-3.3078388820918545}) {};
\node[noeudblack,scale=\localnodescalea] (4545538760) at ({-1.0},{-1.8408964152537146}) {};
\node[noeudblack,scale=\localnodescalea] (4545539544) at ({-2.5},{-2.600732100905307}) {};
\node[noeudblack,scale=\localnodescalea] (4545539600) at ({-1.5},{-2.600732100905307}) {};
\node[noeudblack,scale=\localnodescalea] (4545539712) at ({-1.5},{-3.3078388820918545}) {};
\node[noeudblack,scale=\localnodescalea] (4545539656) at ({0.0},{-2.600732100905307}) {};
\node[noeudblack,scale=\localnodescalea] (4545539824) at ({-0.5},{-3.3078388820918545}) {};
\node[noeudblack,scale=\localnodescalea] (4545539880) at ({0.5},{-3.3078388820918545}) {};
\node[noeudblack,scale=\localnodescalea] (4545539992) at ({0.5},{-3.9765791870682765}) {};
\node[noeudblack,scale=\localnodescalea] (4545539432) at ({3.0},{-1.8408964152537146}) {};
\node[noeudblack,scale=\localnodescalea] (4545539936) at ({3.0},{-2.600732100905307}) {};
\node[noeudblack,scale=\localnodescalea] (4545568840) at ({1.5},{-3.3078388820918545}) {};
\node[noeudblack,scale=\localnodescalea] (4545568896) at ({2.5},{-3.3078388820918545}) {};
\node[noeudblack,scale=\localnodescalea] (4545569008) at ({2.5},{-3.9765791870682765}) {};
\node[noeudblack,scale=\localnodescalea] (4545568952) at ({4.0},{-3.3078388820918545}) {};
\node[noeudblack,scale=\localnodescalea] (4545569120) at ({3.5},{-3.9765791870682765}) {};
\node[noeudblack,scale=\localnodescalea] (4545569176) at ({4.5},{-3.9765791870682765}) {};
\node[noeudblack,scale=\localnodescalea] (4545569288) at ({4.5},{-4.615522291314549}) {};
\node[noeudblack,scale=\localnodescalea] (4545539768) at ({7.0},{-1.8408964152537146}) {};
\node[noeudblack,scale=\localnodescalea] (4545569064) at ({7.0},{-2.600732100905307}) {};
\node[noeudblack,scale=\localnodescalea] (4545569344) at ({7.0},{-3.3078388820918545}) {};
\node[noeudblack,scale=\localnodescalea] (4545569456) at ({5.5},{-3.9765791870682765}) {};
\node[noeudblack,scale=\localnodescalea] (4545569512) at ({6.5},{-3.9765791870682765}) {};
\node[noeudblack,scale=\localnodescalea] (4545569624) at ({6.5},{-4.615522291314549}) {};
\node[noeudblack,scale=\localnodescalea] (4545569568) at ({8.0},{-3.9765791870682765}) {};
\node[noeudblack,scale=\localnodescalea] (4545569736) at ({7.5},{-4.615522291314549}) {};
\node[noeudblack,scale=\localnodescalea] (4545569792) at ({8.5},{-4.615522291314549}) {};
\node[noeudblack,scale=\localnodescalea] (4545569904) at ({8.5},{-5.230310444265813}) {};
\draw[fleche] (4545538088)--(4545538704) {};
\draw[fleche] (4545538088)--(4545538144) {};
\draw[fleche] (4545538704)--(4545538536) {};
\draw[fleche] (4545538144)--(4545539096) {};
\draw[fleche] (4545538144)--(4545538872) {};
\draw[fleche] (4545538144)--(4545539040) {};
\draw[fleche] (4545538144)--(4545538928) {};
\draw[fleche] (4545538144)--(4545538760) {};
\draw[fleche] (4545538144)--(4545539432) {};
\draw[fleche] (4545538144)--(4545539768) {};
\draw[fleche] (4545539096)--(4545539208) {};
\draw[fleche] (4545538872)--(4545539264) {};
\draw[fleche] (4545539040)--(4545539488) {};
\draw[fleche] (4545538928)--(4545538424) {};
\draw[fleche] (4545538928)--(4545539320) {};
\draw[fleche] (4545539320)--(4545539376) {};
\draw[fleche] (4545538760)--(4545539544) {};
\draw[fleche] (4545538760)--(4545539600) {};
\draw[fleche] (4545538760)--(4545539656) {};
\draw[fleche] (4545539600)--(4545539712) {};
\draw[fleche] (4545539656)--(4545539824) {};
\draw[fleche] (4545539656)--(4545539880) {};
\draw[fleche] (4545539880)--(4545539992) {};
\draw[fleche] (4545539432)--(4545539936) {};
\draw[fleche] (4545539936)--(4545568840) {};
\draw[fleche] (4545539936)--(4545568896) {};
\draw[fleche] (4545539936)--(4545568952) {};
\draw[fleche] (4545568896)--(4545569008) {};
\draw[fleche] (4545568952)--(4545569120) {};
\draw[fleche] (4545568952)--(4545569176) {};
\draw[fleche] (4545569176)--(4545569288) {};
\draw[fleche] (4545539768)--(4545569064) {};
\draw[fleche] (4545569064)--(4545569344) {};
\draw[fleche] (4545569344)--(4545569456) {};
\draw[fleche] (4545569344)--(4545569512) {};
\draw[fleche] (4545569344)--(4545569568) {};
\draw[fleche] (4545569512)--(4545569624) {};
\draw[fleche] (4545569568)--(4545569736) {};
\draw[fleche] (4545569568)--(4545569792) {};
\draw[fleche] (4545569792)--(4545569904) {};
\end{tikzpicture}\\
\def\xscale{0.8}
\def\yscale{0.45}
\begin{tikzpicture}[xscale=\xscale,yscale=\yscale]
\tikzstyle{fleche}=[-,>=latex]
\tikzstyle{noeud}=[draw,circle,fill=black,scale=\nodescale*1]
\node[noeud] (0) at ({-0.5},{0}) {};
\tikzstyle{noeud}=[draw,circle,fill=black,scale=\nodescale*1]
\node[noeud] (1) at ({-1.5},{1}) {};
\tikzstyle{noeud}=[draw,circle,fill=black,scale=\nodescale*1]
\node[noeud] (2) at ({-0.5},{1}) {};
\tikzstyle{noeud}=[draw,circle,fill=black,scale=\nodescale*1]
\node[noeud] (3) at ({0.5},{1}) {};
\tikzstyle{noeud}=[draw,circle,fill=black,scale=\nodescale*1]
\node[noeud] (4) at ({-1.5},{2}) {};
\tikzstyle{noeud}=[draw,circle,fill=black,scale=\nodescale*1]
\node[noeud] (5) at ({-0.5},{2}) {};
\tikzstyle{noeud}=[draw,circle,fill=black,scale=\nodescale*1]
\node[noeud] (6) at ({0.5},{2}) {};
\tikzstyle{noeud}=[draw,circle,fill=black,scale=\nodescale*1]
\node[noeud] (7) at ({-1.5},{3}) {};
\tikzstyle{noeud}=[draw,circle,fill=black,scale=\nodescale*1]
\node[noeud] (8) at ({-0.5},{3}) {};
\tikzstyle{noeud}=[draw,circle,fill=black,scale=\nodescale*1]
\node[noeud] (9) at ({0.5},{3}) {};
\tikzstyle{noeud}=[draw,circle,fill=black,scale=\nodescale*1]
\node[noeud] (10) at ({-1.0},{4}) {};
\tikzstyle{noeud}=[draw,circle,fill=black,scale=\nodescale*1]
\node[noeud] (11) at ({0.0},{4}) {};
\tikzstyle{noeud}=[draw,circle,fill=black,scale=\nodescale*1]
\node[noeud] (12) at ({-0.5},{5}) {};
\tikzstyle{noeud}=[draw,circle,fill=black,scale=\nodescale*1]
\node[noeud] (13) at ({-0.5},{6}) {};
\tikzstyle{noeud}=[draw,circle,fill=black,scale=\nodescale*1]
\node[noeud] (14) at ({-0.5},{7}) {};
\draw[fleche] (1)--(0) {};
\draw[fleche] (1)--(0) {};
\draw[fleche] (2)--(0) {};
\draw[fleche] (3)--(0) {};
\draw[fleche] (3)--(0) {};
\draw[fleche] (3)--(0) {};
\draw[fleche] (4)--(0) {};
\draw[fleche] (4)--(2) {};
\draw[fleche] (5) to [bend left=45](0) {};
\draw[fleche] (5) to [bend right=45](0) {};
\draw[fleche] (5)--(2) {};
\draw[fleche] (6)--(2) {};
\draw[fleche] (7)--(0) {};
\draw[fleche] (7)--(0) {};
\draw[fleche] (7)--(5) {};
\draw[fleche] (8) to [bend left=45](0) {};
\draw[fleche] (8) to [bend right=45](2) {};
\draw[fleche] (8)--(5) {};
\draw[fleche] (9)--(2) {};
\draw[fleche] (9)--(2) {};
\draw[fleche] (9)--(6) {};
\draw[fleche] (10)--(3) {};
\draw[fleche] (10)--(8) {};
\draw[fleche] (11)--(0) {};
\draw[fleche] (11)--(9) {};
\draw[fleche] (12)--(11) {};
\draw[fleche] (13)--(1) {};
\draw[fleche] (13)--(1) {};
\draw[fleche] (13) to [bend left=45](2) {};
\draw[fleche] (13)--(4) {};
\draw[fleche] (13)--(7) {};
\draw[fleche] (13)--(10) {};
\draw[fleche] (13)--(12) {};
\draw[fleche] (14)--(1) {};
\draw[fleche] (14)--(13) {};
\end{tikzpicture}
\hspace{1.4cm}
\begin{tikzpicture}[xscale=\xscale,yscale=\yscale]
\tikzstyle{fleche}=[-,>=latex]
\tikzstyle{noeud}=[draw,circle,fill=black,scale=\nodescale*1]
\node[noeud] (0) at ({-0.5},{0}) {};
\tikzstyle{noeud}=[draw,circle,fill=black,scale=\nodescale*1]
\node[noeud] (1) at ({-0.5},{1}) {};
\tikzstyle{noeud}=[draw,circle,fill=black,scale=\nodescale*1]
\node[noeud] (2) at ({-0.5},{2}) {};
\tikzstyle{noeud}=[draw,circle,fill=black,scale=\nodescale*1]
\node[noeud] (3) at ({-0.5},{3}) {};
\tikzstyle{noeud}=[draw,circle,fill=black,scale=\nodescale*1]
\node[noeud] (4) at ({-0.5},{4}) {};
\tikzstyle{noeud}=[draw,circle,fill=black,scale=\nodescale*1]
\node[noeud] (5) at ({-0.5},{5}) {};
\tikzstyle{noeud}=[draw,circle,fill=black,scale=\nodescale*1]
\node[noeud] (6) at ({-0.5},{6}) {};
\tikzstyle{noeud}=[draw,circle,fill=black,scale=\nodescale*1]
\node[noeud] (7) at ({-0.5},{7}) {};
\draw[fleche] (1)--(0) {};
\draw[fleche] (2) to [bend left=45](0) {};
\draw[fleche] (2)--(1) {};
\draw[fleche] (3) to [bend left=45](0) {};
\draw[fleche] (3) to [bend right=45](1) {};
\draw[fleche] (3)--(2) {};
\draw[fleche] (4)--(3) {};
\draw[fleche] (5)--(4) {};
\draw[fleche] (6) to [bend left=45](1) {};
\draw[fleche] (6) to [bend right=45](1) {};
\draw[fleche] (6) to [bend left=45](1) {};
\draw[fleche] (6) to [bend right=45](2) {};
\draw[fleche] (6) to [bend left=45](3) {};
\draw[fleche] (6) to [bend right=45](4) {};
\draw[fleche] (6)--(5) {};
\draw[fleche] (7) to [bend left=45](1) {};
\draw[fleche] (7)--(6) {};
\end{tikzpicture}
\caption{A random tree of size $50$ (top left), its self-nested approximation tree of size $41$ (top right), the nonlinear DAG reduction of the initial tree (bottom left), and the linear DAG reduction of the self-nested approximation (bottom right).}
\label{fig:exav}
\end{figure}

\begin{proposition}
\label{prop:complav}
$\widehat{D}$ can be computed in $O(\#E_\equiv)$-time from the DAG reduction $D=(V_\equiv,E_\equiv,N)$ of $\tau$.
\end{proposition}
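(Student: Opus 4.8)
The plan is to describe an explicit algorithm that sweeps once through the vertices and once through the edges of $D$, doing $O(1)$ arithmetic on each, and to check that what it produces is exactly the array $\widehat N$. As in Propositions~\ref{prop:bu:compl} and \ref{prop:delta:complexity}, the cost is counted in elementary arithmetic operations (the multiplicities $\mu((h,i))$ are integers that may grow with $\#\tau$).

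The first step is to compute the multiplicities. Since every edge of $D$ joins a vertex of larger height to a vertex of smaller height, visiting the vertices by decreasing height is a topological order of $D$. Starting from $\mu(\roottree(D))=1$ and sweeping downward, the value $\mu((h_1,i))$ delivered by \eqref{eq:multipli} only combines the already-computed multiplicities of the parents of $(h_1,i)$ with the labels of the corresponding incoming edges, so it is obtained with a number of operations proportional to the in-degree of $(h_1,i)$; summed over all vertices this is $O(\#V_\equiv+\#E_\equiv)$. In the same sweep one also accumulates, for each height $h_1$, the denominator $B(h_1):=\sum_{1\le i\le M_{h_1}}\mu((h_1,i))$, at an extra $O(\#V_\equiv)$.

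The second step is to compute the numerators $A(h_1,h_2):=\sum_{1\le i\le M_{h_1}}\mu((h_1,i))\,\nu(h_1,i,h_2)$. The point is to expand the double sum over edges: by definition of $\nu$,
$$A(h_1,h_2)=\sum_{1\le i\le M_{h_1}}\ \sum_{1\le j\le M_{h_2}}\mu((h_1,i))\,N\big((h_1,i),(h_2,j)\big),$$
so every edge $\big((h_1,i),(h_2,j)\big)\in E_\equiv$ contributes $\mu((h_1,i))\,N\big((h_1,i),(h_2,j)\big)$ to $A(h_1,h_2)$ and to no other numerator. Hence a single traversal of $E_\equiv$ — which can be folded into the sweep above, since $\mu((h_1,i))$ is final when the outgoing edges of $(h_1,i)$ are examined — fills in all the $A(h_1,h_2)$, stored in a dictionary indexed by height pairs. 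Only the at most $\#E_\equiv$ pairs of heights actually joined by an edge are ever touched, and for every other pair $\nu(h_1,i,h_2)=0$ for all $i$, so the weighted mean vanishes and $\widehat N(h_1,h_2)=0$ automatically. Finally $\widehat N(h_1,h_2)=\pi\!\left(A(h_1,h_2)/B(h_1)\right)$ is one more operation per relevant pair, and these are the only nonzero labels of $\widehat D$. The total is $O(\#V_\equiv+\#E_\equiv)$, which is $O(\#E_\equiv)$ because the reduction $D$ is connected — every class is reachable from the root class by going down — so $\#E_\equiv\ge\#V_\equiv-1$.

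The main obstacle is precisely to beat the naive $\Theta\big(\sum_{h_2<h_1}M_{h_1}M_{h_2}\big)$ (and even $\Theta(H^2)$) cost suggested by reading the definition of $\widehat N$ pair by pair. Two observations remove it: (i) rewriting each numerator as a sum of per-edge contributions, so one pass over $E_\equiv$ updates every $A(h_1,h_2)$; and (ii) noting that only heights joined by an edge of $D$ can give a nonzero $\widehat N$, so the remaining — possibly far more numerous — pairs never need to be inspected. The only other point worth stating explicitly is the arithmetic model, since the multiplicities are not bounded.
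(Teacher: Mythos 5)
Your proof is correct and follows essentially the same route as the paper's: one traversal of the edges to obtain the multiplicities via \eqref{eq:multipli}, then one more traversal in which each edge $(h_1,i)\to(h_2,j)$ contributes to exactly one weighted mean $\widehat{N}(h_1,h_2)$, so that the edge set is swept a bounded number of times overall. Your additional remarks (the arithmetic model, the vanishing of $\widehat{N}$ on height pairs not joined by any edge, and $\#V_\equiv\leq\#E_\equiv+1$ by connectedness) are welcome refinements of details the paper leaves implicit, but they do not change the argument.
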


In \cite{GF2010}, the authors propose to approximate a tree by its Nearest Embedding Self-nested Tree (NEST), i.e., the self-nested tree obtained from the initial data by adding a minimal number of vertices. The NEST of the tree of Figure~\ref{fig:exav} (top left) is displayed in Fi\-gu\-re~\ref{fig:exnest} for illustration purposes. They establish in \cite[Theorem 1]{GF2010} that the NEST can be computed in $O(\height(\tau)^2 \deg(\tau))$. We prove from numerical simulations that our averaging algorithm achieves better approximation errors (on average approximately $30$ times lower for a tree of size $400$, see Figure~\ref{fig:compnesterr}) while it requires much less computation time (on average approximately $50$ times lower for a tree of size $800$, see Figure~\ref{fig:compnesttime}).

\begin{figure}[h]
\centering
\def\xscale{0.12}
\def\yscale{0.5}
\def\nodescale{0.2}
\begin{tikzpicture}[xscale=\xscale,yscale=\yscale]
\tikzstyle{fleche}=[-,>=latex]
\tikzstyle{noeudblack}=[draw,circle,fill=black,scale=\nodescale]
\def\localnodescalea{1}
\node[noeudblack,scale=\localnodescalea] (4545774200) at ({0},{0}) {};
\node[noeudblack,scale=\localnodescalea] (4545774704) at ({-30.5},{-1.0}) {};
\node[noeudblack,scale=\localnodescalea] (4545774984) at ({-31.5},{-1.8408964152537146}) {};
\node[noeudblack,scale=\localnodescalea] (4545773752) at ({-30.5},{-1.8408964152537146}) {};
\node[noeudblack,scale=\localnodescalea] (4545774928) at ({-29.5},{-1.8408964152537146}) {};
\node[noeudblack,scale=\localnodescalea] (4545774760) at ({1.5},{-1.0}) {};
\node[noeudblack,scale=\localnodescalea] (4545775264) at ({-27.5},{-1.8408964152537146}) {};
\node[noeudblack,scale=\localnodescalea] (4545774648) at ({-28.5},{-2.600732100905307}) {};
\node[noeudblack,scale=\localnodescalea] (4545775096) at ({-27.5},{-2.600732100905307}) {};
\node[noeudblack,scale=\localnodescalea] (4545775208) at ({-26.5},{-2.600732100905307}) {};
\node[noeudblack,scale=\localnodescalea] (4545774424) at ({-24.5},{-1.8408964152537146}) {};
\node[noeudblack,scale=\localnodescalea] (4545775320) at ({-25.5},{-2.600732100905307}) {};
\node[noeudblack,scale=\localnodescalea] (4545775376) at ({-24.5},{-2.600732100905307}) {};
\node[noeudblack,scale=\localnodescalea] (4545775432) at ({-23.5},{-2.600732100905307}) {};
\node[noeudblack,scale=\localnodescalea] (4545775152) at ({-21.5},{-1.8408964152537146}) {};
\node[noeudblack,scale=\localnodescalea] (4545775544) at ({-22.5},{-2.600732100905307}) {};
\node[noeudblack,scale=\localnodescalea] (4545775600) at ({-21.5},{-2.600732100905307}) {};
\node[noeudblack,scale=\localnodescalea] (4545775656) at ({-20.5},{-2.600732100905307}) {};
\node[noeudblack,scale=\localnodescalea] (4545775488) at ({-17.5},{-1.8408964152537146}) {};
\node[noeudblack,scale=\localnodescalea] (4545775768) at ({-19.5},{-2.600732100905307}) {};
\node[noeudblack,scale=\localnodescalea] (4545775824) at ({-18.5},{-2.600732100905307}) {};
\node[noeudblack,scale=\localnodescalea] (4545775880) at ({-16.5},{-2.600732100905307}) {};
\node[noeudblack,scale=\localnodescalea] (4545775992) at ({-17.5},{-3.3078388820918545}) {};
\node[noeudblack,scale=\localnodescalea] (4545776048) at ({-16.5},{-3.3078388820918545}) {};
\node[noeudblack,scale=\localnodescalea] (4545776104) at ({-15.5},{-3.3078388820918545}) {};
\node[noeudblack,scale=\localnodescalea] (4545775712) at ({-8.5},{-1.8408964152537146}) {};
\node[noeudblack,scale=\localnodescalea] (4545776160) at ({-14.5},{-2.600732100905307}) {};
\node[noeudblack,scale=\localnodescalea] (4545776216) at ({-13.5},{-2.600732100905307}) {};
\node[noeudblack,scale=\localnodescalea] (4545776272) at ({-11.5},{-2.600732100905307}) {};
\node[noeudblack,scale=\localnodescalea] (4545776384) at ({-12.5},{-3.3078388820918545}) {};
\node[noeudblack,scale=\localnodescalea] (4545776440) at ({-11.5},{-3.3078388820918545}) {};
\node[noeudblack,scale=\localnodescalea] (4545776496) at ({-10.5},{-3.3078388820918545}) {};
\node[noeudblack,scale=\localnodescalea] (4545776328) at ({-8.5},{-2.600732100905307}) {};
\node[noeudblack,scale=\localnodescalea] (4545776608) at ({-9.5},{-3.3078388820918545}) {};
\node[noeudblack,scale=\localnodescalea] (4545776664) at ({-8.5},{-3.3078388820918545}) {};
\node[noeudblack,scale=\localnodescalea] (4545776720) at ({-7.5},{-3.3078388820918545}) {};
\node[noeudblack,scale=\localnodescalea] (4545776552) at ({-4.5},{-2.600732100905307}) {};
\node[noeudblack,scale=\localnodescalea] (4545776832) at ({-6.5},{-3.3078388820918545}) {};
\node[noeudblack,scale=\localnodescalea] (4545776888) at ({-5.5},{-3.3078388820918545}) {};
\node[noeudblack,scale=\localnodescalea] (4545776944) at ({-3.5},{-3.3078388820918545}) {};
\node[noeudblack,scale=\localnodescalea] (4545777056) at ({-4.5},{-3.9765791870682765}) {};
\node[noeudblack,scale=\localnodescalea] (4545777112) at ({-3.5},{-3.9765791870682765}) {};
\node[noeudblack,scale=\localnodescalea] (4545777168) at ({-2.5},{-3.9765791870682765}) {};
\node[noeudblack,scale=\localnodescalea] (4545775936) at ({6.5},{-1.8408964152537146}) {};
\node[noeudblack,scale=\localnodescalea] (4545777000) at ({-1.5},{-2.600732100905307}) {};
\node[noeudblack,scale=\localnodescalea] (4545777224) at ({0.5},{-2.600732100905307}) {};
\node[noeudblack,scale=\localnodescalea] (4545777336) at ({-0.5},{-3.3078388820918545}) {};
\node[noeudblack,scale=\localnodescalea] (4545777392) at ({0.5},{-3.3078388820918545}) {};
\node[noeudblack,scale=\localnodescalea] (4545777448) at ({1.5},{-3.3078388820918545}) {};
\node[noeudblack,scale=\localnodescalea] (4545777280) at ({8.5},{-2.600732100905307}) {};
\node[noeudblack,scale=\localnodescalea] (4545777560) at ({2.5},{-3.3078388820918545}) {};
\node[noeudblack,scale=\localnodescalea] (4545777616) at ({3.5},{-3.3078388820918545}) {};
\node[noeudblack,scale=\localnodescalea] (4545818696) at ({5.5},{-3.3078388820918545}) {};
\node[noeudblack,scale=\localnodescalea] (4545818808) at ({4.5},{-3.9765791870682765}) {};
\node[noeudblack,scale=\localnodescalea] (4545818864) at ({5.5},{-3.9765791870682765}) {};
\node[noeudblack,scale=\localnodescalea] (4545818920) at ({6.5},{-3.9765791870682765}) {};
\node[noeudblack,scale=\localnodescalea] (4545818752) at ({8.5},{-3.3078388820918545}) {};
\node[noeudblack,scale=\localnodescalea] (4545819032) at ({7.5},{-3.9765791870682765}) {};
\node[noeudblack,scale=\localnodescalea] (4545819088) at ({8.5},{-3.9765791870682765}) {};
\node[noeudblack,scale=\localnodescalea] (4545819144) at ({9.5},{-3.9765791870682765}) {};
\node[noeudblack,scale=\localnodescalea] (4545818976) at ({12.5},{-3.3078388820918545}) {};
\node[noeudblack,scale=\localnodescalea] (4545819256) at ({10.5},{-3.9765791870682765}) {};
\node[noeudblack,scale=\localnodescalea] (4545819312) at ({11.5},{-3.9765791870682765}) {};
\node[noeudblack,scale=\localnodescalea] (4545819368) at ({13.5},{-3.9765791870682765}) {};
\node[noeudblack,scale=\localnodescalea] (4545819480) at ({12.5},{-4.615522291314549}) {};
\node[noeudblack,scale=\localnodescalea] (4545819536) at ({13.5},{-4.615522291314549}) {};
\node[noeudblack,scale=\localnodescalea] (4545819592) at ({14.5},{-4.615522291314549}) {};
\node[noeudblack,scale=\localnodescalea] (4545776776) at ({23.5},{-1.8408964152537146}) {};
\node[noeudblack,scale=\localnodescalea] (4545819200) at ({23.5},{-2.600732100905307}) {};
\node[noeudblack,scale=\localnodescalea] (4545819648) at ({15.5},{-3.3078388820918545}) {};
\node[noeudblack,scale=\localnodescalea] (4545819704) at ({17.5},{-3.3078388820918545}) {};
\node[noeudblack,scale=\localnodescalea] (4545819816) at ({16.5},{-3.9765791870682765}) {};
\node[noeudblack,scale=\localnodescalea] (4545819872) at ({17.5},{-3.9765791870682765}) {};
\node[noeudblack,scale=\localnodescalea] (4545819928) at ({18.5},{-3.9765791870682765}) {};
\node[noeudblack,scale=\localnodescalea] (4545819760) at ({25.5},{-3.3078388820918545}) {};
\node[noeudblack,scale=\localnodescalea] (4545820040) at ({19.5},{-3.9765791870682765}) {};
\node[noeudblack,scale=\localnodescalea] (4545820096) at ({20.5},{-3.9765791870682765}) {};
\node[noeudblack,scale=\localnodescalea] (4545820152) at ({22.5},{-3.9765791870682765}) {};
\node[noeudblack,scale=\localnodescalea] (4545820264) at ({21.5},{-4.615522291314549}) {};
\node[noeudblack,scale=\localnodescalea] (4545820320) at ({22.5},{-4.615522291314549}) {};
\node[noeudblack,scale=\localnodescalea] (4545820376) at ({23.5},{-4.615522291314549}) {};
\node[noeudblack,scale=\localnodescalea] (4545820208) at ({25.5},{-3.9765791870682765}) {};
\node[noeudblack,scale=\localnodescalea] (4545820488) at ({24.5},{-4.615522291314549}) {};
\node[noeudblack,scale=\localnodescalea] (4545820544) at ({25.5},{-4.615522291314549}) {};
\node[noeudblack,scale=\localnodescalea] (4545820600) at ({26.5},{-4.615522291314549}) {};
\node[noeudblack,scale=\localnodescalea] (4545820432) at ({29.5},{-3.9765791870682765}) {};
\node[noeudblack,scale=\localnodescalea] (4545820712) at ({27.5},{-4.615522291314549}) {};
\node[noeudblack,scale=\localnodescalea] (4545820768) at ({28.5},{-4.615522291314549}) {};
\node[noeudblack,scale=\localnodescalea] (4545820824) at ({30.5},{-4.615522291314549}) {};
\node[noeudblack,scale=\localnodescalea] (4545820936) at ({29.5},{-5.230310444265813}) {};
\node[noeudblack,scale=\localnodescalea] (4545820992) at ({30.5},{-5.230310444265813}) {};
\node[noeudblack,scale=\localnodescalea] (4545821048) at ({31.5},{-5.230310444265813}) {};
\draw[fleche] (4545774200)--(4545774704) {};
\draw[fleche] (4545774200)--(4545774760) {};
\draw[fleche] (4545774704)--(4545774984) {};
\draw[fleche] (4545774704)--(4545773752) {};
\draw[fleche] (4545774704)--(4545774928) {};
\draw[fleche] (4545774760)--(4545775264) {};
\draw[fleche] (4545774760)--(4545774424) {};
\draw[fleche] (4545774760)--(4545775152) {};
\draw[fleche] (4545774760)--(4545775488) {};
\draw[fleche] (4545774760)--(4545775712) {};
\draw[fleche] (4545774760)--(4545775936) {};
\draw[fleche] (4545774760)--(4545776776) {};
\draw[fleche] (4545775264)--(4545774648) {};
\draw[fleche] (4545775264)--(4545775096) {};
\draw[fleche] (4545775264)--(4545775208) {};
\draw[fleche] (4545774424)--(4545775320) {};
\draw[fleche] (4545774424)--(4545775376) {};
\draw[fleche] (4545774424)--(4545775432) {};
\draw[fleche] (4545775152)--(4545775544) {};
\draw[fleche] (4545775152)--(4545775600) {};
\draw[fleche] (4545775152)--(4545775656) {};
\draw[fleche] (4545775488)--(4545775768) {};
\draw[fleche] (4545775488)--(4545775824) {};
\draw[fleche] (4545775488)--(4545775880) {};
\draw[fleche] (4545775880)--(4545775992) {};
\draw[fleche] (4545775880)--(4545776048) {};
\draw[fleche] (4545775880)--(4545776104) {};
\draw[fleche] (4545775712)--(4545776160) {};
\draw[fleche] (4545775712)--(4545776216) {};
\draw[fleche] (4545775712)--(4545776272) {};
\draw[fleche] (4545775712)--(4545776328) {};
\draw[fleche] (4545775712)--(4545776552) {};
\draw[fleche] (4545776272)--(4545776384) {};
\draw[fleche] (4545776272)--(4545776440) {};
\draw[fleche] (4545776272)--(4545776496) {};
\draw[fleche] (4545776328)--(4545776608) {};
\draw[fleche] (4545776328)--(4545776664) {};
\draw[fleche] (4545776328)--(4545776720) {};
\draw[fleche] (4545776552)--(4545776832) {};
\draw[fleche] (4545776552)--(4545776888) {};
\draw[fleche] (4545776552)--(4545776944) {};
\draw[fleche] (4545776944)--(4545777056) {};
\draw[fleche] (4545776944)--(4545777112) {};
\draw[fleche] (4545776944)--(4545777168) {};
\draw[fleche] (4545775936)--(4545777000) {};
\draw[fleche] (4545775936)--(4545777224) {};
\draw[fleche] (4545775936)--(4545777280) {};
\draw[fleche] (4545777224)--(4545777336) {};
\draw[fleche] (4545777224)--(4545777392) {};
\draw[fleche] (4545777224)--(4545777448) {};
\draw[fleche] (4545777280)--(4545777560) {};
\draw[fleche] (4545777280)--(4545777616) {};
\draw[fleche] (4545777280)--(4545818696) {};
\draw[fleche] (4545777280)--(4545818752) {};
\draw[fleche] (4545777280)--(4545818976) {};
\draw[fleche] (4545818696)--(4545818808) {};
\draw[fleche] (4545818696)--(4545818864) {};
\draw[fleche] (4545818696)--(4545818920) {};
\draw[fleche] (4545818752)--(4545819032) {};
\draw[fleche] (4545818752)--(4545819088) {};
\draw[fleche] (4545818752)--(4545819144) {};
\draw[fleche] (4545818976)--(4545819256) {};
\draw[fleche] (4545818976)--(4545819312) {};
\draw[fleche] (4545818976)--(4545819368) {};
\draw[fleche] (4545819368)--(4545819480) {};
\draw[fleche] (4545819368)--(4545819536) {};
\draw[fleche] (4545819368)--(4545819592) {};
\draw[fleche] (4545776776)--(4545819200) {};
\draw[fleche] (4545819200)--(4545819648) {};
\draw[fleche] (4545819200)--(4545819704) {};
\draw[fleche] (4545819200)--(4545819760) {};
\draw[fleche] (4545819704)--(4545819816) {};
\draw[fleche] (4545819704)--(4545819872) {};
\draw[fleche] (4545819704)--(4545819928) {};
\draw[fleche] (4545819760)--(4545820040) {};
\draw[fleche] (4545819760)--(4545820096) {};
\draw[fleche] (4545819760)--(4545820152) {};
\draw[fleche] (4545819760)--(4545820208) {};
\draw[fleche] (4545819760)--(4545820432) {};
\draw[fleche] (4545820152)--(4545820264) {};
\draw[fleche] (4545820152)--(4545820320) {};
\draw[fleche] (4545820152)--(4545820376) {};
\draw[fleche] (4545820208)--(4545820488) {};
\draw[fleche] (4545820208)--(4545820544) {};
\draw[fleche] (4545820208)--(4545820600) {};
\draw[fleche] (4545820432)--(4545820712) {};
\draw[fleche] (4545820432)--(4545820768) {};
\draw[fleche] (4545820432)--(4545820824) {};
\draw[fleche] (4545820824)--(4545820936) {};
\draw[fleche] (4545820824)--(4545820992) {};
\draw[fleche] (4545820824)--(4545821048) {};
\end{tikzpicture}
\caption{The NEST of the tree of Figure~\ref{fig:exav} (top left). It has $92$ vertices, which corresponds to an increase of $84\%$ of the size of the initial tree.}
\label{fig:exnest}
\end{figure}
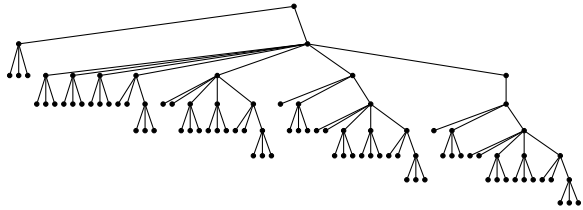

\begin{figure}[h]
\centering
\includegraphics[width=8.5cm]{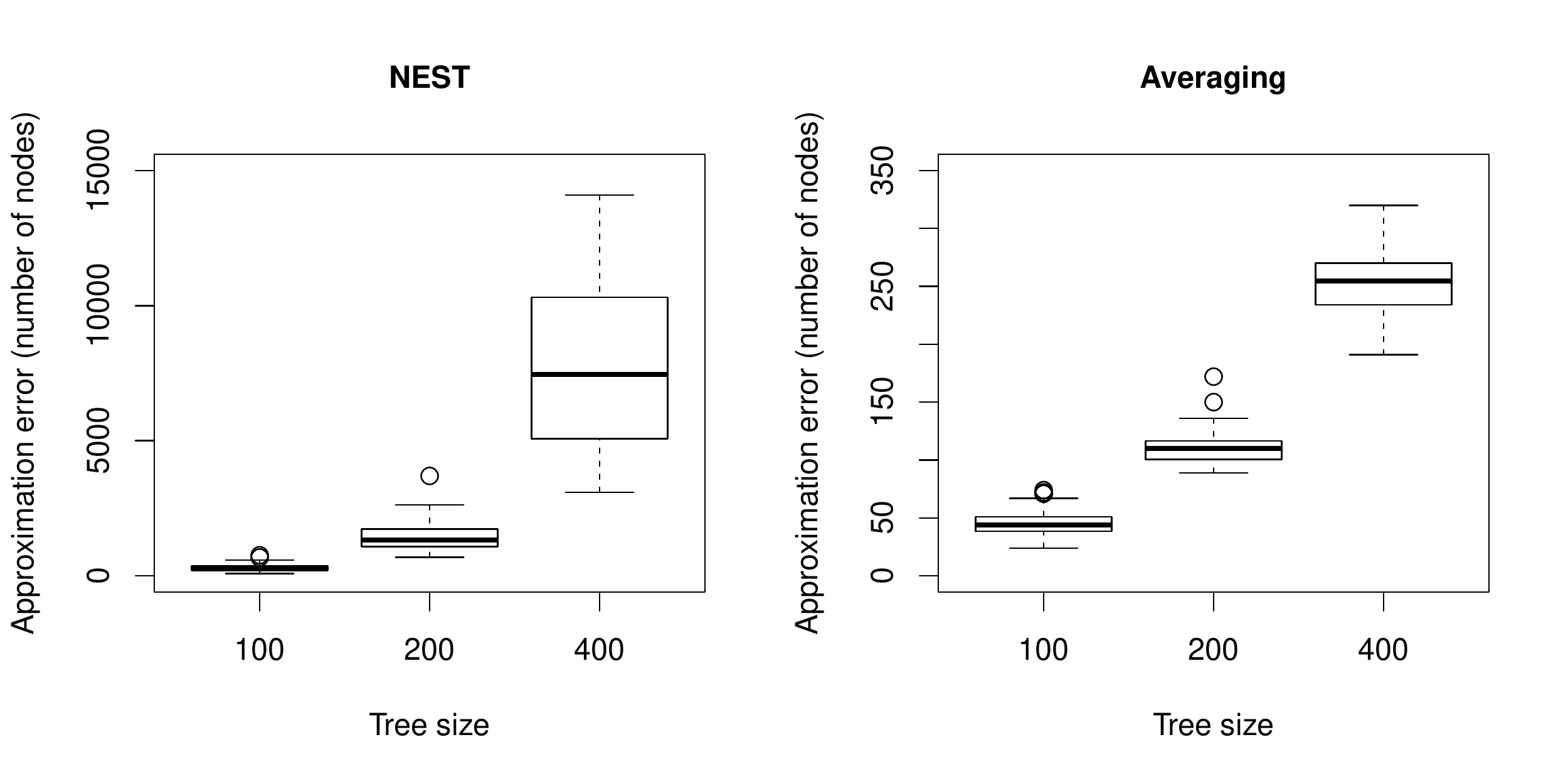}
\caption{Comparison between NEST approximation algorithm (left) and the averaging method proposed in this paper (right) in terms of approximation error.}
\label{fig:compnesterr}
\end{figure}

\begin{figure}[h]
\centering
\includegraphics[width=8.5cm]{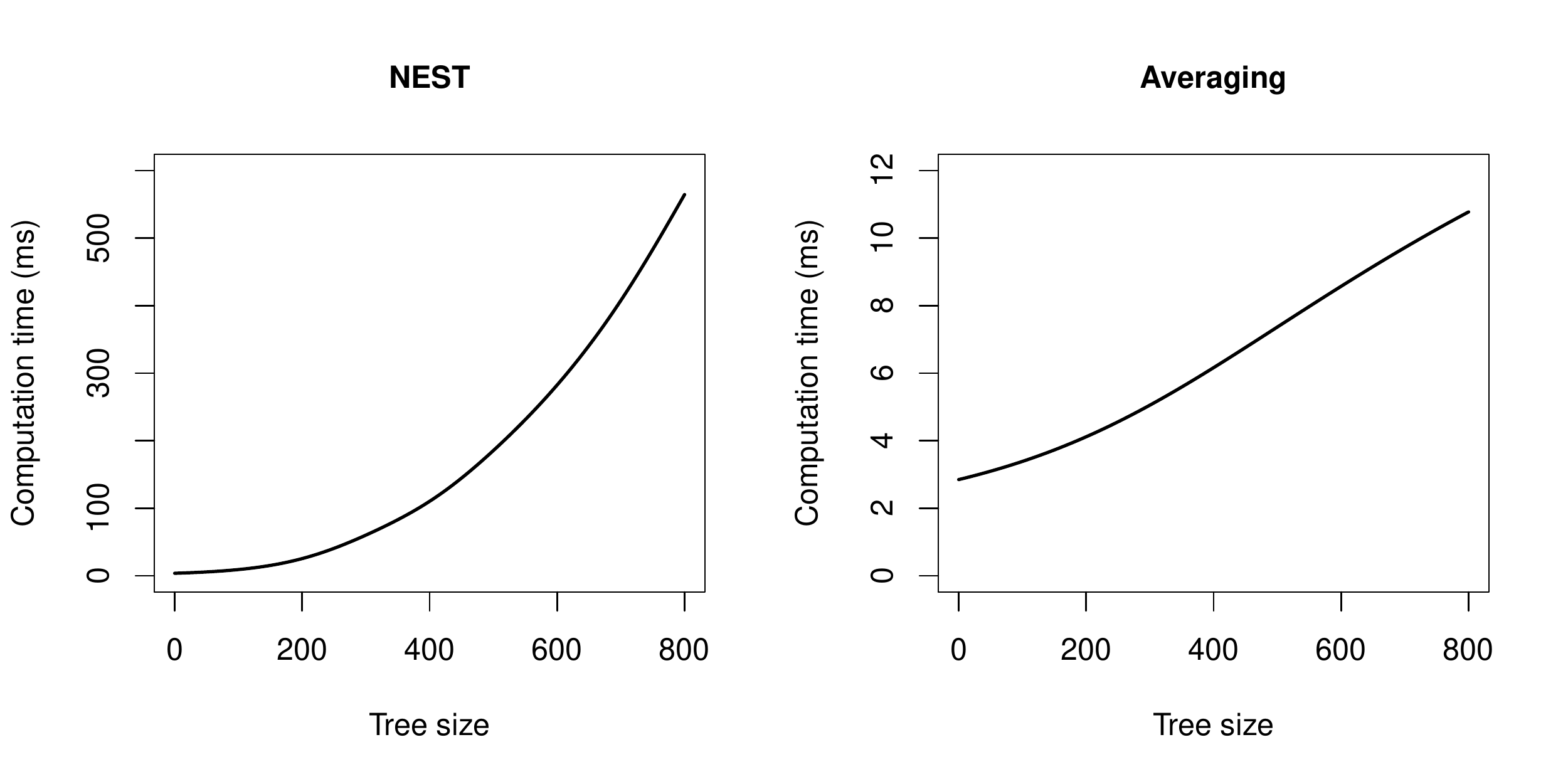}
\caption{Comparison between NEST approximation algorithm (left) and the averaging method proposed in this paper (right) in terms of computation time.}
\label{fig:compnesttime}
\end{figure}

\subsection{Error upper bound in approximation}

We establish the optimal bound of the approximation error of a tree by a self-nested one in terms of edit distance $\delta$ in the following result.

\begin{proposition}
\label{prop:wc}
For any $H\geq2$ and $d$ large enough (greater than a constant depending on $H$),
$$ \max_{t\in\mathbb{T}_{\leq H,\leq d}} \min_{\tau\in\mathbb{T}^{sn}} \delta(t,\tau) = \left\lfloor\frac{d}{2}\right\rfloor \times \left\lceil \frac{d}{2}\right\rceil \times d^{H-2} .$$
In addition, this worst case is reached for the nonlinear DAG of Figure~\ref{fig:wc} (left).
\end{proposition}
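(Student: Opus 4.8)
The identity to prove is an equality, so the plan is to prove a matching upper bound and lower bound; write $f(H,d):=\max_{t\in\mathbb{T}_{\leq H,\leq d}}\min_{\tau\in\mathbb{T}^{sn}}\delta(t,\tau)$ for the quantity to be computed. A preliminary reduction handles trees of height $<H$: since $\mathbb{T}_{\leq H',\leq d}\subseteq\mathbb{T}_{\leq H,\leq d}$ and the claimed value is increasing in the height, it suffices to treat $t$ of height exactly $H$, and we argue by induction on $H$. Throughout I would use the following description of the (leaf-insertion/leaf-deletion) distance, in the spirit of the distance lemma: $\delta(t_1,t_2)=\#t_1+\#t_2-2\,c(t_1,t_2)$, where $c(t_1,t_2)$ is the largest number of vertices of a tree embeddable top-down (reachable by iterated leaf deletions) into both $t_1$ and $t_2$; equivalently, the optimal script deletes $t_1$ down to a common subtree and then inserts up to $t_2$. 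This turns the statement into combinatorics about self-nested sub- and super-trees.

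\textbf{Upper bound.} Given $t$ of height $H$, I would build a self-nested $\tau$ close to it by canonicalising $t$ bottom-up: for $h=1,\dots,H$ replace every subtree of height $h$ by a single representative $S_h$. Once $S_1,\dots,S_{h-1}$ are fixed, a height-$h$ subtree of the current tree is encoded by its \emph{recipe} $(m_0,\dots,m_{h-1})$ (the numbers of children of each smaller height; $m_{h-1}\geq1$, $\sum_j m_j\leq d$), and editing it into $S_h$ costs $\sum_j|m_j-\widehat m_j|\,\#S_j$ with $(\widehat m_j)$ the recipe of $S_h$ (using leaf insertions and deletions); one may also simply \emph{delete} a height-$h$ subtree at cost $\#S_h$ when its parent can spare the child, which is the cheaper move when that subtree is small and the target large. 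The representatives and the edit/delete choices are picked to minimise the \emph{worst}-case total; at the bottom level this is a weighted-median discrete optimisation whose optimum sits at $\lfloor d/2\rfloor$, which is where the factor $\lfloor d/2\rfloor\lceil d/2\rceil$ is born, and the level-by-level bookkeeping --- using that there are at most $d^{H-h}$ vertices of height $h$ and that fixing a level only shrinks the tree --- should collapse into the recursion $f(H,d)\leq d\,f(H-1,d)$, with base case $f(2,d)=\lfloor d/2\rfloor\lceil d/2\rceil$ coming from the height-$2$ analysis (root, stars and leaves, solved by the same median argument). Unrolling gives $f(H,d)=\lfloor d/2\rfloor\lceil d/2\rceil d^{H-2}$; the hypothesis that $d$ is large enough depending on $H$ is what makes every edit-versus-delete comparison, and the $\pm1$ boundary terms in deletion costs, resolve on the side used in the generic computation.

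\textbf{Lower bound and the extremal tree.} Let $t^\star$ be the tree whose DAG is Figure~\ref{fig:wc} (left): its root has $\lceil d/2\rceil$ copies of the self-nested ``chain'' tree of height $H-1$ whose bottom stars carry $\lfloor d/2\rfloor$ leaves, and $\lfloor d/2\rfloor$ copies of the chain tree whose bottom stars carry $d$ leaves. A matching construction is explicit: trim each $d$-leaf bottom star down to $\lfloor d/2\rfloor$ leaves, i.e.\ delete $\lceil d/2\rceil$ leaves from each of the $\lfloor d/2\rfloor\,d^{H-2}$ bottom stars of the heavy copies; the result is self-nested and is a top-down subtree of $t^\star$, hence lies at distance exactly $\lfloor d/2\rfloor\lceil d/2\rceil d^{H-2}$. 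For the lower bound let $\tau\in\mathbb{T}^{sn}$ be arbitrary; one first reduces to $\height(\tau)=H$ (otherwise a whole level must be created or destroyed, which is strictly costlier for $d$ large), so $\tau$ has a single height-$1$ subtree $K_{1,k}$. Using the characterisation, $\delta(t^\star,\tau)=\#t^\star+\#\tau-2\,c(t^\star,\tau)$, and a largest common top-down subtree keeps at most $k$ leaves under each height-$1$ vertex on the $t^\star$ side while being compatible with $K_{1,k}$ on the $\tau$ side; bounding $c(t^\star,\tau)$ this way and minimising over $k$ --- once more a weighted-median computation, now with weights $\lceil d/2\rceil d^{H-2}$ at the star $K_{1,\lfloor d/2\rfloor}$ and $\lfloor d/2\rfloor d^{H-2}$ at the star $K_{1,d}$ --- leaves exactly the deficit $\lfloor d/2\rfloor\lceil d/2\rceil d^{H-2}$, matching the upper bound.

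\textbf{Main obstacle.} The delicate point is the constant in the upper bound: a naive bottom-up canonicalisation overpays by a factor growing with $H$, so one must combine leaf insertions, leaf deletions and whole-subtree deletions in a carefully amortised way (or reformulate globally through largest common top-down self-nested subtrees) to obtain the clean recursion $f(H,d)\leq d\,f(H-1,d)$ --- in particular, to rule out trees whose heterogeneity at several heights could compound. The lower-bound direction is comparatively routine once the identity $\delta(t_1,t_2)=\#t_1+\#t_2-2\,c(t_1,t_2)$ is in hand; a secondary task is to establish that identity (optimality of deleting down to a common subtree and then inserting) if it is not already contained in the distance lemma.
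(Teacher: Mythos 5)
Your lower bound is sound and coincides with the paper's: the witness tree $t^\star$ (root with $\lceil d/2\rceil$ light branches and $\lfloor d/2\rfloor$ heavy branches, full $d$-ary everywhere above the fringe) and the weighted-median computation with weights $\lceil d/2\rceil d^{H-2}$ and $\lfloor d/2\rfloor d^{H-2}$ are exactly the paper's cost $C_H=d^{H-2}\bigl[\sum_{k\in A}p_kl_k+\sum_{k\notin A}p_k|l_k-x|\bigr]$ evaluated at $p_1=\lceil d/2\rceil$, $l_1=\lfloor d/2\rfloor$, $p_2=\lfloor d/2\rfloor$, $l_2=d$, $A=\emptyset$. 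Your reformulation $\delta(t_1,t_2)=\#t_1+\#t_2-2c(t_1,t_2)$ is legitimate given the constrained-mapping characterization in the appendix (mapped vertices form parent-closed prefixes of both trees), although you would still need to control mappings that send a bottom star of $t^\star$ to a vertex of $\tau$ at a different height before the ``at most $k$ leaves per star'' accounting closes.

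The genuine gap is the upper bound. The recursion $f(H,d)\le d\,f(H-1,d)$ is asserted but never derived, and neither mechanism you gesture at yields it: approximating each of the root's subtrees independently by its nearest self-nested tree does not produce a self-nested tree (the approximants of equal height need not be isomorphic to one another, so there is no common target), while the global bottom-up canonicalization overpays by a factor growing with $H$, as you yourself note. Since the claimed value satisfies $d\cdot f(H-1,d)=f(H,d)$ exactly, there is no slack to absorb a lossy recursion --- the entire content of the proposition sits in this one step, which you flag as ``the main obstacle'' but do not resolve. The paper avoids the recursion altogether: it argues that, for $d$ large enough, any maximizer of $\min_{\tau\in\mathbb{T}^{sn}}\delta(\cdot,\tau)$ must be full $d$-ary above the fringe, because an edit at distance $h$ from the root is amplified by the multiplicities into a cost of order $p_k\,d^{h-1}$, forcing all heterogeneity down to the bottom level; the general-$H$ problem then literally becomes the height-$2$ problem scaled by the $d^{H-2}$ bottom stars per branch, and the height-$2$ case is settled by the same median analysis you use. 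To complete your proof you would need either to establish that structural claim about the maximizer or to exhibit a non-lossy global construction valid for every tree, in particular for trees heterogeneous at several heights simultaneously.
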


\begin{figure}
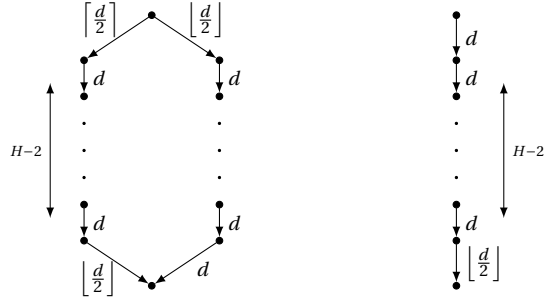

\centering
\dagTcal
\caption{Nonlinear DAG of height $H$ and outdegree $d$ (left) for which the best linear DAG approximation (right) achieves the worst error in terms of edit distance $\delta$.}
\label{fig:wc}
\end{figure}

The diameter of the state space $\mathbb{T}_{\leq H,\leq d}$ is of order $d^H$ (indeed, the largest tree of this family is the full $d$-tree, while the smallest tree is reduced to a unique root vertex). As a consequence and in light of Proposition~\ref{prop:wc}, the largest area without any self-nested tree is a ball with relative radius
\begin{eqnarray*}
\frac{\left\lfloor\frac{d}{2}\right\rfloor \times \left\lceil \frac{d}{2}\right\rceil \times d^{H-2}}{d^H} &=&\frac{\left\lfloor\frac{d}{2}\right\rfloor \times \left\lceil \frac{d}{2}\right\rceil}{d^2} \\
&=& \frac{1}{4}+\frac{1}{4 d^2}\mathbb{1}_{2\mathbb{N}+1}(d) ~ \simeq ~ \frac{1}{4}.
\end{eqnarray*}
This result is especially noteworthy considering the very low frequency of self-nested trees compared to unordered trees (see Table~\ref{tableau1} and Proposition~\ref{prop2}). Nevertheless, we would like to emphasize that this negative result (exponential upper bound of the error) is far from representing the average behavior of the averaging algorithm presented in Figure~\ref{fig:compnesterr} (right).


\section{Fast prediction of edit distance}
\label{s:fastpred}

This section is dedicated to an example of application of the averaging algorithm: we show that it can be used for fast prediction of edit distance between two trees. Given two trees $\tau_1$ and $\tau_2$, we propose to estimate $\delta(\tau_1,\tau_2)$ by $\delta(\widehat{\tau}_1,\widehat{\tau}_2)$, where $\widehat{\tau}_i$ is the self-nested approximation of $\tau_i$. Including the approximation step, computing the edit distance between $\widehat{\tau}_1$ and $\widehat{\tau}_2$ is on average $10$ times faster than computing $\delta(\tau_1,\tau_2)$ from trees with $1\,000$ vertices (see Figure~\ref{fig:comptime_approx} for more details on computation times), which represents a significant gain even for trees of reasonable size.

\begin{figure}[h]
\centering
\includegraphics[width=8.5cm]{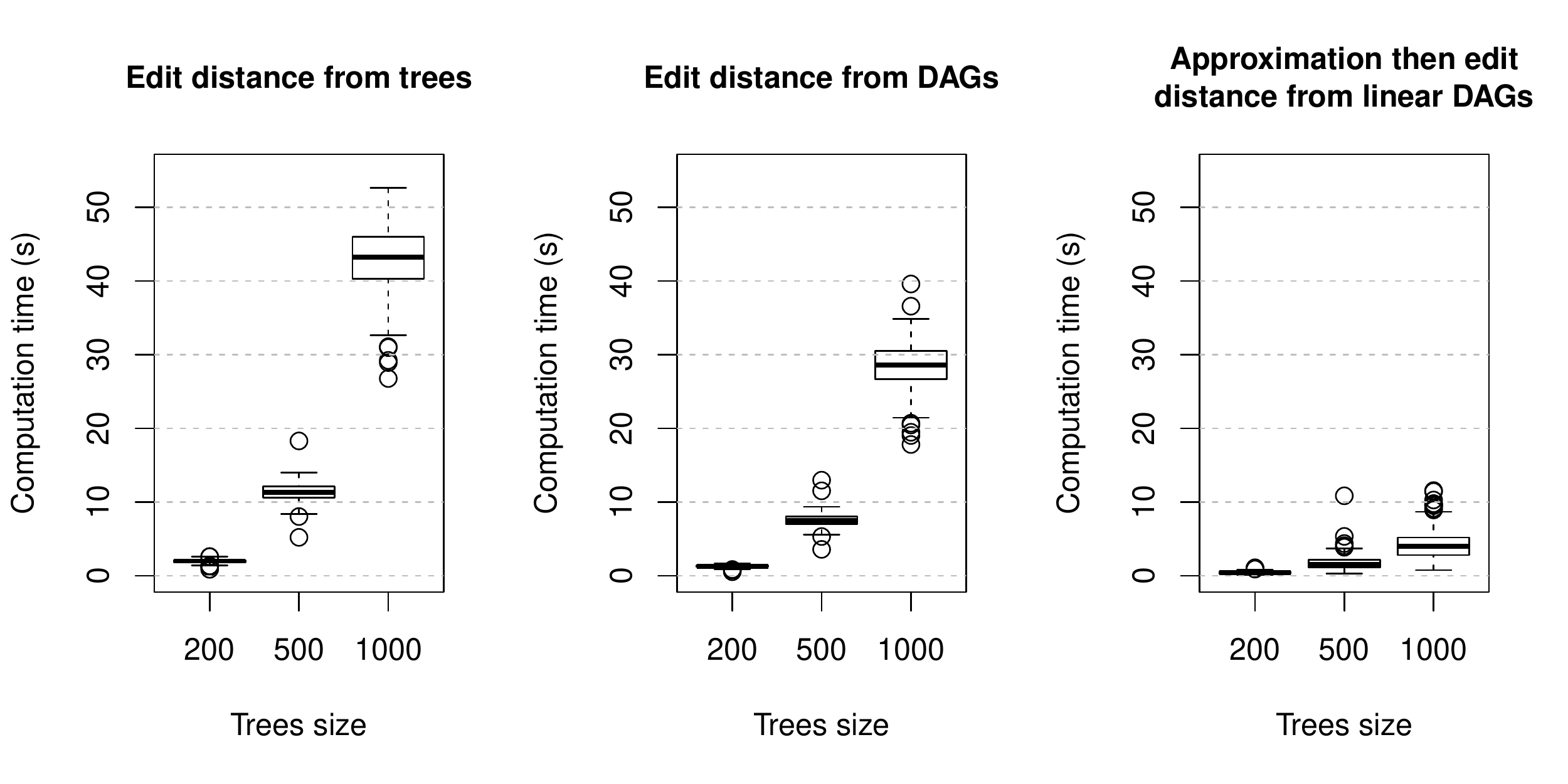}
\caption{Estimation of the computation time of the edit distance between two trees from the trees (left), from their DAG reduction (middle), and from their linear DAG approximation (including computation of the approximations, right).}
\label{fig:comptime_approx}
\end{figure}

Nevertheless, $\delta(\widehat{\tau}_1,\widehat{\tau}_2)$ does not always provide a good estimate of $\delta(\tau_1,\tau_2)$: on average, the relative error is of order $-50\%$ (see Figure~\ref{fig:learning} (left)), i.e., $\delta(\widehat{\tau}_1,\widehat{\tau}_2)$ strongly underestimates $\delta(\tau_1,\tau_2)$. To improve this relative error, we correct the predictor via a learning algorithm. First, we generate from random trees $(t_1,t_2)$ a training dataset containing $2\,500$ replicates of $\delta(t_1,t_2)$ and $\delta(\widehat{t}_1,\widehat{t}_2)$. Then we learn a prediction rule of $\delta(t_1,t_2)$ from $\delta(\widehat{t}_1,\widehat{t}_2)$ with a linear model as implemented in the \verb+lm+ function in \verb+R+ \cite{chambers:1992:chap4}. Given two trees $\tau_1$ and $\tau_2$, we first compute $\delta(\widehat{\tau}_1,\widehat{\tau}_2)$ and then we predict $\delta(\tau_1,\tau_2)$ with the aforementioned prediction rule. The results on a test dataset of size $1\,500$ are presented in Figure~\ref{fig:learning} (middle). The error is null on average but presents a large variance making the prediction not reliable.

This can be corrected by adding explanatory variables to the learning step. In the learning dataset, we add the following features: size, height, outdegree and Strahler number of $t_i$ and $\widehat{t}_i$. It should be noted that these quantities can be computed without adding any computation time to the whole procedure during the computation of the DAG reductions. Thus, this does not affect the speed of our prediction algorithm. The results are presented in Figure~\ref{fig:learning} (right). The error is null on average with a small variance: in $50\%$ of cases, the prediction error is between $-6.5\%$ and $6.9\%$. The most significant variables ($p$-values less than $2.10^{-16}$) are $\delta(\widehat{t}_1,\widehat{t}_2)$ and the sizes of the 4 trees. In addition, the prediction rule learned from the same training dataset without $\delta(\widehat{t}_1,\widehat{t}_2)$ and the additional informations on the self-nested approximations (size, height, outdegree and Strahler number) achieve worse results in $66.9\%$ of cases. This shows that self-nested approximations can be used to obtain a fast and accurate prediction of the edit distance between two trees.

\begin{figure}[h]
\centering
\includegraphics[width=8.5cm]{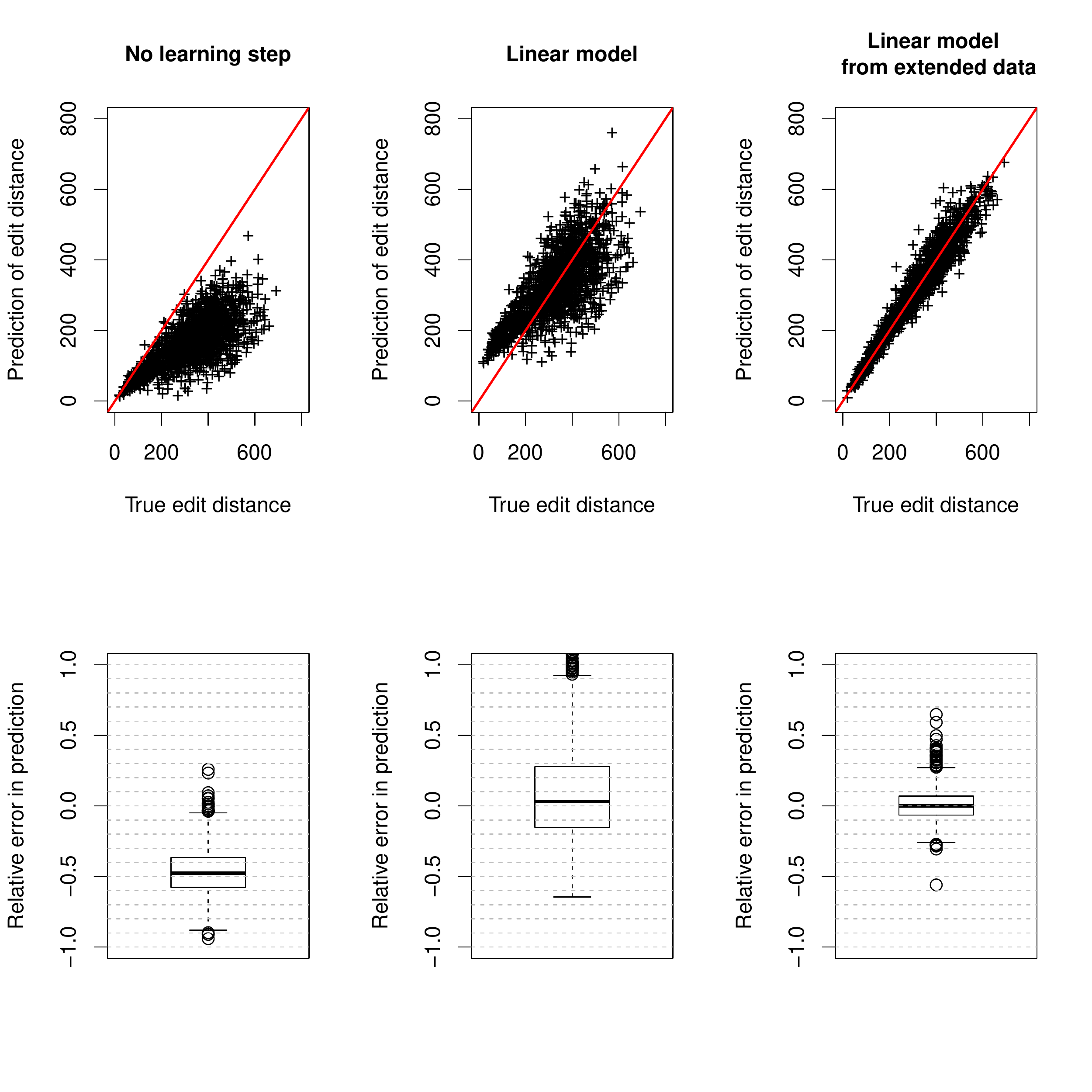}
\caption{Prediction (top) and relative error in prediction (bottom) of the edit distance between two trees from their li\-ne\-ar DAG approximations (left), from a linear model learned with edit distances between li\-near DAG approximations (middle), and from a linear model learned with edit distances between linear DAG approximations and additional informations on the trees (right).}
\label{fig:learning}
\end{figure}

\paragraph{Acknowledgments}  The authors thank the reviewers for their careful reading of the manuscript and their insightful comments. R.A. and C.G. were supported by the European Union's Horizon 2020 project ROMI.


\newpage 
\bibliographystyle{acm}
\bibliography{main}

\newpage
\appendix

\begin{center}\large Appendix to the paper\\
\Large Approximation of trees by self-nested trees
\end{center}


\begin{center}
Romain Aza\"{\i}s,
Jean-Baptiste Durand, and Christophe Godin
\end{center}

This supplementary file contains the proofs of all the original theoretical results presented in the main paper.


\section{Proof of Proposition~\ref{prop:bu:compl}}

Computing a bottom-up function requires to traverse the tree $\tau$ from the leaves to the root in $O(\#\tau)$. From the DAG, we have, with the notation of Subsection \ref{ss:treereduction},
\begin{equation}
f\big( \tree(D[ (h_1, i )] ) \big) = \Phi\Bigg(\,\Big(\underbrace{f(\tree(D[(h_2,j)] ))}_{N((h_1,i),(h_2,j))\,\text{times}}\Big)_{(h_2,j)}\, \Bigg) .
\label{eq:recur:dag}
\end{equation}
It should be noted that \eqref{eq:recur:dag} holds only if $\Phi$ is invariant under permutation of arguments. In addition,
$$ \sum_{h_2=0}^{h_1-1}\sum_{j=1}^{M_{h_2}} N((h_1,i),(h_2,j)) \leq \deg(\tau).$$
As a consequence, computing $f(\tau)$ from its DAG reduction requires to traverse the DAG with at most $\deg(\tau)$ operations on each vertex, which states the result.


\section{Proof of Proposition \ref{prop:delta:complexity}}

\subsection{Preliminaries}

\begin{lemma}\label{prop:delta:dist}
$\delta$ defines a distance function on the space of unordered rooted trees $\mathbb{T}$.
\end{lemma}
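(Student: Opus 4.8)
The plan is to verify the four metric axioms directly, the only real subtlety being that the edit operations act on concrete vertex-labelled trees whereas $\delta$ is defined on isomorphism classes. First I would record that $\delta$ is well defined and finite: if $\varphi$ is a tree isomorphism from $\tau_1$ to $\tau_1'$ and $s$ is an edit script on $\tau_1$, then relabelling every referenced vertex through $\varphi$ (and carrying $\varphi$ along the intermediate trees) produces a script $s'$ on $\tau_1'$ with $\#s'=\#s$ and $(\tau_1')^{s'}\equiv\tau_1^{s}$; likewise the constraint $\tau_1^s\equiv\tau_2$ is insensitive to the chosen representative of $\tau_2$, so $\delta$ descends to isomorphism classes. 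Finiteness follows by an explicit construction: delete the non-root leaves of $\tau_1$ one at a time (a tree with at least two vertices always has a leaf distinct from the root) down to the single root, then insert the vertices of $\tau_2$, giving $\delta(\tau_1,\tau_2)\leq\#\tau_1+\#\tau_2-2<\infty$. Non-negativity is immediate since $\delta$ is a minimum of cardinalities, and $\delta(\tau_1,\tau_2)=0$ holds if and only if the minimizing script is empty, i.e. if and only if $\tau_1=\tau_1^{\varnothing}\equiv\tau_2$.

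For symmetry, the key observation is that each elementary operation admits an inverse elementary operation of the same unit cost: deleting a leaf $l$ that is a child of $v$ is undone by inserting a leaf among the children of $v$, and inserting a leaf is undone by deleting it. Hence, given a script $s=(o_1,\dots,o_k)$ taking $\tau_1=\tau^{(0)}$ through $\tau^{(1)},\dots,\tau^{(k)}$ with an isomorphism $\psi:\tau^{(k)}\to\tau_2$, the reversed script $(o_k^{-1},\dots,o_1^{-1})$ applied to $\tau^{(k)}$ returns $\tau^{(0)}=\tau_1$; transporting this reversed script through $\psi$ as in the well-definedness step yields a script on $\tau_2$ of length $k$ whose result is isomorphic to $\tau_1$. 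Therefore $\delta(\tau_2,\tau_1)\leq\delta(\tau_1,\tau_2)$, and exchanging the roles of $\tau_1$ and $\tau_2$ gives equality.

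For the triangle inequality, take scripts $s_{12}$ and $s_{23}$ realising $\delta(\tau_1,\tau_2)$ and $\delta(\tau_2,\tau_3)$, with $\tau_1^{s_{12}}=\tau_2'\equiv\tau_2$; transporting $s_{23}$ through an isomorphism $\tau_2\to\tau_2'$ gives a script $s_{23}'$ on $\tau_2'$ with $\#s_{23}'=\#s_{23}$ and $(\tau_2')^{s_{23}'}\equiv\tau_3$, so the concatenation $s_{12}s_{23}'$ transforms $\tau_1$ into a tree isomorphic to $\tau_3$ and has length $\#s_{12}+\#s_{23}=\delta(\tau_1,\tau_2)+\delta(\tau_2,\tau_3)$, whence $\delta(\tau_1,\tau_3)\leq\delta(\tau_1,\tau_2)+\delta(\tau_2,\tau_3)$. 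I expect no deep difficulty in any of these steps; the only thing requiring genuine care is the bookkeeping behind "transporting a script through an isomorphism'' and "reversing a script'', i.e. making precise that edit moves on concrete trees descend to well-defined moves on isomorphism classes. Once that formalism is set up cleanly, the four axioms follow exactly as above.
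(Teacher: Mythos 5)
Your proof is correct and follows essentially the same route as the paper's: coincidence via the empty script, symmetry via reversing the edit operations, and the triangle inequality via concatenation of scripts. The additional care you take over well-definedness on isomorphism classes and finiteness is sound bookkeeping that the paper leaves implicit, but it does not change the argument.
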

\begin{proof}
The separation axiom is obviously satisfied by $\delta$ because of its definition as a cardinality. In addition, $\tau_1\equiv\tau_2$ if and only if the empty script $\emptyset$ satisfies $\tau_1^\emptyset\equiv\tau_2$, so that the coincidence axiom is checked. Symmetry is obvious by applying in the reverse order the reverse operations of a script $s$. Finally, if $s$ ($\sigma$, respectively) denotes a minimum-length script to transform $\tau_1$ into $\tau_2$ ($\tau_2$ into $\tau_3$, respectively), the script $s\sigma$ obtained as the concatenation of both these scripts transforms $\tau_1$ into $\tau_3$. The triangle inequality is thus satisfied,
$$\delta(\tau_1,\tau_3)\leq\#(s\sigma)=\#s+\#\sigma=\delta(\tau_1,\tau_2)+\delta(\tau_2,\tau_3),$$
which yields the expected result.
\end{proof}

Here we address the issue of equivalence between edit distance and tree mapping cost using the particular edit distance $\delta$. Such equivalence has been discussed in \cite{doi:10.1142/S0218001488000157,Zhang:1996:CED} in the context of other edit distances.

\paragraph{Mapping} Let $\tau_1$ and $\tau_2$ be two trees. Suppose that we have a numbering of the vertices for each tree. Since we are concerned with unordered trees, we can fix an arbitrary order for each of the vertex in the tree and then use left-to-right postorder numbering or left-to-right preorder numbering. A mapping $\mathcal{M}$ from $\tau_1$ to $\tau_2$ is a set of couples $i\to j$, $1\leq i\leq\#\tau_1$ and $1\leq j\leq\#\tau_2$, satisfying (see \cite[2.3.2 Editing Distance Mappings]{Zhang:1996:CED}), for any $i_1\to j_1$ and $i_2\to j_2$ in $\mathcal{M}$, the following assumptions:
\begin{itemize}
\item $i_1=i_2$ if and only if $j_1=j_2$;
\item vertex $i_1$ in $\tau_1$ is an ancestor of vertex $i_2$ in $\tau_1$ if and only if vertex $j_1$ in $\tau_2$ is an ancestor of vertex $j_2$ in $\tau_2$.
\end{itemize}

\paragraph{Constrained tree mapping} Let $\tau_1$ and $\tau_2$ be two trees, $s$ be a script such that $\tau_1^s\equiv\tau_2$ and $\varphi$ a tree isomorphism between $\tau_1^s$ and $\tau_2$. The graph $\tau_1\cap\tau_1^s$ defines a tree embedded in $\tau_1$ because script $s$ only added and deleted leaves. As a consequence, the function $\widehat{\varphi}$ defined as $\varphi$ restricted to $\tau_1\cap\tau_1^s$ provides a tree mapping from $\tau_1$ to $\tau_2$ with $i\to j$ if and only if $\widehat{\varphi}(i)=j$. Of course, this is a particular tree mapping since it has been obtained from very special conditions. The main additional condition is the following: for any $i_1\to j_1$ and $i_2\to j_2$,
\begin{itemize}
\item vertex $i_1$ is the parent of vertex $i_2$ in $\tau_1$ if and only if vertex $j_1$ is the parent of vertex $j_2$ in $\tau_2$.
\end{itemize}
It is easy to see that this assumption is actually the only required additional constraint to define the class of constrained tree mappings involved in the computation of our constrained edit distance $\delta$. The equivalence between constrained mappings and $\delta$ may be stated as follows:
$$
\delta(\tau_1,\tau_2) = \min_{\mathcal{M}} \Big[ \#\{i\,:\,\nexists\,j~\text{s.t.}~i\to j\in\mathcal{M}\} + \#\{j\,:\,\nexists\,i~\text{s.t.}~i\to j\in\mathcal{M}\} \Big],
$$
where the minimum is taken over the set of all the possible constrained tree mappings. The equivalence between tree mapping cost and edit distance is a classical property used in the computation of the edit distance.

The mappings involved in our constrained edit distance have other properties related to some previous works of the literature. We present additional definitions, namely, the least common ancestor of two vertices and the constrained mappings presented in \cite{Zhang:1996:CED}.

\paragraph{Least common ancestor} The Least Common Ancestor (LCA) of two vertices $v$ and $w$ in a same tree is the lowest (i.e., least height) vertex that has both $v$ and $w$ as descendants. In other words, the LCA is the shared ancestor that is located farthest from the root. It should be noted that if $v$ is a descendant of $w$, $w$ is the LCA. 

\paragraph{Constrained mapping in Zhang's distance} Tanaka and Tanaka proposed in \cite{doi:10.1142/S0218001488000157} the following condition for mapping ordered labeled trees: disjoint subtrees should be mapped to disjoint subtrees. They showed that in some applications (e.g., classification tree comparison) this kind of mapping is more meaningful than more general edit distance mappings. Zhang investigated in \cite{Zhang:1996:CED} the problem of computing the edit distance associated with this kind of constrained mapping between unordered labeled trees. Precisely, a constrained mapping $\mathcal{M}$ between trees $\tau_1$ and $\tau_2$ in sense of Zhang is a mapping satisfying the additional condition \cite[3.1. Constrained Edit Distance Mappings]{Zhang:1996:CED}:
\begin{itemize}
\item Assume that $i_1\to j_1$, $i_2\to j_2$ and $i_3\to j_3$ are in $\mathcal{M}$. Let $v$ ($w$, respectively) be the LCA of vertices $i_1$ and $i_2$ in $\tau_1$ (of vertices $j_1$ and $j_2$ in $\tau_2$, respectively). $v$ is a proper ancestor of vertex $i_3$ in $\tau_1$ if and only if $w$ is a proper ancestor of vertex $j_3$ in $\tau_2$.
\end{itemize}

Let $\tau_1$ and $\tau_2$ be two trees and $\mathcal{M}$ a constrained tree mapping in sense of this paper. First, one may remark that the roots are necessarily mapped together. In addition, $\mathcal{M}$ satisfies all the conditions of constrained mappings imposed by Zhang in \cite{Zhang:1996:CED} and presented above.

\subsection{Reduction to the minimum cost flow problem}
\label{ss:mincostflow}

The edit distance between two trees $\tau_1$ and $\tau_2$ may be obtained from the recursive formula presented in Proposition~\ref{prop:calcul:ed} hereafter. In the sequel, $\mathcal{F}_t$ denotes the forest of all the subtrees of $t$ which root is a child of $\roottree(t)$, i.e., $\mathcal{F}_t$ is the list of the subtrees of $t$ that can be found just under its root. Furthermore, $S(n)$ denotes the set of permutations of $\{1,\dots,n\}$ and ${\binom{A}{n}}$ the set of subsets of $A$ with cardinality $n$.

\begin{proposition}\label{prop:calcul:ed}
Let $\tau_1$ and $\tau_2$ be two trees and $n=\min(\#\mathcal{F}_{\tau_1},\#\mathcal{F}_{\tau_2})$. The edit distance between $\tau_1$ and $\tau_2$ satisfies the following induction formula,
$$
\delta(\tau_1,\tau_2) = \min_{\{t_1^1,\dots,t_n^1\} \atop {\{t_1^2,\dots,t_n^2\} \atop \sigma}} \Bigg[\sum_{i=1}^n\delta(t_i^1,t_{\sigma(i)}^2)+\!\!\!\!\sum_{\theta\notin(t_1^1,\dots,t_n^1)}\!\!\!\delta(\theta,\emptyset) +\!\!\!\!\sum_{\theta\notin(t_1^2,\dots,t_n^2)}\!\!\!\delta(\emptyset,\theta)\Bigg],$$
where the minimum is taken over $\{t_1^1,\dots,t_n^1\}\in{\binom{\mathcal{F}_{T_1}}{n}}$, $\{t_1^2,\dots,t_n^2\}\in{\binom{\mathcal{F}_{T_2}}{n}}$ and $\sigma\in S(n)$, and the symbol $\emptyset$ stands for the empty tree. The formula is initialized with
$$\delta(\tau_1,\emptyset) = \# \tau_1\qquad\text{and}\qquad\delta(\emptyset,\tau_2)=\# \tau_2.$$
\end{proposition}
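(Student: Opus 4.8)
The plan is to prove the identity by establishing separately the inequalities $\le$ and $\ge$ between $\delta(\tau_1,\tau_2)$ and the right-hand side, working directly with edit scripts (the equivalence with constrained tree mappings recalled in the preliminaries could be used instead, but is not needed here). Throughout, $\tau_1$ and $\tau_2$ are non-empty, the empty tree $\emptyset$ occurring only as the target of the recursion. The initialization is immediate: a script turning $\tau$ into $\emptyset$ must delete each of the $\#\tau$ vertices of $\tau$, and deleting the leaves of $\tau$ from the bottom up does so in exactly $\#\tau$ operations, so $\delta(\tau,\emptyset)=\#\tau$; the same holds for $\delta(\emptyset,\tau)=\#\tau$ by reading the script backwards.

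For the inequality $\le$, I would take an arbitrary admissible choice in the minimum --- two $n$-element subsets $\{t_1^1,\dots,t_n^1\}\subseteq\mathcal{F}_{\tau_1}$ and $\{t_1^2,\dots,t_n^2\}\subseteq\mathcal{F}_{\tau_2}$ and a permutation $\sigma\in S(n)$ --- and build a script $s$ with $\tau_1^s\equiv\tau_2$ by concatenating: for each $i$, an optimal script transforming $t_i^1$ into a tree isomorphic to $t_{\sigma(i)}^2$ (these act inside the pairwise disjoint child-subtrees $t_i^1$ of $\tau_1$); then, for each $\theta\in\mathcal{F}_{\tau_1}\setminus\{t_1^1,\dots,t_n^1\}$, the $\#\theta$ deletions removing $\theta$ entirely; and finally, for each $\theta\in\mathcal{F}_{\tau_2}\setminus\{t_1^2,\dots,t_n^2\}$, the $\#\theta$ insertions building $\theta$ entirely under the root. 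After $s$, the root of $\tau_1$ carries exactly the multiset of child-subtrees $\{t_{\sigma(i)}^2\}_i\cup(\mathcal{F}_{\tau_2}\setminus\{t_j^2\}_j)=\mathcal{F}_{\tau_2}$, so $\tau_1^s\equiv\tau_2$, and the cost of $s$ equals the bracketed expression. Hence $\delta(\tau_1,\tau_2)$ is at most the stated minimum.

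The inequality $\ge$ is the substantive part. Fix an optimal script $s$ with $\tau_1^s\equiv\tau_2$ and a tree isomorphism $\varphi:\tau_1^s\to\tau_2$. The structural backbone is that a deletion only removes a \emph{leaf}, so the parent relations among the surviving vertices of $\tau_1$ never change and the set of surviving vertices is ancestor-closed in $\tau_1$ (if $v$ survives then $v$ is always a child of its parent, which is therefore never a leaf and never deleted); symmetrically, the non-inserted vertices of $\tau_2$ are ancestor-closed in $\tau_2$. Thus $\roottree(\tau_1)$ survives, $\varphi(\roottree(\tau_1))=\roottree(\tau_2)$, and the surviving children $c$ of $\roottree(\tau_1)$ are sent bijectively by $\varphi$ onto a subset of the children of $\roottree(\tau_2)$; call $k\in\{0,\dots,n\}$ their number. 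One then classifies the operations of $s$ by where they act: an operation strictly below a surviving child $c$ belongs to a sub-script $s_c$ transforming $\tau_1[c]$ into $\tau_2[\varphi(c)]$; an operation strictly below a non-surviving (hence fully deleted) child is one of its $\#\tau_1[c]$ deletions; and the remaining operations build, below the root, the child-subtrees $\theta$ of $\tau_2$ not of the form $\tau_2[\varphi(c)]$, contributing $\#\theta$ each. These classes are disjoint and $\#s_c\ge\delta(\tau_1[c],\tau_2[\varphi(c)])$, so
\[
\delta(\tau_1,\tau_2)=\#s\ \ge\ \sum_{c}\delta\big(\tau_1[c],\tau_2[\varphi(c)]\big)+\sum_{c'}\#\tau_1[c']+\sum_{\theta}\#\theta,
\]
the three sums running over the $k$ surviving children, the deleted children of $\roottree(\tau_1)$, and the inserted child-subtrees of $\tau_2$; this is exactly the value of the admissible choice associated with that $k$-pair matching.

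It remains to see that this $k$-pair value dominates the one attained with exactly $n$ matched pairs. This uses the bound $\delta(\theta_1,\theta_2)\le\#\theta_1+\#\theta_2$ (delete all of $\theta_1$, then insert all of $\theta_2$), which shows that turning an unmatched pair of child-subtrees into a matched pair never increases the value; iterating from $k$ up to $n=\min(\#\mathcal{F}_{\tau_1},\#\mathcal{F}_{\tau_2})$ (possible since while fewer than $n$ pairs are matched an unmatched child remains on each side) yields an $n$-pair admissible choice of value $\le\delta(\tau_1,\tau_2)$; hence the minimum over $n$-pair choices is $\le\delta(\tau_1,\tau_2)$, and with the first inequality this forces equality. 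The only delicate point is the bookkeeping of the previous paragraph --- checking that the operations of $s$ partition cleanly into modifications confined to surviving child-subtrees, complete deletions of the doomed ones, and complete insertions of the new ones --- which I would carry out precisely by invoking the ancestor-closedness of the surviving and non-inserted vertices; everything else is routine.
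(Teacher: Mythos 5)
Your proposal is correct and follows essentially the same route as the paper: both rest on decomposing an optimal transformation along the children of the two roots into a matching of child-subtrees plus wholesale deletions and insertions, and on the observation that $\delta(\theta_1,\theta_2)\le\#\theta_1+\#\theta_2$, so that extending a partial matching to one with exactly $n$ matched pairs never increases the cost. The paper's own proof is only a two-sentence sketch of this idea (deferring the decomposition to the constrained-mapping correspondence of the preliminaries and to a figure), whereas your script-level bookkeeping via ancestor-closedness of the surviving and non-inserted vertices supplies the details it omits.
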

\begin{proof}
First, let us remark that a maximum number of subtrees of $\mathcal{F}_{\tau_1}$ should be mapped to subtrees of $\mathcal{F}_{\tau_2}$, because $\delta(\theta_1,\theta_2)<\delta(\theta_1,\emptyset)+\delta(\emptyset,\theta_2)$, for any trees $\theta_1$ and $\theta_2$. This maximum number is $n=\min(\#\mathcal{F}_{\tau_1},\#\mathcal{F}_{\tau_2})$. As a consequence, the minimal editing cost is obtained by considering all the possible mappings between $n$ subtrees of $\mathcal{F}_{\tau_1}$ and $n$ subtrees of $\mathcal{F}_{\tau_2}$. The subtrees that are not involved in a mapping are either deleted or added. We refer the reader to Figure~\ref{fig:calcul:ed}.
\end{proof}

\begin{figure}[h]
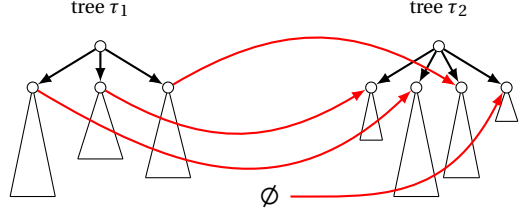

\centering\algoed
\caption{Schematic illustration of the recursive formula to compute the constrained edit distance $\delta$ between trees $\tau_1$ and $\tau_2$: the three subtrees of $\mathcal{F}_{\tau_1}$ are mapped to three subtrees of $\mathcal{F}_{\tau_2}$, while the empty tree $\emptyset$ is mapped to the fourth subtree of $\mathcal{F}_{\tau_2}$, i.e., all the vertices are added.
}
\label{fig:calcul:ed}
\end{figure}

In light of Proposition \ref{prop:calcul:ed} and Figure~\ref{fig:calcul:ed} and as in \cite[5. Algorithm and complexity]{Zhang:1996:CED}, each step in the recursive computation of the edit distance $\delta(\tau_1,\tau_2)$ between trees $\tau_1$ and $\tau_2$ reduces to the minimum cost maximum flow problem on a graph $G=(V,E)$ constructed as follows. First the set of vertices $V$ of $G$ is defined by
$$V=\left\{\text{source}\, ,\, \text{sink}\, ,\, \emptyset_{\tau_1}\, ,\, \emptyset_{\tau_2}\right\} \cup \mathcal{F}_{\tau_1} \cup \mathcal{F}_{\tau_2} .$$
The set $E$ of edges of $G$ is defined from:
\begin{itemize}
\item $\text{source}\to t_i^1$, $t_i^1\in\mathcal{F}_{\tau_1}$
\begin{itemize}
	\item capacity: $1$
	\item cost: $0$
\end{itemize}
\item $\text{source}\to \emptyset_{\tau_1}$
\begin{itemize}
	\item capacity: $\#\mathcal{F}_{\tau_1}-\min(\#\mathcal{F}_{\tau_1},\#\mathcal{F}_{\tau_2})$
	\item cost: $0$
\end{itemize}
\item $t_i^1\to t_j^2$, $t_i^1\in\mathcal{F}_{\tau_1}$, $t_j^2\in\mathcal{F}_{\tau_2}$
\begin{itemize}
	\item capacity: $1$
	\item cost: $\delta(t_i^1,t_j^2)$
\end{itemize}
\item $t_i^1\to \emptyset_{\tau_2}$, $t_i^1\in\mathcal{F}_{\tau_1}$
\begin{itemize}
	\item capacity: $1$
	\item cost: $\delta(t_i^1,\emptyset)=\#t_i^1$
\end{itemize}
\item $\emptyset_{\tau_1}\to t_j^2$, $t_j^2\in\mathcal{F}_{\tau_2}$
\begin{itemize}
	\item capacity: $1$
	\item cost: $\delta(\emptyset,t_j^2)=\#t_j^2$
\end{itemize}
\item $t_j^2\to\text{sink}$, $t_j^2\in\mathcal{F}_{\tau_2}$
\begin{itemize}
	\item capacity: $1$
	\item cost: $0$
\end{itemize}
\item $\emptyset_{\tau_2}\to\text{sink}$
\begin{itemize}
	\item capacity: $\#\mathcal{F}_{\tau_2}-\min(\#\mathcal{F}_{\tau_1},\#\mathcal{F}_{\tau_2})$
	\item cost: $0$
\end{itemize}
\end{itemize}
We obtain a network $G$ augmented with integer capa\-cities and nonnegative costs. A representation of $G$ is given in Figure~\ref{fig:mincostproblem}. By construction and as explained in \cite[Lemma 8]{Zhang:1996:CED}, one has $C(G)=\delta(\tau_1,\tau_2)$ where $C(G)$ denotes the cost of the minimum cost maximum flow on $G$. As a consequence, $\delta(\tau_1,\tau_2)$ may directly be computed from a minimum cost maximum flow algorithm presented for example in \cite[8.4 Minimum cost flows]{Tarjan:1983}. The related complexity is given in Proposition~\ref{prop:delta:complexity}.

\begin{figure}[h]
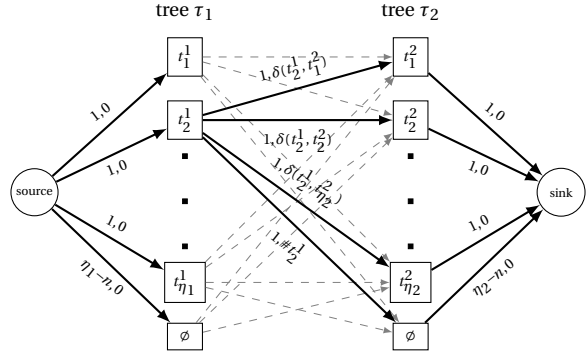

\centering
\costflowgraph
\caption{Reduction of the computation of edit distance $\delta(\tau_1,\tau_2)$ presented in Proposition~\ref{prop:calcul:ed} and in Fi\-gu\-re~\ref{fig:calcul:ed} to the minimum cost flow problem. Each edge is augmented with two labels separated by a comma: its capacity (left) and its cost (right). For the sake of simplicity, $\eta_i = \#\mathcal{F}_{\tau_i}$ and $n = \min(\eta_1,\eta_2)$.}
\label{fig:mincostproblem}
\end{figure}

\subsection{Complexity from trees}

In light of \cite[Theorem 8.13]{Tarjan:1983}, the complexity of finding the cost of the mi\-ni\-mum cost maximum flow on the network $G$ defined in Figure~\ref{fig:mincostproblem} may be directly obtained from its charac\-te\-ristics and is $O(N\times|f^\star|\times\log_2(n))$, where $n$, $N$ and $|f^\star|$ respectively denote the number of vertices, the number of edges and the maximum flow of $G$. It is quite obvious that
$$
N = O\big(\#\child(\roottree(\tau_1))\#\child(\roottree(\tau_2)) +\#\child(\roottree(\tau_1)) +\#\child(\roottree(\tau_2))\big) .
$$
In addition,
\begin{eqnarray*}
|f^\star|&=&O\big(\#\child(\roottree(\tau_1))+\#\child(\roottree(\tau_2))\big) \\
&=& O(\deg\,\tau_1 + \deg\,\tau_2 ) .
\end{eqnarray*}
And,
\begin{eqnarray*}
n&=&O\big(\#\child(\roottree(\tau_1))+\#\child(\roottree(\tau_2))\big) \\
&=& O(\deg\,\tau_1 + \deg\,\tau_2 ) .
\end{eqnarray*}
Thus, the total complexity to compute the recursive formula of $\delta(\tau_1,\tau_2)$ presented in Proposition~\ref{prop:calcul:ed} is
$$
O\Big( [\deg(\tau_1)+\deg(\tau_2)]\log_2(\deg(\tau_1)+\deg(\tau_2)) \times \sum_{v\in \tau_1}\sum_{w\in \tau_2} \#\text{child}(v)\times\#\text{child}(w)\Big) = O( \#\tau_1 \,\# \tau_2\, \psi(\tau_1,\tau_2) ) ,
$$
with $\psi(\tau_1,\tau_2) = (\degr\,\tau_1+\degr\,\tau_2) \log_2(\degr\,\tau_1+\degr\,\tau_2)$, which yields the expected result.

Not surprisingly, the time-complexity of computing the edit distance $\delta$ is the same as in \cite{Zhang:1996:CED} for another kind of constrained edit distance. It should be noted that this algorithm does not take into account the possible presence of redundant substructures, which should reduce the complexity. We tackle this question in the following part. 

\subsection{Complexity from DAG reductions}
\label{ss:complex:snt}

The tree-to-tree comparison problem has already been considered for quotiented trees (an adaptation of Zhang's algorithm \cite{Zhang:1996:CED} to quotiented trees is presented in \cite{ferraro:03}), but never for DAG reductions of trees. Taking into account the redundancies, the edit distance obtained from DAG reductions is expected to be computed with a time-complexity smaller than from trees.

In the sequel, $D_i$ denotes the DAG reduction of the tree $\tau_i$ of height $H_i$. With the notation of Subsection~\ref{ss:treereduction}, the vertices of $D_i$ are denoted by $(h , j)^i$, $0\leq h \leq H_i$ and $1\leq j\leq M_h^i$. In addition, the DAG $D_i$ is characterized by the array
$$ \big[ N^i( (h_1,j)^i ,(h_2,k)^i ) \big] .$$
We also define
$$ \eta_i = \sum_{{0\leq h<H_i} \atop {1\leq j\leq M_{h}^i}} N^i( (H_i,1)^i , (h,j)^i ) ,$$
which counts the number of children of $\roottree(\tau_i)$. As in Subsection~\ref{ss:mincostflow}, the computation of the edit distance $\delta(\tau_1,\tau_2)$ from the DAG reductions $D_1$ and $D_2$ reduces to a minimum cost flow problem but the network graph that we consider takes into account the number of appearances of a given pattern among the lists of subtrees of $\mathcal{F}_{\tau_1}$ and $\mathcal{F}_{\tau_2}$. We construct the network graph $G=(V,E)$ as follows. The set of vertices $V$ of $G$ is given by
$$
V = \left\{\text{source}\, ,\, \text{sink}\, ,\, \emptyset_1\, ,\, \emptyset_2\right\} \cup \bigcup_{{0\leq h <H_1} \atop {1\leq j \leq M_{h}^1}}\left\{ (h,j)^1 \right\}  \cup  \bigcup_{{0\leq l<H_2} \atop {1\leq k\leq M_{l}^2}}\left\{ (l,k)^2\right\} .
$$
The set $E$ of edges of $G$ is defined from:
\begin{itemize}
\item $\text{source}\to(h,j)^1$
\begin{itemize}
	\item	capacity: $N^1( (H_1,1)^1 , (h,j)^1 )$
	\item cost: $0$
\end{itemize}
\item $\text{source}\to \emptyset_1$
\begin{itemize}
	\item capacity: $\eta_1   -\min(\eta_1  ,\eta_2)$
	\item cost: $0$
\end{itemize}
\item	 $(h,j)^1 \to (l,k)^2$
\begin{itemize}
	\item capacity: $N^1 ( (H_1,1)^1 , (h,j)^1 )$
	\item cost: $\delta( \tree(D_1[ (h,j)^1] ) \,,\,\tree(D_2[ (l,k)^2] ))$
\end{itemize}
\item $(h,j)^1\to \emptyset_2$
\begin{itemize}
	\item capacity: $N^1 ( (H_1,1)^1 , (h,j)^1 )$
	\item cost: $\#\tree(D_1[ (h,j)^1])$
\end{itemize}
\item 	$\emptyset_1\to (l,k)^2$
\begin{itemize}
	\item capacity: $N^2 ( (H_2,1)^2 , (l,k)^2 )$
	\item cost: $\#\tree(D_2[ (l,k)^2])$
\end{itemize}
\item		$(l,k)^2\to\text{sink}$
\begin{itemize}
	\item capacity: $N^2 ( (H_2,1)^2 , (l,k)^2 )$
	\item cost: $0$
\end{itemize}
\item		$\emptyset_2\to\text{sink}$
\begin{itemize}
	\item capacity: $\eta_2-\min(\eta_1,\eta_2)$
	\item cost: $0$
\end{itemize}
\end{itemize}

As in Subsection~\ref{ss:mincostflow}, the graph $G$ has integer capacities and nonnegative costs on its edges. By cons\-truction, the cost $C(G)$ of the minimum cost maximum flow on the graph $G$ is equal to the expected edit distance
$$\delta( \tau_1 , \tau_2 ) = \delta\big( \tree( D_1[(H_1,1)])\,,\,\tree(D_2[(H_2,1)])\big).$$
The characteristics $n$ (number of vertices), $|f^\star|$ (maximum flow) and $N$ (number of edges) of the network $G$ satisfy:
\begin{itemize}
\item $N=O(\deg\,D_1\times\deg\,D_2+\deg\,D_1+\deg\,D_2)$;
\item $|f^\star|=O(\deg\,\tau_1+\deg\,\tau_2)$;
\item $n=O(\deg\,D_1+\deg\,D_2) = O(\deg\,\tau_1+\deg\,\tau_2)$.
\end{itemize}
Summing over the vertices of $D_1$ and $D_2$ (number of terms bounded by $\#D_1\times\#D_2$), together with \cite[Theorem 8.13]{Tarjan:1983}, states the result.

\section{Proof of Proposition~\ref{prop1}}
\label{ss:proof:prop1}

Using the notation of Subsection~\ref{ss:sntrees}, a self-nested tree of height $H$ is represented by a linear DAG with $H+1$ vertices numbered from $0$ (bottom, leaves of the tree) to $H$ (top, root of the tree) in such a way that there exists a path $H\to\dots\to1\to0$. One recalls that this graph is augmented with integer-valued label $N(i,j)$ on edge $i\to j$ for any $i>j$ with the constraint $N(i,i-1)>0$. In this context, the outdegree of a self-nested tree is less than $d$ if and only if, for any $i$,
$$\sum_{j=0}^{i-1} N(i,j) \leq d.$$
We propose to write $n_{i,i-1} = N(i,i-1)-1$ and, for $j\leq i-2$, $n_{i,j}=N(i,j)$. As a consequence, all the labels are parametrized by the $n_{i,j}$'s which satisfy, for any $i>j$, $n_{i,j}\geq0$ and, for any $i\geq1$,
$$ \sum_{j=0}^{i-1} n_{i,j}\leq d-1.$$
Thus, the number of self-nested trees of height $H$ is obtained as
\begin{equation*}
\label{cardTh}
\#\mathbb{T}^{sn}_{=H,\leq d} = \prod_{i=1}^{H}\#\Bigg\{n_{i,j}\geq0~:~\sum_{j=0}^{i-1} n_{i,j}\leq d-1\Bigg\}.
\end{equation*}
Furthermore, the set under the product sign is only the regular discrete simplex of dimension $i$ having $d$ points on an edge. The cardinality of this set has been studied by Costello in \cite{1054599}. Thus, by virtue of \cite[Theorem 2]{1054599}, one has
\begin{equation*}
\label{cardSimplex}
\#\Bigg\{n_{i,j}\geq0~:~\sum_{j=0}^{i-1} n_{i,j}\leq d-1\Bigg\} = {\binom{d+i-1}{i}} ,
\end{equation*}
which yields the expected result via a change of index.


\section{Proof of Proposition~\ref{prop2}}
\label{ss:proof:prop2}

\subsection{Asymptotics for self-nested trees}

Substituting the binomial coefficients by their value in the expression of $\#\mathbb{T}_{=H,\leq d}^{sn}$ stated in Proposition \ref{prop1}, we get
$$ \#\mathbb{T}_{=H,\leq d}^{sn} = \Gamma(d)^{-H} \prod_{i=1}^H \frac{\Gamma(d+H-i+1)}{\Gamma(H-i+2)} $$
where $\Gamma$ denotes the Euler function such that $\Gamma(n+1)=n!$ for any integer $n$. As a consequence,
\begin{eqnarray}
\log\,\#\mathbb{T}_{=H,\leq d}^{sn} &=& -H\log\,\Gamma(d) + \sum_{\substack{1\leq i\leq H\\2\leq k\leq d}} \log(H-i+k) \nonumber \\
&=& -H\log\,\Gamma(d) + \sum_{\substack{0\leq j\leq {H-1}\\2\leq k\leq d}} \log(j+k), \label{eq:cardTh:1}
\end{eqnarray}
by substituting $H-i$ by $j$. First, according to Stirling's approximation, we have
\begin{equation}\label{eq:cardTh:2}
-H\log\,\Gamma(d) \sim -Hd\log\,d .
\end{equation}
Now, we focus on the second term. In order to simplify, we are looking for an equivalent of the same double sum but indexed on $1\leq j\leq H$ and $1\leq k\leq d$. We have
\begin{eqnarray}
\sum_{j=1}^H \sum_{k=1}^d \log(j+k)& = & \sum_{j=1}^H \sum_{k=1}^d \int_1^{j+k}\frac{\text{d}x}{x}  \nonumber \\
&=& \sum_{j=1}^H \left[\sum_{l=0}^{d-1}(d-l)\int_{j+l}^{j+l+1}\frac{\text{d}x}{x} + \int_1^j \frac{\text{d}x}{x}\right] \nonumber \\
&=&\sum_{l=0}^{d-1} (d-l) \left[\sum_{j=1}^H \int_{j+l}^{j+l+1} \frac{\text{d}x}{x} + \log\,j\right] \nonumber \\
&=&\sum_{l=0}^{d-1} (d-l) \int_{l+1}^{l+H+1}\frac{\text{d}x}{x} + d\sum_{j=1}^H \log\,j \nonumber \\
&=&\sum_{l=0}^{d-1} (d-l)\log\left(1+\frac{H}{l+1}\right) + d\sum_{j=1}^H \log\,j . \label{eq:cardTh:3}
\end{eqnarray}
As usually, we find an equivalent of this term by using an integral comparison test. We establish by a conscientious calculus that
\begin{equation}
\sum_{l=0}^{d-1} (d-l)\log\left(1+\frac{H}{l+1}\right) + d\sum_{j=1}^H \log\,j \sim\,\frac{(d+H)^2}{2}\log(d+H)-\frac{H^2}{2}\log\,H -\frac{d^2}{2}\log\,d + R(d,H), \label{eq:cardTh:4}
\end{equation}
where the rest $R(d,H)$ is neglectable with respect to the other terms and to $Hd\log\,d$. Let us remark that the expression of the equivalent is symmetric in $H$ and $d$ as expected. Finally, \eqref{eq:cardTh:1}, \eqref{eq:cardTh:2}, \eqref{eq:cardTh:3} and \eqref{eq:cardTh:4} show the result.

\subsection{Asymptotics for unordered trees}

Roughly speaking, an unordered tree with maximal height $H$ and maximal outdegree $d$ may be obtained by adding at most $d$ trees of height less than $H-1$ to an isolated root. More precisely, one has to choose $d$ elements with repetitions among the set $\mathbb{T}_{\leq H-1,\leq d}\cup\{\bullet\}\cup\{\emptyset\}$ and add them to the list of children (initially empty) of a same vertex. It should be noted that no subtree is added when $\emptyset$ is picked.

One obtains either an isolated root (if and only if one draws $d$ times the symbol $\emptyset$), or a tree with maximal height $H$. As a consequence, one has the formula,
$$
\#\left[\mathbb{T}_{\leq H,\leq d}\cup\{\bullet\}\right] = {\binom{\#\left[\mathbb{T}_{\leq H-1,\leq d}\cup\{\bullet\}\cup\{\emptyset\}\right]+d-1}{d}},
$$
which shows that
$$\#\mathbb{T}_{\leq H,\leq d} = u_H(d)-1,$$
with $u_0(d) = 1$ and 
$$ u_H(d)={\binom{u_{H-1}(d)+d}{d}} .$$
The sequel of the proof is based on the classical bounds on binomial coefficients,
$$\left(\frac{n\times\e}{k}\right)^k \geq \binom{n}{k} \geq \left(\frac{n}{k}\right)^k .$$
We have $u_1(d) = \binom{1+d}{d} = d+1$ and 
\begin{eqnarray*}
u_2(d) &=& \binom{u_1(d)+d}{d} \\
&=& \binom{2d+1}{d}\\
&\geq& \left( \frac{2d+1}{d}\right)^d \\
&=& \left(2+ \frac{1}{d}\right)^d.
\end{eqnarray*}
The lower bound is obtained by induction on $H>2$: assuming that
$$u_H(d) \geq \frac{\left(2+\frac{1}{d}\right)^{d^{H-1}}}
{d^{{\frac{d^{H-1} - 1}{d-1} - 1}}},$$ we have
\begin{eqnarray*}
u_{H+1}(d) &=& \binom{u_H(d)+d}{d}\\
&\geq& \left(\frac{u_H(d)}{d}+1\right)^d \\
&\geq& \left(\frac{u_H(d)}{d}\right)^d \\
&\geq& \left(\frac{\left(2+\frac{1}{d}\right)^{d^{H-1}}}{d^{{\frac{d^{H-1} - 1}{d-1}}}}\right)^d
\end{eqnarray*}
by the induction hypothesis. Using
$$
d\left(\frac{d^{H-1} - 1}{d-1}\right) = \frac{d^H - 1}{d-1} - 1,
$$
we obtain
$$
u_{H+1}(d) \geq \frac{\left(2+\frac{1}{d}\right)^{d^H}}
{d^{{\frac{d^{H} - 1}{d-1}-1}}}.
$$
Moreover,
$$u_2(d) = \binom{2d+1}{d}
\leq \left( \frac{2d+1}{d}\e\right)^d \leq (3\e)^d.
$$
The upper bound is also obtained by induction on $H \geq 2$: assuming that
$$u_H(d) \leq 3^{d^{H-1}}\e^{\frac{d^H - 1}{d-1} - 1},$$ we obtain
\begin{eqnarray*}
u_{H+1}(d) &=& \binom{u_H(d)+d}{d} \\
&\leq& \left(\left(\frac{u_H(d)}{d}+1\right)\e\right)^d \\
&\leq& \left(\left(\frac{3^{d^{H-1}}\e^{\frac{d^H - 1}{d-1} - 1}}{d} + 1\right)\e\right)^d
\end{eqnarray*}
by the induction hypothesis. Using the inequality
$$
\frac{k^x}{x} + 1 \leq k^x, 
$$
satisfied whenever $k$ and $x$ are both greater than the critical value $1.693\dots$ (obtained by numerical methods), we obtain
\begin{eqnarray*}
u_{H+1}(d) &\leq&  \left(3^{d^{H-1}}\e^{\frac{d^H - 1}{d-1} - 1}\e\right)^d \\
&= & 3^{d^H}\e^{\frac{d^{H+1} - 1}{d-1} - 1}.
\end{eqnarray*}
This shows the expected result.


\section{Proof of Proposition~\ref{prop:complav}}

Computing all the multiplicities $\mu((h_1,i))$ can be done in one traversal of all the edges of $D$ via the recursive formula \eqref{eq:multipli}. By definition of $\widehat{D}$, for each couple $(h_1,h_2)$, computing $\widehat{N}(h_1,h_2)$ requires to traverse all the edges $(h_1,i)\to(h_2,j)$ with no overlap. Finally, all the edges of $D$ have been traversed once, which states the complexity.


\section{Proof of Proposition~\ref{prop:wc}}
\label{s:proof:wc}

We begin this proof with trees of height $2$, and we shall state in two steps the expected result.

First of all, let us remark that the DAG of any tree of $\mathbb{T}_{2,\leq m}$ is of the form\!\!\minidagf. Nevertheless, leaves attached to the root do not impact the self-nestedness of the tree and deletes some degrees of freedom in our research of the worst case. As a consequence, we only consider DAGs of the form\!\!\minidag with $n$ intermediate vertices (that is to say $n$ different subtrees of height $1$) labeled from $I_1$ to $I_n$, $n\leq d$. Of course, $n=1$ ensures that the corresponding tree is self-nested: we exclude this case. Let $p_k$ ($l_k$, respectively) denote the number of appearances (the number of leaves, respectively) of $I_k$, for $1\leq k\leq n$.

We shall investigate the worst case for a given value of $n$. First, it should be noted that if an operation is optimal for an equivalence class $I_k$, it is also optimal for all the subtrees of this class. In addition, there are only two possible scripts to transform $I_k$: either one deletes all the leaves of $I_k$ (with a cost $p_k l_k$), or one adds or deletes some leaves to transform $I_k$ into a given subtree of height $1$ with, say, $x$ leaves (with a cost $p_k|l_k-x|$). As a consequence, the total editing cost (to transform the initial tree into a self-nested tree in which trees of height $1$ have $x$ leaves) is given by
$$C_2=\sum_{k\in A} p_k l_k + \sum_{k\notin A} p_k|l_k-x| ,$$
where $A$ denotes the set of indices $k$ for which one deletes all the leaves of $I_k$.

The worst case has the maximum entropy and thus a uniform repartition of its leaves in the tree. For the sake of clarity, one assumes in the sequel that $d$ is even and $n$ divides $d$. The explicit solution of the problem is thus $p_k=\frac{d}{n}$, $l_k=\frac{kd}{n}$, $x=\frac{d}{2}$ and $A=\emptyset$. The remarkable fact is that the corresponding cost is given by
\begin{eqnarray*}
C_2	&=&	\frac{d}{n}\sum_{k=1}^n\left|\frac{d}{2}-\frac{kd}{n}\right|\\
	&=&	\frac{2d}{n} \sum_{k=1}^{\frac{n}{2}-1} \left(\frac{d}{2}-\frac{kd}{n}\right)\\
	&=&	\frac{d^2}{4}.
\end{eqnarray*}
This means that the worst case may be obtained from any value of $n$ whenever it divides $d$. Actually, the case $n$ does not divide $d$ leads to a worst case better than when $n$ divides $d$. One concludes that one of the worst cases is obtained from $n=2$, $p_1=p_2=\frac{d}{2}$, $l_1=\frac{d}{2}$, $l_2=d$ and $C_2=\frac{d^2}{4}$. When $d$ is an odd integer, one observes the same phenomenon: the worst case is obtained from $n=2$, $p_1=\left\lceil\frac{d}{2}\right\rceil$, $p_2=\left\lfloor\frac{d}{2}\right\rfloor$, $l_1=\left\lfloor\frac{d}{2}\right\rfloor$, $l_2=d$ and $C_2=\left\lfloor\frac{d}{2}\right\rfloor\times \left\lceil\frac{d}{2}\right\rceil$. This yields the expected result for any integer $d$.

We shall use the preceding idea to show the result for any height $H$. Among trees of height at most $H$, it is quite obvious that the worst case appears in trees of height $H$. We assume that there are $n$ different patterns $I_1,\dots,I_n$ appearing $p_1,\dots,p_n$ times under the root. The cost of editing operations (adding or deleting leaves) at distance $h$ to the root is in the worst case $p_k \, d^{h-1}$. As a consequence, at least for $d$ large enough, $\text{height}(I_k)=H-1$ and the only difference with the other patterns is on the fringe: all the vertices of $I_k$ have $d$ children except vertices at height $H-2$ that have $l_k$ leaves. If $A$ denotes the set of indices $k$ for which one deletes all the leaves of $I_k$, the editing cost to transform the tree into the self-nested tree in which subtrees of height $1$ have $x$ leaves is given by
$$C_H=d^{H-2}\left[\sum_{k\in A} p_k l_k + \sum_{k\notin A} p_k |l_k-x|\right].$$
In light of the previous reasoning, this states the expected result.

\end{document}